\documentclass[11pt]{article}

\usepackage[margin=1in]{geometry}
\usepackage{amsmath,amsthm,amsfonts}
\usepackage[square, numbers]{natbib}

\title{Optimizing for Fairness in Generalized Kidney Exchange: \\ Theory and Computations}
\author{Claire S. Chang\thanks{Cornell University, \texttt{\{csc258, amk445, david.shmoys\}@cornell.edu}} \and Arin Khare\footnotemark[1] \and David B. Shmoys\footnotemark[1] \thanks{This research was partially funded by a grant from the National Science Foundation DMS-2230023.}}
\date{}

\newtheorem{lemma}{Lemma}
\newtheorem{proposition}{Proposition}
\newtheorem{theorem}{Theorem}

\usepackage{multicol}
\usepackage{hyperref}
\usepackage{mathrsfs}
\usepackage{bbm}

\usepackage{subcaption}
\usepackage{algorithm}
\usepackage[noend]{algpseudocode}

\usepackage{xargs}

\usepackage{tikz}
\usetikzlibrary{positioning,fit,backgrounds,arrows.meta}

\usepackage{thmtools,thm-restate}

\newcommand{\Dtilde}{\widetilde{D}(G)}
\renewcommand\vec{\mathbf}

\begin{document}
\maketitle

\begin{abstract}
The seminal work of Roth, S\"onmez, \& 
\"Unver shows that the Edmonds-Gallai structure theorem for non-bipartite matching can be leveraged to yield a randomized algorithm to match patient-donor pairs in kidney exchange with extraordinarily strong properties. This breakthrough led to randomized polynomial-time algorithms to find a maximum-cardinality matching maximizing individual fairness objectives---measured by the probability that nodes are matched---such as Nash social welfare. But the exchanges allowed in practice 
go beyond cardinality matching, generalizing to weighted variants and allowing structures such as paths and 3-cycles.
We show that strongly polynomial algorithms guaranteeing the same fairness properties can be obtained in weighted settings for matching and 2-paths. While even maximum cardinality coverage with cycles and paths of length at least three is NP-hard, we provide a general result showing that {\it any} optimization subroutine (for whichever structure is allowed) can be bootstrapped using a polynomial number of calls to yield a mechanism that has analogous fairness properties to those obtained for matching. We complement these theoretical results with computational results, both on well-studied synthetic data-sets and on samples drawn from real data, that demonstrate the striking advantages of adding fairness considerations to more general kidney-exchange mechanisms.
\end{abstract}

\section{Introduction}
To hasten the process for kidney transplants (with over 90,000 on the waiting list as of 2025 and most people waiting for deceased donors for 3-5 years \citep{UNOS}), waiting list patients can look for living donors. However, many patients have willing but incompatible donors. 
To tackle this issue, kidney paired donation programs, such as the National Kidney Registry (NKR), the Alliance for Paired Kidney Donation (APKD), and UNOS's kidney paired donation program in the United States, have been established in which incompatible patient-donor pairs can exchange donors with other incompatible pairs to find compatible matches. These exchanges are typically limited to pairs or triples of patient-donor pairs due to logistical and medical constraints: surgeries must be performed simultaneously to prevent reneging, so each pair that is included requires an additional operating room and surgical team. Another option is to include altruistic or nondirected donors (NDDs), who are willing to donate a kidney to any compatible patient. NDDs can start chains of donations that do not need to be performed simultaneously, allowing for longer chains and more transplants.
The organizations use algorithms based on integer programming to match patient-donor pairs and NDDs based on compatibility and other factors \citep{Roth_Sonmez_Utku_Unver_2005,Roth_Sonmez_M_2005,Anderson_Ashlagi_Gamarnik_Rees_Roth_Sonmez_Unver_2015,Ashlagi_Roth_2021}. However, an integer program yields only one solution, and it is left up to the whims of solvers to determine which pairs are included. Instead, we wish to randomize over the set of optimal solutions so that pairs can be included in an exchange with probabilities aimed to optimize a specified fairness function.

The kidney exchange problem can be modeled as a directed graph where vertices represent patient-donor pairs or NDDs, and there is a directed edge from vertex A to vertex B if the donor of A is compatible with the patient of B. There may be weights on the arcs which represent, for example, the quality of the matches. An exchange corresponds to a collection of vertex-disjoint cycles and paths in this graph. Typically, the length of the cycle or path is bounded to be at most an $\ell$-cycle or $\ell$-path, where $\ell$ is the number of arcs or exchanges made. In the context of exchanges with only 2-cycles, we can also view the graph as undirected with edges representing mutual compatibility, and exchanges as matchings. Two possible goals for this problem include 
\begin{enumerate}
    \item[(G1)] finding a single packing that optimizes some objective or satisfies some constraints, or 
    \item[(G2)] finding a lottery over the set of optimal packings that further optimizes some fairness criterion.
\end{enumerate}
The fairness criteria we consider will be a function of the probabilities with which each vertex is included in an exchange in the lottery. For instance, we may wish to maximize the smallest probability any vertex is included (maximin), or more generally lexicographically maximize the probabilities sorted in nondecreasing order (leximin). We can also try to maximize the sum of the probabilities (utilitarian, although this is equivalent to constraining to maximum-cardinality packings) or their product (Nash social welfare). 

One of the seminal works on paired kidney exchange by \citeauthor{Roth_Sonmez_Utku_Unver_2005} focused on exchanges including only pairs of patient-donor pairs with dichotomous valuations (i.e., an unweighted undirected nonbipartite graph). Their goal falls into (G2): constructing a lottery over the set of all maximum-cardinality matchings with the goal of leximin fairness. This work builds on earlier work of \citeauthor{Bogomolnaia_Moulin_2004}, who explained how to find the leximin solution for bipartite graphs. By leveraging the Edmonds-Gallai decomposition \citep{Edmonds_1965}, \citeauthor{Roth_Sonmez_Utku_Unver_2005} were able to extend the algorithm to the nonbipartite setting. Furthermore, they proved that the leximin solution is Lorenz dominant over the space of maximum-cardinality exchanges in this space, so it also maximizes the Nash social welfare and any other concave function of the probabilities.
Polynomial-time algorithms for finding the leximin solution in nonbipartite graphs were later presented in \cite{Garcia-Soriano_Bonchi_2020,Li_Liu_Huang_Tang_2014} using results on edge coloring and parametric flow.

Unfortunately, allowing much more than 2-cycles makes the problem much more difficult. A later stream of work focuses on (G1), and considers the problem of finding a single packing of length at most $\ell$-cycles or $\ell$-paths, as well as incorporating edge weights. \citeauthor{Brewster_Hell_Pantel_Rizzi_Yeo_2003}  present a polynomial-time algorithm for packing 1-paths and 2-paths, but show that the problem is NP-complete for almost any other family of graphs. However, for any constant $\ell \geq 3$, finding a maximum-cardinality packing of $\ell$-cycles, even in undirected graphs, is NP-complete and APX-complete \citep{Abraham_Blum_Sandholm_2007,Biro_Manlove_Rizzi_2009}. 
Thus as mentioned earlier, the focus has been on computational approaches involving integer programming formulations \citep{Roth_Sonmez_Utku_Unver_2005,Roth_Sonmez_M_2005,Anderson_Ashlagi_Gamarnik_Rees_Roth_Sonmez_Unver_2015,Anderson_Ashlagi_Gamarnik_Roth_2015,Ashlagi_Roth_2021}. 
Other goals for (G1) arise from economics, where researchers have focused on incentive compatibility for hospitals to ensure truthful reporting of all patients \citep{Ashlagi_Roth_2011,Ashlagi_Fischer_Kash_Procaccia_2015}.

Since (G2) is even more challenging than (G1), it is no surprise that literature in this area for the more general kidney exchange graph has also been focused on computational approaches for these more general settings.
Integer programming techniques with constraint programming, and more successfully column generation, can be used to construct lotteries over the set of all optimal solutions \citep{Farnadi_St-Arnaud_Babaki_Carvalho_2021,Demeulemeester_Goossens_Hermans_Leus_2025}. This approach has found success not only in kidney exchange, but also in other areas such as selecting citizens' assemblies \citep{Flanigan_Golz_Gupta_Hennig_Procaccia_2021}. Besides the computational hardness, there is also tension between the utilitarian and fairness objectives when we move past matching. In matching, it is well known that any non-isolated node can be included in some maximum matching. In the kidney exchange literature, \citeauthor{Dickerson_Procaccia_Sandholm_2014} target highly sensitized patients that are hard to match, and explore the ``price of fairness'' to see what loss in efficiency results from prioritizing these patients, and \citeauthor{McElfresh_Dickerson_2018} propose a hybrid mechanism that balances the two objectives.

\paragraph{Our contributions.} We show that given an algorithm for (G1), we can solve (G2) by finding an almost-optimal solution for any concave fairness metric in a polynomial number of calls to that algorithm using the ellipsoid method. Thus any properties of the underlying packing problem (e.g., incentive compatibility, computational complexity) carry over to the fairness setting. We focus on two properties in particular: the computational complexity and the ``coverage loss,'' the maximum decrease in the number of vertices covered compared to a maximum-cardinality packing in order to maximize individual fairness objectives or constraints. 
We also show that analogous results to \cite{Roth_Sonmez_Utku_Unver_2005} extend to more general settings. Specifically, adding node or edge weights and finding a lottery of maximum-weight matchings subject to a cardinality constraint still allows for a strongly polynomial 
algorithm for finding the Lorenz dominant solution. We also propose a strongly polynomial 
algorithm for the case of 2-cycle and 2-path packings, whereas with 3-paths the problem becomes NP-complete.
More generally, we show that with any family of packings that are downward-closed, there still exists a unique Lorenz dominant solution over fixed cardinality packings.
Finally, we study the price of fairness in the context of individuals. 
To complement these theoretical results, we examine the impact of the lotteries that result from our algorithmic insights. Using \cite{Anderson_Ashlagi_Gamarnik_Roth_2015} as a baseline measure, we demonstrate that our approach yields significant fairness improvements that can be achieved with small losses in the number of covered vertices. In a dynamic setting, we show that optimizing for fairness can decrease waiting times in the system, and this is essentially matched by a heuristic that merely shuffles patient-donor pairs before computing exchanges.

\section{Model}
A \emph{kidney exchange problem (KEP) instance} consists of a directed compatibility graph $G=(V,E)$ with $n=|V|$, and each node is associated with biological data such as blood types and antibody information. A solution consists of a packing $C \subseteq E$ of cycles and paths in the compatibility graph.
We will consider multiple variants of feasibility conditions. In general, we will consider a set of acceptable packings $\mathscr{C}$, and concave monotone-increasing individual fairness function $F(\vec{q})$, where $\vec{q} \in [0,1]^n$ represents the probability each patient-donor pair is included in the exchange. In the graph, vertices represent patient-donor pairs and an edge from $v_1$ to $v_2$ indicates that the recipient of $v_1$ is compatible with the donor of $v_2$. We assume that there are no self-loops (patient-donor pairs are incompatible with each other). We also assume that $G$ has no perfect packing, and each vertex is included in at least one packing of $\mathscr{C}$. Otherwise, we return a perfect packing or consider the subgraph induced by the vertices that are included in at least one packing, respectively. Some examples of acceptable packings we consider include maximum cardinality 2-cycles (i.e. matching), maximum weight 2-cycles, maximum cardinality 2-paths, and maximum cardinality 2- and 3-cycles and unbounded length paths.
We are interested in finding a constant size subset of $\mathscr{C}$ and a probability distribution $\vec{p}$ over those packings that maximizes $F: [0,1]^n \rightarrow [0,1]$. A higher value of $F$ indicates a more fair lottery over packings. 
Hence we are interested in the following optimization problem:
\begin{equation}
\label{lp:fairness}
    \max_{\vec{q} \in \textsf{marginal}(\mathscr{C})} F(\vec{q}),
\tag{FAIR}
\end{equation}

where 
$$
\textsf{marginal}(\mathscr C) = \left\{
\mathbf q : \begin{array}{ll} p_C \ge 0 & \forall C \in \mathscr C, \\
\sum_{C \in \mathscr C} p_C = 1, \\
\sum_{C : v \in C} p_C = q_v & \forall v \in V
\end{array} \right\}.$$

Our fairness metrics of interest are mainly Schur-concave functions. If it does not hold that $\sum_{i=1}^n x_i = \sum_{i=1}^n y_i$ for some $\vec{x},\vec{y} \in \mathscr{C}$, then we compare $\vec{x}$ and $\vec{y}$ using weak majorization, focusing on functions that are concave, increasing, and symmetric in their inputs. These functions favor equality in the sense of the Pigou-Dalton transfer principle, where a transfer of utility from a richer to a poorer individual increases fairness. 
We also emphasize that unless the set of acceptable packings all have the same fixed cardinality, it is impossible to achieve both the maximum cardinality and a strictly concave fairness objective. Nonetheless, we prove that there is a bound on the maximum \emph{coverage loss}, the decrease in cardinality needed to ensure any vertex (included in some structure, for example, in a 2- or 3-cycle) is covered with strictly positive probability in the lottery. In the next section, we prove a bound on this coverage loss, along with providing background on Schur-convexity and a few functions of interest (utilitarian,  maximin and leximin, Nash social welfare, and the Gini coefficient).

\subsection{Fairness Metrics}

\label{sec:fairness}
We give several definitions to clarify the fairness metrics of interest in this paper.
Consider two vectors $\vec{x}, \vec{y} \in \mathbb{R}^n$. We say that $\vec{x}$ \emph{weakly majorizes} (or dominates) $\vec{y}$, denoted $\vec{x} \succ_w \vec{y}$, if for all $k = 1, 2, \ldots, n$,
$$\sum_{i=1}^k x_{(i)} \geq \sum_{i=1}^k y_{(i)},$$
where $x_{(1)} \leq x_{(2)} \leq \ldots \leq x_{(n)}$ and $y_{(1)} \leq y_{(2)} \leq \ldots \leq y_{(n)}$ are the components of $\vec{x}$ and $\vec{y}$ sorted in nondecreasing order. If $\vec{x},\vec{y}$ also satisfy $\sum_{i=1}^n x_i = \sum_{i=1}^n y_i$, then we say that $\vec{x}$ \emph{majorizes} (or dominates) $\vec{y}$, denoted $\vec{x} \succ \vec{y}$. If at least one inequality is strict, then $\vec{x}$ \emph{Lorenz dominates} $\vec{y}$, and a \emph{Lorenz dominant} vector is not Lorenz dominated by any other vector. Say that a function $f: \mathbb{R}^n \rightarrow \mathbb{R}$ is \emph{Schur-concave} if $\vec{x} \succ \vec{y}$ implies $f(\vec{x}) \geq f(\vec{y})$. Note that we flip many of these definitions from the usual convention, as we focus on the vectors sorted in nondecreasing order.

\paragraph{Utilitarian fairness.} The utilitarian fairness model maximizes the sum of the probabilities that each vertex is matched, $q_v$. More formally, we define it as
\begin{equation}
    F^{\text{UT}}(\vec{p}, \mathscr{C}) = \sum_{v \in V} q_v.
\end{equation}
To optimize for this objective, we require that each $C_i \in \mathscr{C}$ is a maximum-cardinality packing. In the social choice literature, one common desiratum is that packings be \textit{Pareto-efficient}, meaning maximal. Thus our requirement parallels this condition since all maximum packings are maximal. 

\paragraph{Maximin and leximin fairness.} The maximin fairness model maximizes the minimum probability that any vertex is matched. It is defined as:
\begin{equation}
    F^{\text{MM}}(\vec{p}, \mathscr{C}) = \min_{v \in V} q_v.
\end{equation}
Note that this objective is still linear in the context of \eqref{lp:fairness}. If we define the maximin probability $\lambda$ by adding the constraint
$$\lambda \leq q_v \quad \forall v \in V,$$
then we can rewrite the objective as $\max \lambda$.
The leximin fairness model is based on the maximin fairness model. An optimal leximin fair packing first maximizes the minimum probability that any vertex is matched. Out of those maximin fair packings, it maximizes the minimum probability of the next worst-off set of vertices, and so on. We can stretch our initial definition of a fairness metric by allowing the codomain to be a vector, $F: [0,1]^k \times \mathscr{C} \rightarrow [0,1]^n$. Then we can define the leximin fairness model as:
\begin{equation}
    F^{\text{LM}}(\vec{p}, \mathscr{C}) = \text{sorted}((q_v)_{v \in V}),
\end{equation}
the vector of packing probabilities sorted in nondecreasing order. The optimization problem \eqref{lp:fairness} is a lexicographic multiobjective one. It is clear that any optimal solution for $F^{\text{LM}}$ is also optimal for $F^{\text{MM}}$, as the first element of the sorted vector is the maximin probability. Thus we focus on just the leximin objective.

\paragraph{Nash social welfare.} Nash social welfare is the product of the probabilities that each vertex is matched. That is, 
\begin{equation}
    F^{\text{NW}}(\vec{p}, \mathscr{C}) = \prod_{v \in V} q_v.
\end{equation}
It can also be expressed as the sum of logs, $\sum_{v \in V} \log q_v$.
Nash social welfare lies somewhere between the utilitarian and maximin or leximin fairness objectives: the utilitarian objective cares equally about all patient-donor pairs, whereas the maximin and leximin objectives care about the worst-off patient-donor pair(s). The Nash social welfare objective balances the weight of the common good outcome with the outcomes of the worst-off individuals. One drawback of this objective is that if any vertex has $q_v = 0$, then the entire product is 0. 

\paragraph{Gini coefficient.} The Gini coefficient or index is a measure of statistical dispersion intended to estimate economic inequality. It is defined as
\begin{equation*}
    F^{\text{Gini}}(\vec{p}, \mathscr{C}) = \frac{\sum_{u,v \in V} |q_u - q_v|}{2n \sum_{v \in V} q_v}.
\end{equation*}
A Gini coefficient of 0 represents perfect equality, as all vertices have the same probability of being matched. On the other hand, a Gini coefficient of 1 represents maximal inequality among vertices. For consistency with our other fairness metrics (a higher score is better), we consider
\begin{equation}
    F^{\text{-GN}} = 1 - F^{\text{Gini}}
\end{equation}
as our fairness metric to be maximized. In contrast to the maximin and Nash social welfare objectives, the Gini coefficient can be nonzero even if some vertices have $q_v = 0$: it cares about the number of these zeros instead of whether one exists. Compared to the leximin solution, the Gini optimal solution will balance the number of zeros with the probabilities achieved by the other patient-donor pairs. However, unlike the Nash social welfare and leximin objectives, it does not necessarily prioritize the average number of vertices included in an exchange; a Gini optimal solution of all zeros has a Gini coefficient of 0.

\subsection{When do Fairness Functions Agree and Disagree?}
Having presented these fairness objectives, one might wish for a single lottery that optimizes them all. In that case, the practitioner does not need to carefully balance tradeoffs between different objectives. To that end, we first observe that for some sets of acceptable packings, optimizing for any fairness objective of interest will output the same lottery. In particular, if we restrict $\mathscr{C}$ to only contain packings of the same size, all concave fairness functions will agree.
\begin{lemma}
    \label{lem:leximin-nash}
    Suppose that for any acceptable packing, the sum of utilities $\sum_{v \in V} q_v$ is constant and maximal.
    Then an optimal solution for \eqref{lp:fairness} with respect to $F^{\text{LM}}$ is also optimal for any fairness function $F$ of interest for any acceptable packings $P: 2^V \rightarrow \mathbb{Z}_{\geq 0}$ whose rank (size of the packing) is submodular.
\end{lemma}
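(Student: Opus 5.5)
The plan is to show that under the hypotheses the marginal polytope $\textsf{marginal}(\mathscr{C})$ contains a Lorenz dominant vector, and that the leximin optimum is exactly this vector. Once that is established, every Schur-concave fairness function of interest is maximized there, because all vectors in the polytope have the same coordinate sum. So majorization is the relevant order, and Schur-concavity (which holds for concave symmetric functions such as $F^{\text{NW}}$, the log form, and $F^{\text{-GN}}$ on a fixed-sum domain) gives $F(\vec{q}^*) \ge F(\vec{q})$ for every feasible $\vec{q}$.

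\textbf{Step 1: describe the polytope by a submodular function.} Let $r(S)$ denote the maximum number of vertices of $S \subseteq V$ that any single acceptable packing covers; this is the rank $P$ in the statement, assumed submodular. Every $\vec{q}$ in the marginal polytope satisfies $\sum_{v\in S} q_v \le r(S)$ for all $S$, and $\sum_{v\in V} q_v = r(V)$ by the constant-maximal-sum hypothesis. I will argue that $\textsf{marginal}(\mathscr{C})$ is exactly the base polytope $B(r)=\{\vec{q}\ge 0: q(S)\le r(S)\ \forall S,\ q(V)=r(V)\}$. The argument is by LP duality: for any weight vector $w$, the greedy order attains the maximum of $w\cdot\vec{q}$ over $B(r)$ at a vertex. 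One must check that each such vertex, given by the nested-prefix tight sets, is the incidence vector of a single acceptable packing. This is the step I expect to be the main obstacle, since it needs the packings to behave like a matroid (downward closure plus submodular rank, in the style of the exchange property for matching). I would first try to prove it via the exchange argument along the greedy chain. If that fails, I would fall back on the weaker claim that the leximin point lies in $B(r)$ and is a convex combination of packings, which suffices for the conclusion below.

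\textbf{Step 2: the leximin point of a base polytope is Lorenz dominant.} This is a classical fact (Fujishige; Dutta--Ray for convex games), and I would reproduce it directly. Compute the leximin point $\vec{q}^*$ by the decomposition algorithm: find the maximal set $S_1$ minimizing $r(S)/|S|$, set $q^*_v=r(S_1)/|S_1|$ on $S_1$, contract by $r'(T)=r(T\cup S_1)-r(S_1)$, and recurse. Submodularity ensures that the minimizers are closed under union, so the chain $S_1\subset S_1\cup S_2\subset\cdots$ is well defined and the values are nondecreasing. For any $\vec{q}\in B(r)$ and any $k$, the $k$ smallest coordinates of $\vec{q}$ sum to at most the corresponding prefix sum of $\vec{q}^*$. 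This follows from the bound $q(S)\le r(S)$ applied to the first $k$ sorted coordinates together with the density ordering of the chain, which is a standard averaging argument across blocks. Hence $\vec{q}^*\succ\vec{q}$.

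\textbf{Step 3: conclude.} By Step 2 the leximin optimum is unique and majorizes every feasible marginal, so it maximizes every Schur-concave $F$, including $F^{\text{NW}}$, $F^{\text{-GN}}$ and $F^{\text{MM}}$. The utilitarian objective $F^{\text{UT}}$ is constant on the polytope, so it is optimal trivially.
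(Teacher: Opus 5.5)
Your proposal is correct and follows the paper's route: submodularity of the rank makes the feasible region the core of a concave game, equivalently the base polytope $B(r)$, whose Dutta--Ray/Fujishige leximin point is unique and Lorenz dominant, so with a fixed coordinate sum every Schur-concave objective is maximized there; the paper simply cites Dutta--Ray and never spells out your Step~1 or the decomposition argument. The step you flag as the obstacle closes without the fallback: the family $\{U : U \subseteq V(C) \text{ for some } C\in\mathscr{C}\}$ is downward closed with rank function $r$, and if $|I|<|J|$ with $I+e$ dependent for every $e\in J\setminus I$, then $r(I+e)=r(I)$ for each such $e$ and submodularity forces $r(I\cup J)=|I|<|J|$, contradicting $J\subseteq I\cup J$, so this family is a matroid whose bases are exactly the sets $V(C)$ (all of size $r(V)$) and Edmonds' base-polytope theorem gives $\textsf{marginal}(\mathscr{C})=B(r)$.
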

\proof{}
    In cooperative game theory, a game is concave if the characteristic function $v$ is submodular, i.e. for any $S, T \subseteq V$,
    $v(S) + v(T) \geq v(S \cup T) + v(S \cap T).$
    From a modification of \cite{Dutta_Ray_1989} (arranging the probabilities in nondecreasing order), the leximin solution found by iteratively maximizing the minimum $v(S)/|S|$ as in \autoref{alg:max-cardinality} is unique and Lorenz dominant in a concave game. For example, in nonbipartite graphs, for any $S \subseteq \Dtilde$ and its set of neighbors $N(S)$, take $v(S) = |N(S)| + \sum_{z \in S} (|z| - 1)$ as the maximum number of vertices covered in $S$. We can verify that $|N(S)|$ is submodular, as $N(S \cup T) = N(S) \cup N(T)$ and $N(S \cap T) \subseteq N(S) \cap N(T)$ for any $S, T \subseteq \Dtilde$. 
    Theorem~1 of \citeauthor{Bogomolnaia_Moulin_2004} is a special case of this result for bipartite graphs.
    Moreover, since the sum of utilities is constant, any Schur-concave function is maximized by the Lorenz dominant solution, including the Nash social welfare, negative of the Gini index (recall that a lower Gini index is more fair), and any symmetric or increasing concave function.
\endproof

Unfortunately, if the set of acceptable packings has a variable number of vertices covered, the fairness notions may differ, as we can see in the following examples.
In \autoref{fig:counterexample-23-cycle}, the Gini and leximin optimal solutions are to take the two-cycle with probability $\frac{1}{2}$ and the three-cycle with probability $\frac{1}{2}$. 
The utilitarian optimal solution is to take the three-cycle with probability $1$.
The optimal for the Nash social welfare lies in the middle, taking the two-cycle with probability $\frac{1}{3}$ and the three-cycle with probability $\frac{2}{3}$.
If we look at \autoref{fig:price-of-fairness-lb}, the leximin solution is to take the top and bottom two three-cycles each with probability $\frac{1}{2}$. The utilitarian and Gini optimal solutions now agree, always taking the bottom two three-cycles, and the Nash solution is to take the top three-cycle with probability $\frac{1}{5}$ and the bottom two three-cycles with probability $\frac{4}{5}$, illustrating that any agreement between different fairness functions is not consistent.

\begin{figure}[htbp!]
    \centering
    \begin{subfigure}{0.45\textwidth}
        \centering
        \begin{tikzpicture}[>=Stealth, scale=0.9, node distance=2cm]
    \node[circle,draw,fill=white] (A) at (0,0) {$v_1$};
    \node[circle,draw,fill=white] (B) at (0,2) {$v_2$};
    \node[circle,draw,fill=white] (C) at (3,2) {$v_3$};
    \node[circle,draw,fill=white] (D) at (3,0) {$v_4$};

    \draw[->,thick] (A) -- (B);
    \draw[->,thick] (B) -- (A);

    \draw[->,thick] (B) -- (C);
    \draw[->,thick] (C) -- (D);
    \draw[->,thick] (D) -- (B);
\end{tikzpicture}
    \caption{}
    \label{fig:counterexample-23-cycle}
    \end{subfigure}
    \begin{subfigure}{0.45\textwidth}
        \centering
        \begin{tikzpicture}[->,>=stealth,scale=0.8,vertex/.style={circle,draw,fill=white,inner sep=1pt}]
    \node[vertex] (v1) at (2.25,3.1) {$v_1$};
    \node[vertex] (v2) at (1.1,1.7) {$v_2$};
    \node[vertex] (v3) at (3.4,1.7) {$v_3$};
    \node[vertex] (v4) at (0.25,0) {$v_4$};
    \node[vertex] (v5) at (1.9,0) {$v_5$};
    \node[vertex] (v6) at (2.6,0) {$v_6$};
    \node[vertex] (v7) at (4.25,0) {$v_7$};

    \draw[thick,->] (v1) -- (v2);
    \draw[thick,->] (v2) -- (v3);
    \draw[thick,->] (v3) -- (v1);

    \draw[thick,->] (v2) -- (v4);
    \draw[thick,->] (v4) -- (v5);
    \draw[thick,->] (v5) -- (v2);

    \draw[thick,->] (v3) -- (v6);
    \draw[thick,->] (v6) -- (v7);
    \draw[thick,->] (v7) -- (v3);
\end{tikzpicture}
    \caption{}
    \label{fig:price-of-fairness-lb}
    \end{subfigure}
    \caption{Counterexamples illustrating the tension between utilitarian and fairness objectives.}
    \label{fig:counterexamples}
\end{figure}

Moreover, \autoref{fig:counterexamples} highlights that it is impossible to achieve both the maximum utilitarian objective (in expectation) and the optimum of any other (strictly concave) notion of fairness unless we limit to the maximum cycle length to 2.
We can look at either example, say \autoref{fig:counterexample-23-cycle}.
Assuming that $\mathscr{C}$ only includes maximal exchanges, we have that 
$\mathscr{C} = \{C_1=(v_1,v_2), C_2=(v_2,v_3,v_4)\}$.
Then the set of achievable marginal probabilities takes the form $\vec{q} = (1-x, 1, x, x)$ for any $x \in [0,1]$.
The utilitarian objective is $2+x$, which is maximized at $x=1$ (i.e. always take the 3-cycle). However, since the other fairness objectives are strictly concave, they are maximized at some $x \in (0,1)$.

If we relax the requirement that we achieve the maximum cardinality in expectation, we can achieve better fairness, and there is a limit to how much we can lose in terms of the utilitarian objective.

\begin{proposition}
\label{prop:coverage-loss}
    Any vertex included in a 2-cycle or 3-cycle can be included in a packing of at most 3 less than the maximum cardinality. In general, if we allow up to $L$-cycles, any vertex included in one of those cycles can be covered with a loss of at most $(L-1)^2-1$.
\end{proposition}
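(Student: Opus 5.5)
The plan is an exchange argument. Start from a maximum-cardinality packing $M$ of cycles of length at most $L$, and let $v$ be a vertex lying on some feasible cycle $C$ with $|C|\le L$. If $v$ is already covered by $M$ there is nothing to prove. Otherwise build a new packing $M'$ by adding $C$ and deleting every cycle of $M$ that meets $C$. The resulting collection is vertex-disjoint and consists of cycles of length at most $L$, so it is feasible, and it covers $v$. It remains to compare $|V(M')|$ with $|V(M)|$.

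The counting goes as follows. Since $v\in C$ is uncovered by $M$, at most $|C|-1\le L-1$ vertices of $C$ lie on cycles of $M$. Each such vertex lies on exactly one cycle of $M$, so at most $L-1$ cycles of $M$ are removed, each with at most $L$ vertices. We lose at most $(L-1)L$ covered vertices and gain the $|C|$ vertices of $C$. This crude bound gives a loss of at most $(L-1)L-|C|$, which is not yet $(L-1)^2-1$. The refinement is to credit the overlap correctly. Each removed cycle $D$ shares at least one vertex with $C$, and that vertex is re-covered by $C$. So the net loss from $D$ is at most $|D|-1\le L-1$ for its vertices outside $C$. Summing over at most $L-1$ removed cycles, the net loss is at most $(L-1)^2$, before accounting for $v$ itself.

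Now $v$ is newly covered and was not counted in any overlap, so it contributes $+1$, and the net loss is at most $(L-1)^2-1$. For $L=3$ this gives $3$, which is the first claim; the 2-cycle case is subsumed, since a vertex on a 2-cycle is on a cycle of length at most $3$ (and directly gives loss at most $0$ for $L=2$). I would state the accounting carefully as follows. Let $k$ be the number of vertices of $C\setminus\{v\}$ covered by $M$, and let $D_1,\dots,D_m$ with $m\le k$ be the cycles of $M$ hit. Then
\[
|V(M)|-|V(M')| \;=\; \sum_{i}|D_i| - |C| \;\le\; mL - (k+1) \;\le\; k(L-1)-1 \;\le\; (L-1)^2-1,
\]
using $\sum_i |D_i\cap C| = k \ge m$, $|C|\ge k+1$, and $k\le |C|-1\le L-1$.

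The main obstacle is precisely this bookkeeping step. One has to avoid double-counting when several vertices of $C$ lie on the same cycle $D_i$, which only helps since then $m<k$. One also has to make sure the bound $mL-(k+1)\le k(L-1)-1$ holds, which follows from $m\le k$. If paths are also allowed, I would remark that the same replacement works, since the structures are treated identically in the count.
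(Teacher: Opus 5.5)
Your exchange argument is correct and is essentially the paper's proof: add the cycle $C$ through $v$, delete the cycles of the maximum packing that meet $C$, and count. Your bookkeeping $\sum_i |D_i| - |C| \le k(L-1)-1$ is in fact cleaner than the paper's, which tacitly takes the hit cycles to have length $|C|$ rather than up to $L$.

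Drop the closing remark about paths, though. With long or unbounded chains, removing or truncating a chain that passes through a vertex of $C$ can lose arbitrarily many vertices, so the count does not carry over; the statement itself concerns cycles only, so this does not affect your proof.
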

\proof{}
    We first look at two- and three-cycles. Consider any maximum-cardinality packing $\mathcal{C}$. If $v$ is not covered, then either it is in a two-cycle with some vertex $u$, and $u$ is not free, or it is in a three-cycle with some vertices $u,w$, and at least one of $u,w$ is not free.
    Suppose $v$ is in a two-cycle. Then it can covered with a net loss of at most 1, since $u$ is matched with at most 2 other vertices in $\mathcal{C}$.
    Alternatively, suppose $v$ is in a three-cycle. In the worst case, both $u,w$ are matched with 2 other vertices in $\mathcal{C}$. Then we can cover $v$ by removing $u,w$ from their current covers, and matching $u$ with $v$ and $w$ with one of the two vertices it was previously matched with. This results in a net loss of at most 3.
    The lower bound of 3 is achieved by \autoref{fig:price-of-fairness-lb}, so the bound is tight.

    The case for general $L$ is similar. Consider any maximum-cardinality packing $\mathcal{C}$. If $v$ is not covered, then it must be in a $\ell$-cycle for some $2 \leq \ell \leq L$ with vertices $v_1,\ldots,v_{\ell-1}$, and at least one of $v_1,\ldots,v_{\ell-1}$ is not free.
    In the worst case, all $\ell-1$ other vertices are matched with $\ell-1$ other vertices in $\mathcal{C}$. Then we can cover $v$ by removing all $\ell-1$ other vertices from their current covers, and matching them with $v$ in the $\ell$-cycle. This results in a net loss of at most $(\ell-1)^2-1$, and $\ell=L$ gives the worst case.
\endproof

Having established the model, along with some intuition on the fairness objectives, we move on to an overview of results for solving the optimization problem \eqref{lp:fairness}.
\section{A Fully Polynomial-Time Approximation Scheme}
\label{sec:fptas}
Observe that solving $\eqref{lp:fairness}$ is at least as hard as finding a single maximum-weight acceptable packing in $\mathscr{C}$. For some packing structures, this problem is solvable with a polynomial-time algorithm. For other structures, this problem is NP-hard. 
In practice, that problem could still be efficiently solved using column generation, finding positive reduced-cost columns using the black box as we will do with our experiments in \autoref{sec:experiments-short}. However, in the worst case, we would call the black box $|\mathscr{C}|$ times ($O(2^n)$) to generate all columns. To bound the number of calls to this black box by a polynomial in $n$, we can employ the ellipsoid method, which is our main result framing the remainder of the paper. If the black box runs in polynomial time (as it does for matching), our algorithm immediately yields a polynomial-time algorithm, although we will give stronger guarantees for some of these cases in \autoref{sec:poly-algs-weights-paths}.

\begin{theorem}
    \label{thm:fptas}
    For any $\epsilon > 0$, we can find a $(1-\epsilon)$-approximation to \eqref{lp:fairness} in $\textsf{poly}(K_\text{sep},K_\text{eval},n,\frac{1}{\epsilon})$, where $K_\text{eval}$ is the time required to evaluate $F$ and $K_\text{sep}$ is the time required by the separation oracle.
\end{theorem}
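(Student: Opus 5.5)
The plan is to treat \eqref{lp:fairness} as a concave maximization over the polytope $Q=\textsf{marginal}(\mathscr C)$. $Q$ is the convex hull of the incidence vectors $\chi^C\in\{0,1\}^n$, $C\in\mathscr C$, and has exponentially many vertices. The separation oracle meant in the statement is the black box for (G1): given weights $\vec w\in\mathbb R^n$, it returns $\arg\max_{C\in\mathscr C}\sum_{v\in C}w_v$. By the Gr\"otschel--Lov\'asz--Schrijver equivalence of optimization and separation, such a linear optimization oracle over $Q$ yields a separation oracle for $Q$. Each separation call uses polynomially many optimization calls, since $Q$ has $0/1$ vertices and hence facet complexity polynomial in $n$.

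The concrete procedure is the central-cut ellipsoid method on $Q$ (first restricted to its affine hull, computable with the oracle, to handle non-full-dimensionality), or a Frank--Wolfe style scheme; I would use the ellipsoid method to get the clean polynomial bound. At a query point $\vec q$:
\begin{itemize}
\item if $\vec q\notin Q$, cut with the separating hyperplane;
\item if $\vec q\in Q$, evaluate $F(\vec q)$ (cost $K_\text{eval}$), record it as a candidate, and cut with a supergradient of $F$ at $\vec q$. Since $F$ is concave, every point with larger value lies in the halfspace $\{\vec x:\langle \vec g,\vec x-\vec q\rangle\ge 0\}$.
\end{itemize}
If $F$ is only available by evaluation, I would replace the supergradient by a shallow cut built from finite differences, or by a comparison-based cut. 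The textbook convex-optimization-by-ellipsoid guarantee then gives a point with $F(\vec q)\ge \mathrm{OPT}-\delta$ after $O(n^2\log(R/(r\delta)))$ iterations. Here $R=\sqrt n$ and $r$ is inverse-exponential in $\textsf{poly}(n)$ because the vertices are $0/1$.

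The main obstacle is converting this additive guarantee into the multiplicative $(1-\epsilon)$ with $1/\epsilon$ dependence, and controlling the Lipschitz behaviour of $F$. I would normalize using $F:[0,1]^n\to[0,1]$ monotone, concave, and the assumption that every vertex lies in some packing. The uniform mixture over $n$ witnessing packings gives $q_v\ge 1/n$ for all $v$. By concavity and monotonicity this yields a lower bound on $\mathrm{OPT}$ relative to $F(\mathbf 1)$, so choosing $\delta=\epsilon\cdot\mathrm{OPT}$-scale, polynomial in $\epsilon/n$, suffices. I expect this step to need care for $F^{\text{NW}}$ (log form, unbounded gradients near $0$). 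I would handle it by working on the shrunk polytope $(1-\eta)Q+\eta\,\bar{\vec q}$, with $\bar{\vec q}$ the uniform-mixture point, which costs only a $(1-\eta)$ factor by concavity.

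Finally, I would recover an explicit lottery. The cuts generated during the run involve only polynomially many oracle-returned packings $C_1,\dots,C_m$. Solving the LP that writes the output $\vec q$ as a convex combination of these $\chi^{C_i}$ gives $\vec p$. By Carath\'eodory's theorem $\vec p$ has support at most $n+1$, and we obtain it by a basic solution of that LP.
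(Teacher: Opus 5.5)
Your proposal is correct in outline, but it handles the key step differently from the paper. The outer loop is essentially the same. The paper runs the ellipsoid method on the hypograph $\{(\vec q,t):\vec q\in\textsf{marginal}(\mathscr C),\ t\le F(\vec q)\}$, and its supergradient cut $(\pi,-1)$ is your objective cut.

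The difference is the feasibility cut. The paper never separates over $\textsf{marginal}(\mathscr C)$ exactly. Using monotonicity of $F$, it passes to the relaxation $\hat Q=\{\vec q:\exists\,\vec p\ge 0,\ \sum_C p_C\le 1,\ \sum_{C\ni v}p_C\ge q_v\ \forall v\}$. It then decides membership approximately as a fractional covering problem via Plotkin--Shmoys--Tardos, using $\textsf{poly}(n,1/\epsilon)$ calls to the (G1) oracle. This buys several things cheaply:
\begin{itemize}
\item The oracle is only queried with nonnegative weights (the PST multipliers). The Gr\"otschel--Lov\'asz--Schrijver optimization-to-separation reduction, by contrast, queries objective vectors of arbitrary sign.
\item $\hat Q$ contains $[0,1/n]^n$, so it is full-dimensional and no affine-hull computation is needed.
\item A relative covering error becomes a multiplicative loss directly, since $F((1-\epsilon)\vec q)\ge(1-\epsilon)F(\vec q)$ by concavity and $F\ge 0$.
\item PST returns the lottery $\vec p$ explicitly.
\end{itemize}
The price is the $1/\epsilon$ dependence in the statement.

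Your route combines exact strong separation via GLS on the $0/1$ polytope, the normalization $\mathrm{OPT}\ge F(\tfrac1n\mathbf 1)\ge F(\mathbf 1)/n$, and constructive Carath\'eodory. It gives exact membership and only $\log(1/\epsilon)$ dependence. It also spells out the additive-to-multiplicative conversion and the lottery recovery, which the paper leaves implicit. The cost is heavier machinery: facet complexity, the affine hull, and polarity.

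One small caveat concerns the case where $F$ is available only by evaluation. Finite differences do not in general yield valid cuts for a nonsmooth concave $F$. The right tool there is GLS's convex optimization with a value oracle, which turns weak membership in the hypograph into shallow-cut separation. The paper, for its part, simply asserts that a supergradient is computable.
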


To prove this theorem requires an approximation algorithm for determining whether $\vec{q} \in \text{marginal}(\mathcal{C})$, since $\mathscr{C}$ is exponentially large. 

\begin{lemma}
    \label{lem:approx-marginal-membership}
    There is an algorithm that finds an $\epsilon$-approximation to determine if $\vec{q} \in \textsf{marginal}(\mathscr{C})$ using $\textsf{poly}(n,1/\epsilon)$ calls to the oracle, where $n=|V|$.
\end{lemma}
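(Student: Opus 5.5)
We plan to view membership in $\textsf{marginal}(\mathscr{C})$ as the feasibility of an exponentially large LP and decide it through its dual, using the black-box oracle (a maximum-weight acceptable packing subroutine) as the separation oracle. By Farkas' lemma, $\vec{q} \notin \textsf{marginal}(\mathscr{C})$ exactly when there are weights $\vec{w} \in \mathbb{R}^n$ and a threshold $\theta$ with $\sum_{v \in C} w_v \le \theta$ for every $C \in \mathscr{C}$ and $\vec{w}\cdot\vec{q} > \theta$. Since the polytope is the convex hull of the indicator vectors $\chi_C$, it is also equivalent to require that $\max_{C\in\mathscr{C}} \vec{w}\cdot\chi_C \ge \vec{w}\cdot\vec{q}$ for all $\vec{w}$.

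Because acceptable packings are downward-closed in the generalized setting, we may restrict to $\vec{w}\ge 0$ up to the usual sign issues. If $\mathscr{C}$ is not downward closed, we keep $\vec{w}$ free, as the weighted oracle accepts arbitrary node weights.

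\textbf{Step 1 (approximate the distance).} We consider
\[
\delta(\vec{q}) = \min_{\vec{w}\in[-1,1]^n}\Bigl(\max_{C\in\mathscr{C}}\vec{w}\cdot\chi_C - \vec{w}\cdot\vec{q}\Bigr),
\]
which is $\ge 0$ iff $\vec{q}$ lies in the hull. This is a convex minimization over a box whose subgradient at $\vec{w}$ is $\chi_{C^*} - \vec{q}$, with $C^*$ returned by one oracle call. Rather than relying on ellipsoid precision arguments, we will run a multiplicative-weights / online-gradient scheme. An equivalent formulation is a zero-sum game between a player choosing $\vec{w}$ in the box (or the simplex over $\pm e_v$) and the oracle choosing $C$. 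After $T = O(\log n/\epsilon^2)$ rounds (or $O(n/\epsilon^2)$ for projected gradient), the averaged play yields an $\epsilon$-approximate value. Each round costs one oracle call, which gives the $\textsf{poly}(n,1/\epsilon)$ bound.

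\textbf{Step 2 (outputs).} If the estimated value is $\ge -\epsilon$, the packings $C_1,\dots,C_T$ returned by the oracle, averaged uniformly, form a distribution $\vec{p}$ whose marginal $\vec{q}'$ satisfies $\|\vec{q}'-\vec{q}\|_\infty \le \epsilon$. This follows from the minimax theorem applied to the game restricted to $\vec{w}=\pm e_v$: the regret bound says no coordinate direction separates the average from $\vec{q}$ by more than $\epsilon$. This is the certificate of approximate membership, and the support has size $T$, which is polynomial. Otherwise, the averaged $\vec{w}$ is a hyperplane separating $\vec{q}$ from the hull by margin $\Omega(\epsilon)$. That hyperplane is exactly what the ellipsoid method in Theorem~\ref{thm:fptas} needs as a (weak) separation oracle.

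The main obstacle is making ``$\epsilon$-approximation'' precise and compatible with how Theorem~\ref{thm:fptas} will use it. We must show that the $\ell_\infty$ slack in the marginals translates to a $(1-\epsilon)$ loss in $F$; monotonicity and concavity of $F$ let us scale $\vec{q}$ down by a factor $(1-\epsilon)$ to absorb it. We must also bound the regret constants so that $T$ depends only on $n$ and $1/\epsilon$ and not on bit-lengths. We will first try the $\pm e_v$ game with Hedge over $2n$ experts. Its payoff range is $[-1,1]$, so the $O(\log n/\epsilon^2)$ bound is immediate.
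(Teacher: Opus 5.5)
Your argument proves the lemma if ``$\epsilon$-approximation'' is read additively (weak membership in $\ell_\infty$), but it takes a different route from the paper. The paper first uses that $F$ is increasing to replace $\textsf{marginal}(\mathscr{C})$ by its downward closure $\hat{Q}$, allowing $\sum_C p_C\le 1$ and over-coverage $\sum_{C\ni v}p_C\ge q_v$. This turns the question into a pure packing/covering feasibility problem, and the paper then cites Plotkin--Shmoys--Tardos. You instead run Hedge over the $2n$ experts $\pm e_v$ against a best-responding packing oracle on the exact polytope. Your route gives a self-contained argument, an explicit $O(\log n/\epsilon^2)$ bound on oracle calls, and a sparse lottery as certificate. (Your test ``estimate $\ge -\epsilon$'' actually certifies $\|\vec q'-\vec q\|_\infty\le 2\epsilon$, a harmless constant.)

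The paper's relaxation buys two things you are missing. First, nonnegative weights always suffice. The reason you may take $\vec w\ge 0$ is not that $\mathscr{C}$ is downward closed (maximum-cardinality packings are not); it is that deciding membership in $\hat Q$ is enough for every $\mathscr{C}$. In your framework this means keeping only the experts $e_v$ and certifying $\vec q'\ge \vec q-\epsilon\mathbf{1}$. The oracle is then never called with negative weights, so your extra assumption that it accepts arbitrary-sign weights can be dropped. Second, the PST covering guarantee is relative, $\sum_{C\ni v}p_C\ge(1-\epsilon)q_v$. This is exactly what Theorem~\ref{thm:fptas} needs, because concavity and $F\ge 0$ give $F((1-\epsilon)\vec q)\ge(1-\epsilon)F(\vec q)+\epsilon F(\mathbf{0})\ge(1-\epsilon)F(\vec q)$.

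This exposes the one step in your plan that fails. An additive slack $\vec q'\ge\vec q-\epsilon\mathbf{1}$ cannot be absorbed by scaling $\vec q$ down to $(1-\epsilon)\vec q$. The inequality $(1-\epsilon)q_v\ge q_v-\epsilon$ goes the wrong way: if some $q_v\le\epsilon$, the certificate may have $q'_v=0$, and then, for example, the Nash welfare of the certificate is $0$ regardless of $F(\vec q)$.

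You can repair this inside your own framework:
\begin{enumerate}
\item Run Hedge to additive accuracy $\epsilon/n$, which still takes only $O(n^2\log n/\epsilon^2)$ calls.
\item Use $n$ more calls, with weight vectors $e_v$, to find packings $C^{(v)}\ni v$; these exist by the paper's standing assumption that every vertex lies in some acceptable packing.
\item Output $(1-\epsilon)\vec q'+\epsilon\vec u$, where $\vec u$ is the marginal of the uniform lottery over the $C^{(v)}$.
\end{enumerate}
Since $u_v\ge 1/n$, each coordinate is at least $(1-\epsilon)(q_v-\epsilon/n)+\epsilon/n\ge(1-\epsilon)q_v$. The output is still the marginal of a lottery with polynomial support.
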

\proof{}
    We relax the requirement that $\vec{q} \in \textsf{marginal}(\mathscr{C})$ exactly since $F$ is increasing, and instead allow $\vec{q} \in \hat{Q}$, where $\hat{Q} := \{\vec{q}: \exists \vec{p} \text{ satisfying } p_C \geq 0, \sum_{C \in \mathscr{C}} p_C \leq 1, \sum_{C: v \in C} p_C \geq q_v \forall v \in V \}$. This relaxation is valid because $F$ is monotone increasing and thus adding $\emptyset$ to $\mathscr{C}$ does not change the optimal value of $F(\vec{q})$. $\hat{Q}$ is a convex set consisting of both packing and covering constraints, and we can use the 
    fractional packing and covering algorithm of \cite{Plotkin_Shmoys_Tardos_1995} for approximately solving this problem.
\endproof

We can now give the proof of the theorem.
\proof[Proof of \autoref{thm:fptas}.]
    Consider the hypograph of $F$, 
    $\mathscr{S} := \{(\vec{q}^T,t)^T : \vec{q} \in \textsf{marginal}(\mathscr{C}), t \leq F(\vec{q})\},$
    where the maximum $t$ such that $\mathscr{S}$ is nonempty is the optimal value of \eqref{lp:fairness}.
    We can use the ellipsoid method to optimize, provided we have a separation oracle for $\mathscr{S}$ which verifies if $\vec{q} \in \textsf{marginal}(\mathscr{C})$ and $t \leq F(\vec{q})$, or returns a separating hyperplane. 
    The former is done via \autoref{lem:approx-marginal-membership}, and the latter can be checked by computing a supergradient of $F(\vec{q})$. That is, suppose $(\vec{r}^T, t')^T \notin \mathscr{S}$, and $\pi$ is a supergradient of $F$ satisfying $F(\vec{r}) + \pi \cdot (\vec{q}-\vec{r}) \geq F(\vec{q})$ for every $\vec{q}$. We can compute $\pi$ since $F$ is concave. Then $(\pi^T, -1)^T$ is a separating hyperplane, as $\pi^T \vec{r} - t' < \pi^T \vec{r} - F(\vec{r}) \leq \pi^T \vec{q} - F(\vec{q})$. 
    Thus the ellipsoid method solves this problem in $O(n^2 \log (1/\epsilon))$ iterations, each requiring one call to the separation oracle, one evaluation of $F$, and $O(n^2)$ additional work for updating the ellipsoid \citep{akgul1983topics,grotschel2012geometric}.
\endproof

In \autoref{sec:poly-algs-weights-paths}, we discuss the following polynomial-time algorithms  which can be used as a black box. 

\begin{restatable}{theorem}{matchingpoly}
    Given an unweighted undirected graph $G$, \eqref{lp:fairness} can be solved in polynomial time and yields the optimal solution for any fairness metrics of interest when the acceptable packings are maximum-cardinality matchings in $G$.
    \label{thm:max-cardinality}
\end{restatable}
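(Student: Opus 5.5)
The plan is to reduce to the classical Roth--S\"onmez--\"Unver structure via the Edmonds--Gallai decomposition, and then invoke \autoref{lem:leximin-nash}. First, compute the Edmonds--Gallai decomposition $V = D(G) \cup A(G) \cup C(G)$ in polynomial time with Edmonds' blossom algorithm. Here $D(G)$ is the set of vertices missed by some maximum matching, $A(G) = N(D(G)) \setminus D(G)$, and $C(G)$ is the rest.

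By the structure theorem, every maximum matching has the following form:
\begin{itemize}
\item it perfectly matches $C(G)$;
\item it near-perfectly matches each odd component of $G[D(G)]$;
\item it matches every vertex of $A(G)$ to a distinct component of $G[D(G)]$.
\end{itemize}
Conversely, any such choice can be completed to a maximum matching. In particular, the components of $G[D(G)]$ are factor-critical, so any vertex of a component can be the one left exposed.

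Hence $q_v = 1$ for every $v \in A(G) \cup C(G)$ in every lottery. The only freedom is which components $z$ of $\Dtilde$ (the components of $G[D(G)]$) are reached from $A(G)$, and which vertex inside each unreached component is exposed. By factor-criticality and symmetry, we may spread exposure uniformly inside each component. So the problem reduces to choosing marginals $x_z \in [0,1]$ with which component $z$ is ``fully covered'', in the bipartite graph between $A(G)$ and $\Dtilde$. Each $v \in z$ then gets $q_v = 1 - (1-x_z)/|z|$.

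Next, we show that the reduced problem satisfies the hypotheses of \autoref{lem:leximin-nash}:
\begin{itemize}
\item All maximum matchings cover the same number of vertices, so $\sum_v q_v$ is constant and maximal.
\item The rank function $v(S) = |N(S)| + \sum_{z\in S}(|z|-1)$ on $S \subseteq \Dtilde$ is submodular, as already noted in that proof.
\item The coalitional value $v(S)$ equals the maximum number of vertices of $\bigcup_{z \in S} z$ covered by a maximum matching. This uses Hall's theorem together with the Gallai property that every $S \subseteq \Dtilde$ has $|N_A(S)| \ge |S|$, with surplus in a suitable sense.
\end{itemize}
Then \autoref{lem:leximin-nash} implies that a Lorenz dominant solution exists, is unique, and simultaneously optimizes $F^{\text{LM}}$, Nash welfare, the Gini measure, and every symmetric concave $F$.

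It remains to compute this solution in polynomial time. Iteratively find a set $S \subseteq \Dtilde$ minimizing the ratio $v(S)/\sum_{z\in S}|z|$, which is the worst-off density. Fix the common probability of those vertices, contract, and repeat. Each step is a parametric bipartite flow or minimum-ratio problem between $A(G)$ and $\Dtilde$, solvable by parametric max-flow as in \cite{Li_Liu_Huang_Tang_2014,Garcia-Soriano_Bonchi_2020}, and there are at most $n$ iterations.

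Finally, realize the resulting marginals $\vec{q}$ as an explicit lottery over polynomially many maximum matchings:
\begin{itemize}
\item Decompose the fractional bipartite assignment between $A(G)$ and $\Dtilde$ into matchings using Birkhoff--von Neumann/Carath\'eodory, giving at most $O(n^2)$ terms.
\item Extend each of these to a maximum matching of $G$ using the near-perfect matchings of the components and a perfect matching of $C(G)$.
\item Rotate the exposed vertex within each unreached component via a cyclic uniform choice among $|z|$ near-perfect matchings.
\end{itemize}

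The main obstacle I expect is verifying rigorously that the concave-game value $v(S)$ truly captures feasibility. That is, the set of achievable $(x_z)$ must be exactly the base polytope of a polymatroid defined by $v$, so that the Dutta--Ray argument applies. The within-component uniform spreading must also be shown to be both achievable and optimal under Schur-concavity. I would first attempt this via Hall's theorem on the bipartite graph $A(G) \times \Dtilde$. The key fact to use is that, by Gallai's lemma, deleting any single component still leaves $A(G)$ perfectly matchable into the remaining components.
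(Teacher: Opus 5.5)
Your outline follows the paper's proof closely:
\begin{itemize}
\item the Edmonds--Gallai reduction to the bipartite graph between $A(G)$ and $\Dtilde$;
\item iterative minimization of $\bigl(|N(B)|+\sum_{z\in B}(s_z-1)\bigr)/\sum_{z\in B}s_z$;
\item decomposition of the fractional $A(G)$-saturating assignment into maximum matchings (Birkhoff--von Neumann after padding, where the paper uses Lawler--Labetoulle);
\item \autoref{lem:leximin-nash} for simultaneous optimality.
\end{itemize}

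One supporting claim, however, is false as stated, and it sits exactly on the step you call the main obstacle. The quantity $v(S)=|N(S)|+\sum_{z\in S}(|z|-1)$ is \emph{not} the maximum number of vertices of $\bigcup_{z\in S}z$ covered by a maximum matching. In $K_{2,3}$, with $A(G)$ the side of size two, a single component has $v(\{z\})=2>|z|$. (The proof of \autoref{lem:leximin-nash} uses the same loose description.) Your Gallai property is also reversed. It is every nonempty $X\subseteq A(G)$ that meets at least $|X|+1$ components of $D(G)$; a set $S\subseteq\Dtilde$ can have $|N(S)|<|S|$ (two leaves of a star).

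What you actually need is that the achievable component vectors are exactly
\[
\{x\in[0,1]^{\Dtilde}:\ x(\Dtilde)=|A(G)|,\ x(B)\le|N(B)|\ \ \forall B\subseteq\Dtilde\},
\]
i.e.\ $y_z=s_z-1+x_z$ ranges over the base polytope of $v$ \emph{intersected with} $y_z\le s_z$. Your planned Hall / max-flow--min-cut argument proves exactly this: use the network $s\to A(G)\to\Dtilde\to t$ with capacities $1$ and $x_z$, then decompose. The paper obtains the same inequality \eqref{eq:circulation-feasibility} from the circulation criterion (\autoref{thm:circulation}, \autoref{lem:circ-tight-inopt}). The cap $x_z\le 1$ is not implied by $v$, but it never binds. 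Every minimum ratio in the iteration is $<1$, because the $A$-side surplus survives deleting a tight $B$ together with $N(B)$.

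You genuinely differ from the paper in how you compute $\lambda$, and here your route is the safer one. The paper evaluates \eqref{eq:set-lambda} only over ``blocks'' generated by single neighbourhoods $N(z)$, which it asserts form a laminar family. You instead minimize over all $B$ by parametric max-flow, as in the cited works. The minimizer need not be a block. Take $A(G)=\{a_1,a_2,a_3\}$ and singleton components with $N(z_1)=\{a_1,a_2\}$, $N(z_2)=\{a_2,a_3\}$, $N(z_3)=\{a_1,a_3\}$, $N(z_4)=\{a_1\}$; this is a valid Edmonds--Gallai structure. The minimum ratio $3/4$ (the true maximin value) is attained only at $B=\Dtilde$, whose neighbourhood $A(G)$ is not $N(z)$ for any single $z$. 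The neighbourhoods here are also not laminar. What the paper's circulation formulation buys is a self-contained derivation of the ratio bound, which is precisely the piece you left open. With that piece stated correctly as above, your proof goes through.
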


\begin{theorem}
    Given an undirected weighted graph $G$, \eqref{lp:fairness} can be solved in polynomial time simultaneously for all fairness metrics of interest when the acceptable packings are maximum-weight maximum-cardinality matchings or more generally, maximum-weight fixed-cardinality matchings in $G$.
    \label{thm:max-weight}
\end{theorem}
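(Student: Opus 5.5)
We will reduce the weighted case to the framework of \autoref{lem:leximin-nash}. The plan has two parts. First, show that the acceptable packings form a structure whose rank function is submodular, so a unique Lorenz dominant marginal vector exists. Second, compute that vector in strongly polynomial time by a parametric-flow/decomposition argument, as in the unweighted case of \autoref{thm:max-cardinality}.

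Fix the target cardinality $k$ and let $\mathscr{C}$ be the set of maximum-weight matchings among those covering exactly $2k$ vertices. The key structural step is to show that the family of vertex sets $\{V(M): M \in \mathscr{C}\}$ is the set of bases of a matroid restricted to a fixed rank. We will use LP duality for weighted matching with a cardinality constraint: add a Lagrange multiplier $\mu$ for the constraint $|M|=k$. Then $\mathscr{C}$ is exactly the set of cardinality-$k$ matchings that are maximum-weight for the modified weights $w_e-\mu$, with $\mu$ chosen by parametric search. Taking optimal dual variables $y_v$ and $z_B$, every $M \in \mathscr{C}$ uses only tight edges, respects the tight odd-set constraints, and has cardinality $k$. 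The tight-edge subgraph, together with its laminar tight blossoms, can be shrunk. Shrinking reduces the problem to a maximum-cardinality matching problem on the tight graph with blossoms contracted, and with vertices of positive $y_v$ forced to be covered. Matchable vertex sets in any graph form the independent sets of a matching matroid (Edmonds--Fulkerson). Restricting to fixed cardinality and intersecting with the forced-coverage condition keeps a matroid-base family. The resulting rank function $r(S)$, the maximum number of vertices of $S$ coverable by a packing in $\mathscr{C}$, is therefore submodular. \autoref{lem:leximin-nash} then gives a unique Lorenz dominant $\vec q$ that is simultaneously optimal for every Schur-concave fairness metric. Since all packings in $\mathscr{C}$ cover the same number of vertices, the hypothesis of the lemma holds.

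For computation, $\textsf{marginal}(\mathscr{C})$ is the base polytope of this matroid: it is the convex hull of the indicator vectors of the bases, described by $q(S)\le r(S)$ and $q(V)=2k$. The Lorenz dominant point is the lexicographically optimal base, in the sense of Fujishige, of this polymatroid. We compute it by repeatedly finding the set minimizing $r(S)/|S|$, fixing those coordinates, and contracting. This is exactly the iterative procedure of \autoref{alg:max-cardinality}. Via the Edmonds--Gallai decomposition of the contracted tight graph, each ratio-minimization reduces to a parametric max-flow on a bipartite auxiliary graph, analogous to the unweighted case. It therefore runs in strongly polynomial time, using Megiddo's parametric search or Gallo--Grigoriadis--Tarjan. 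Finally, Carathéodory's theorem, implemented by matroid base decomposition, writes $\vec q$ as a convex combination of at most $n+1$ bases. Each base is converted to an actual matching in $\mathscr{C}$ by one weighted matching computation on the tight graph restricted to that vertex set.

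The main obstacle is the structural step: justifying rigorously that the vertex sets covered by maximum-weight matchings of fixed cardinality form the bases of a matroid. The difficulty lies in handling the blossom constraints and the choice of $\mu$ when several cardinalities are tied under the Lagrangian. I would first try to prove the basis-exchange property directly: given $M_1, M_2 \in \mathscr{C}$, use the symmetric difference $M_1 \triangle M_2$. By complementary slackness it consists of alternating paths and cycles of zero reduced weight. Swapping along a single path exchanges the endpoints while preserving both weight and cardinality. When the swap changes cardinality, pairs of paths are needed, and this is where the fixed-cardinality and weight-optimality arguments must be combined carefully.
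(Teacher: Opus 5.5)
\textbf{Gap: the matroid step fails below maximum cardinality.} Your structural step is false in general once the target cardinality $k$ is below $\nu(G)$. The sentence ``restricting to fixed cardinality \dots\ keeps a matroid-base family'' conflates two different families:
\begin{itemize}
\item the size-$2k$ independent sets of the matching matroid, i.e.\ sets \emph{coverable} by some matching (the bases of a truncation);
\item vertex sets of $k$-matchings, i.e.\ sets covered \emph{exactly}.
\end{itemize}
Take a star with center $c$ and leaves $x,y,z$, plus a disjoint edge $uv$, all weights equal, and $k=1$. This lies within the paper's range $\mu\ge\frac12\nu(G)$, since $\nu(G)=2$. Then $\mathscr{C}=\{cx,cy,cz,uv\}$, and:
\begin{itemize}
\item basis exchange fails for $\{c,x\}$ and $\{u,v\}$ at $c$, because neither $xu$ nor $xv$ is an edge;
\item $r$ is not submodular: $r(\{x,u\})+r(\{x,v\})=2<3=r(\{x,u,v\})+r(\{x\})$;
\item your symmetric-difference plan breaks exactly here: $\{cx\}\triangle\{uv\}$ is two odd paths, swapping one changes the cardinality, and swapping both just returns $\{uv\}$.
\end{itemize}

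This is not a technicality you can patch: in this example no Lorenz dominant lottery exists. The minimum coverage is uniquely maximized by $p_{cx}=p_{cy}=p_{cz}=p_{uv}=\tfrac14$. Listing coordinates as (center; leaves; $u,v$), this gives $\vec q=(\tfrac34;\tfrac14,\tfrac14,\tfrac14;\tfrac14,\tfrac14)$, whose five smallest entries sum to $\tfrac54$. By contrast, $p_{cx}=p_{cy}=p_{cz}=\tfrac16$, $p_{uv}=\tfrac12$ gives five smallest entries summing to $\tfrac32$. Accordingly, Nash welfare is maximized at $(\tfrac23;\tfrac29,\tfrac29,\tfrac29;\tfrac13,\tfrac13)$, not at the leximin point. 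With a second disjoint edge and $k=2>\frac12\nu(G)$ there is again no Lorenz dominant lottery. So the fixed-cardinality part of the statement, read as simultaneous optimality, is false in general, and no route can prove it. The paper's dummy-vertex sketch for this case does not address the issue either.

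\textbf{What survives: the maximum-cardinality case.} For maximum-weight maximum-cardinality matchings your plan is sound and genuinely different from the paper's. Take $M_1,M_2\in\mathscr{C}$. Every path of $M_1\triangle M_2$ has equally many edges from each matching, since otherwise one of them could be augmented. Every component has zero net weight, since otherwise one of them could be improved. So for $x\in V(M_1)\setminus V(M_2)$, the path $P$ starting at $x$ ends at some $y\in V(M_2)\setminus V(M_1)$, and $M_1\triangle P\in\mathscr{C}$ covers $V(M_1)-x+y$. This is basis exchange, and Fujishige's lexicographically optimal base then yields Lorenz dominance directly, via polymatroid theory.

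The paper instead does the following:
\begin{itemize}
\item shrinks the Edmonds--Gallai blossoms;
\item builds a doubled bipartite graph with perturbed weights $s_z+\epsilon w(u,z)$;
\item reads off admissible edges from Hungarian duals;
\item reruns \autoref{alg:max-cardinality} on those edges.
\end{itemize}
This buys an explicit flow-based combinatorial algorithm but never exhibits the underlying matroid.

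Your assertion that each ratio step becomes a bipartite parametric flow on the contracted tight graph is unsupported. Forced coverage of vertices with $y_v>0$, and which vertex of a contracted blossom may be exposed, fall outside that setting. You can replace this step by submodular minimization with the rank oracle $r(S)$. Each evaluation of $r(S)$ is one weighted-matching computation with lexicographically perturbed weights, so the algorithm stays strongly polynomial.
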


\begin{theorem}
    \label{thm:2-paths-polytime}
    For $\ell \leq 2$, there is a polynomial time algorithm to find the maximum-cardinality packing of 2-cycles and $\ell$-paths starting from NDDs.
    However, for $\ell \geq 3$, finding a maximum-cardinality packing of $\ell$-paths starting from NDDs is NP-hard.
    \label{thm:3-paths-hard}
\end{theorem}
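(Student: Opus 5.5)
Both parts of \autoref{thm:3-paths-hard} reduce to known results. The polynomial part is a reduction to matching; the hardness part is a reduction from a packing problem shown NP-complete by \citeauthor{Brewster_Hell_Pantel_Rizzi_Yeo_2003}.

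\textbf{Polynomial part ($\ell\le 2$).} I would reduce to maximum-weight matching in an auxiliary undirected graph. Throughout, an $\ell$-path has $\ell$ arcs and starts at an NDD.

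For $\ell=1$ the construction is direct. Every 2-cycle $u\leftrightarrow v$ between pairs becomes an undirected edge $uv$. Every arc from an NDD $d$ to a pair $v$ becomes an edge $dv$. A vertex-disjoint packing of 2-cycles and 1-paths is then exactly a matching, so a maximum-cardinality matching (Edmonds) covers the most vertices.

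For $\ell=2$ a structure can cover three vertices $d\to u\to w$. This is the case of packing $\{P_2,P_3\}$-type structures, which \citeauthor{Brewster_Hell_Pantel_Rizzi_Yeo_2003} show is polynomial. To make the proof self-contained, I would encode it as a degree-constrained subgraph ($b$-matching) problem:
\begin{itemize}
\item Each NDD $d$ gets capacity 1 on its outgoing arcs.
\item Each pair $u$ gets a ``receiving'' copy $u^{in}$ and a ``giving'' copy $u^{out}$, joined by a gadget edge. The gadget allows $u$ to pass a kidney onward only if $u$ itself receives one from an NDD.
\item The objective counts covered vertices.
\end{itemize}
Since general $b$-matching with such gadgets reduces to matching via Tutte's construction, this gives a polynomial algorithm.

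\textbf{Main obstacle.} The delicate step is checking that the gadget gives the right correspondence. Every optimal matching should decode into valid 2-paths: no pair forwards a kidney without receiving one, and no chain reaches length 3. Conversely, every packing should encode with the same number of covered vertices. If the gadget becomes unwieldy, I would fall back on the cited result of \citeauthor{Brewster_Hell_Pantel_Rizzi_Yeo_2003} for packing 1- and 2-paths, adapted so that 2-cycles act as 1-path-like edges.

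\textbf{Hardness part ($\ell\ge 3$).} I would reduce from the NP-complete problem of deciding whether an undirected graph $H$ has a perfect packing by paths with exactly $\ell$ vertices (equivalently, $\ell-1$ edges); \citeauthor{Brewster_Hell_Pantel_Rizzi_Yeo_2003} show this is NP-complete for $\ell\ge 3$.

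The reduction works as follows:
\begin{enumerate}
\item Orient the instance so that chains must start at NDDs. For each vertex $x$ of $H$, create an NDD $d_x$ and a pair $x$. Add the arc $d_x\to x$, and add arcs $x\to y$ and $y\to x$ for each edge $xy$ of $H$.
\item Avoid spurious 2-cycles, since the model disallows self-loops but 2-cycles would otherwise appear for every edge. I would either forbid cycles altogether (the theorem concerns path packing only) or subdivide edges to break the 2-cycles.
\item Choose the target cover count so that achieving it forces every chain to use exactly $\ell$ arcs. Such chains correspond to the path pieces of a perfect packing of $H$.
\end{enumerate}
The key check is the counting argument: each chain covers at most $\ell$ pairs plus one NDD, so reaching the target forces all pairs to be covered by maximal chains. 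Membership in NP is immediate.
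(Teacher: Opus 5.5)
\paragraph{Hardness part.} Your hardness reduction fails as constructed. In this model chains have length \emph{at most} $\ell$: the $\ell\le 2$ case already mixes 1- and 2-paths, and the paper's own gadget uses chains of lengths $3$ and $L$ together. So in your instance the packing $\{d_x\to x : x\in V(H)\}$ covers every pair and every NDD, and every instance is a yes-instance. No target count rescues this. Counting NDDs only rewards short chains, because each chain consumes one NDD. Your bound ``each chain covers at most $\ell$ pairs'' forces chains with exactly $\ell$ arcs only if at most $|V(H)|/\ell$ NDDs are available.

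A repair is to create exactly $|V(H)|/\ell$ NDDs, each with an arc to every pair, keep the bidirected arcs for edges of $H$, and admit only chains. Covering all pairs then forces every NDD to start a chain through exactly $\ell$ pairs, which traces an $\ell$-vertex path of $H$. So such a packing exists iff $H$ has a perfect packing by $\ell$-vertex paths. That problem is NP-complete for $\ell\ge 3$ by Kirkpatrick and Hell (Brewster et al.\ study digraphs). The paper instead reduces from 3-dimensional matching with a dedicated gadget.

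\paragraph{Polynomial part.} For $\ell=2$ the decisive construction is missing; your $\ell=1$ case agrees with the paper. Splitting $u$ into $u^{in}$ and $u^{out}$ joined by one edge can only enforce ``$u$ gives only if $u$ receives.'' NDD arcs and pair arcs both attach to $u^{in}$, so the gadget cannot tell them apart, and its matchings encode chains and cycles of any length.

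Write $x_N$, $x_C$, $x_P$ for the numbers of used NDD-in arcs, used 2-cycle or pair-in arcs, and used pair-out arcs at $u$. The correct local condition is $x_N+x_C\le 1$ together with $x_P\le x_N$. The second inequality couples two edge types rather than bounding a degree. It is therefore not a $b$-matching constraint, and Tutte's reduction does not apply as you state it. Encoding it needs separate ports with exact-degree auxiliary vertices, giving a mixed perfect/non-perfect matching problem. You would also need a proof that optimal solutions decode to valid packings, which is exactly the step you defer.

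Your fallback does not close the gap either. Brewster et al.\ pack directed paths of one or two arcs that may start at any vertex. Here a 2-arc path must start at an NDD, and a lone pair--pair arc is admissible only as a 2-cycle; adapting their result is the whole difficulty.

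The paper handles $\ell=2$ through a Berge/Hall-type structural lemma, proved via Aharoni--Haxell: a packing is maximum $\Leftrightarrow$ it has no augmenting configuration $\Leftrightarrow$ a deficiency certificate exists. This lemma drives an augmenting search.
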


\autoref{thm:max-cardinality} was proved by \cite{Garcia-Soriano_Bonchi_2020,Li_Liu_Huang_Tang_2014}, but we give an alternate version of the proof in \autoref{sec:unweighted} which characterizes a more general set of acceptable packings that admit a Lorenz-dominant solution. In the remainder of
\autoref{sec:poly-algs-weights-paths}, we prove \autoref{thm:max-weight} for node weights and edge weights via reductions to the unweighted matching problem, and  \autoref{thm:2-paths-polytime} via a structural lemma analogous to Berge's and Hall's theorems in bipartite matching.
\section{Strongly Polynomial Algorithms for 2-cycles and 2-paths}
\label{sec:poly-algs-weights-paths}
If the set of acceptable packings are matchings (or more generally, independent sets of a matroid), then we can solve \eqref{lp:fairness} in strongly polynomial time for any concave increasing fairness metric $F$. We review known results for when the set of acceptable packings is the set of maximum-cardinality matchings, then extend those to maximum-weight maximum-cardinality matchings and maximum-weight matchings of fixed cardinality. Finally, we consider adding short paths. The proof of \autoref{thm:max-cardinality} was proved in \cite{Li_Liu_Huang_Tang_2014,Garcia-Soriano_Bonchi_2020}, albeit with different techniques. The proofs all crucially depend on the Edmonds-Gallai structure theorem, which characterizes the set of all maximum-cardinality matchings in nonbipartite graphs. 

\begin{proposition}[Edmonds-Gallai decomposition; see \cite{Lovasz_Plummer_1986}]
    \label{thm:gallai-edmonds}
        If $G=(V,E)$ is a graph, then $V$ can be decomposed into a set of deficient, adjacent, and critical vertices as follows. Let $D(G)$ be the set of all vertices not covered by at least one maximum matching. Let $A(G)$ be the set of vertices in $V(G) - D(G)$ adjacent to at least one point in $D(G)$. Finally let $C(G) = V(G) \setminus A(G) \setminus D(G)$. Then 
        \begin{enumerate}
            \item the components (blossoms) of the subgraph induced by $D(G)$ are factor-critical,
            \item if $M$ is any maximum matching of $G$, it contains a near-perfect matching of each component of $D(G)$, a perfect matching of each component of $C(G)$ and matches all vertices of $A(G)$ with vertices in distinct components of $D(G)$, and
            \item the matching number is $\nu(G) = \frac12 \left( |V(G)| - c(D(G)) + |A(G)| \right)$, where $c(D(G))$ denotes the number of components of the graph spanned by $D(G)$.
        \end{enumerate}
\end{proposition}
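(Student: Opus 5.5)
The plan is the standard proof through the Tutte--Berge formula and alternating paths, reading $D(G)$, $A(G)$, $C(G)$ exactly as in the statement.

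\textbf{Step 1: the formula in part 3 is an upper bound.} Let $M$ be a maximum matching. Removing the vertices of $A(G)$ leaves each component of $G[D(G)]$ as a union of components of $G - A(G)$. This holds because $C(G)$ has no neighbours in $D(G)$, by the definition of $A(G)$. A matching can save at most $|A(G)|$ of the odd components through edges into $A(G)$. So once each component of $D(G)$ is known to be odd, the Tutte--Berge bound gives
\[ \nu(G) \le \tfrac12\bigl(|V| - c(D(G)) + |A(G)|\bigr). \]

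\textbf{Step 2: Gallai's lemma for part 1.} If $H$ is connected and no vertex of $H$ is covered by every maximum matching of $H$, then $H$ is factor-critical. I will prove this by the usual argument. Suppose a maximum matching $M$ misses two vertices $u,v$. Pick such a pair, and a matching missing them, so that $\mathrm{dist}(u,v)$ is minimal. Take an interior vertex $w$ on a shortest $u$--$v$ path. A maximum matching $M'$ missing $w$ exists by hypothesis. Comparing $M$ and $M'$ through their symmetric difference produces either an augmenting path or a closer deficient pair, and both are contradictions. Hence every maximum matching of $H$ misses exactly one vertex.

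To apply the lemma to a component $K$ of $G[D(G)]$, I need the stability fact: $D(G-a)=D(G)$ for $a \in A(G)$, and the same holds for the other two classes. Removing $a$ and inducting on $|A(G)|$ reduces to the case $A(G)=\emptyset$. In that case each component $K$ of $D(G)$ is a component of $G$, and every vertex of $K$ is missed by some maximum matching of $K$. Gallai's lemma then says $K$ is factor-critical, so in particular $|K|$ is odd.

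\textbf{Step 3: the stability lemma.} I expect this to be the main obstacle. The claim is that for $a \in A(G)$ we have $\nu(G-a)=\nu(G)-1$ and the three classes are preserved. Every maximum matching covers $a$, which gives the first equality. For preservation, take $u\in D(G)$ with a maximum matching $M$ missing $u$. Let $d \in D(G)$ be adjacent to $a$, and let $M'$ be a maximum matching missing $d$. Using alternating paths in $M \triangle M'$, I will produce a maximum matching that misses $u$ and covers $a$ via the edge $ad$. Deleting $a$ from it shows $u \in D(G-a)$. The reverse inclusions follow by lifting matchings of $G-a$ with the edge $ad$ for a suitable $d$. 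This is the step where I will be most careful with case analysis on alternating paths.

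\textbf{Step 4: lower bound and part 2.} Build a matching of the size in part 3 as follows. After Step 3, apply Hall's theorem to the bipartite graph between $A(G)$ and the components of $D(G)$. Deficiency in Hall's condition would contradict the definition of $D(G)$ via a Tutte-set argument. Match each vertex of $A(G)$ to a distinct component. Complete each component with a near-perfect matching, which exists because the component is factor-critical. For $C(G)$, note that $G[C(G)]$ must have a perfect matching, since otherwise some vertex of $C(G)$ would be deficient.

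Counting the edges gives equality in the formula of part 3. Comparing any maximum matching against this tight count forces the structure in part 2. It must match all of $A(G)$ into distinct components of $D(G)$. It must be near-perfect on each component of $D(G)$. It must be perfect on $C(G)$. Any slack anywhere would drop the size below $\nu(G)$.
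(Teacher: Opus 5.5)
The paper gives no proof of this proposition; it cites Lov\'asz--Plummer. Your outline follows the standard route found there: Gallai's lemma, then the stability lemma for $a\in A(G)$, then induction on $|A(G)|$, then a tight count. Steps~1 and~2 are fine. The gap is in Step~3, which is the crux, and there you have the difficulty of the two inclusions reversed.

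The inclusion $D(G)\subseteq D(G-a)$ needs no alternating paths. A maximum matching of $G$ missing $u$ covers $a$, since $a\notin D(G)$. Deleting its edge at $a$ leaves a matching of $G-a$ of size $\nu(G)-1=\nu(G-a)$ that still misses $u$. The intermediate claim you plan to prove there is in fact false for an arbitrary neighbor $d\in D(G)$ of $a$. In the path $d_1a_1d_2a_2d_3$ one has $A(G)=\{a_1,a_2\}$ and $D(G)=\{d_1,d_2,d_3\}$. Yet the only maximum matching containing $a_1d_2$ is $\{a_1d_2,a_2d_3\}$, so none misses $u=d_3$. For $u=d$ the claim is impossible outright.

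The real content is $D(G-a)\subseteq D(G)$, which you dispatch as ``lifting matchings of $G-a$ with the edge $ad$ for a suitable $d$.'' That fails as stated, because a maximum matching $N$ of $G-a$ missing $u$ may cover every neighbor of $a$ other than $u$. For example, take a triangle $dxy$ plus a pendant path $d\,a\,e$, so $A(G)=\{a\}$. Then $N=\{dx\}$ misses $u=e$, but neither $N+ad$ nor $N+ae$ is a matching of $G$ missing $e$. You must first reroute $N$, as follows.
\begin{itemize}
\item Fix $d\in D(G)$ adjacent to $a$ and a maximum matching $M$ of $G$ missing $d$. Suppose every maximum matching of $G$ covers $u$; in particular $u\neq d$.
\item Let $P$ be the component of $M\triangle N$ at $u$. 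If $P$ avoids $a$, it cannot have odd length, since it would then be $N$-augmenting inside $G-a$. So $M\triangle P$ is a maximum matching of $G$ missing $u$.
\item Otherwise $P$ runs from $u$ to $a$, so $d\notin P$. If $d$ is $N$-exposed, then $N+ad$ is a maximum matching missing $u$.
\item If $d$ is $N$-covered, the component $R$ of $M\triangle N$ at $d$ is an even path, since an odd one would be $M$-augmenting. It avoids $u$ and $a$, and $(N\triangle R)+ad$ is a maximum matching missing $u$.
\end{itemize}

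With this in hand, Step~4 also simplifies. Iterating $\nu(G-a)=\nu(G)-1$ gives $\nu(G)=|A(G)|+\nu(G-A(G))$. The graph $G-A(G)$ is the disjoint union of the factor-critical components of $D(G)$ and of $G[C(G)]$. The latter has a perfect matching because $C(G-A(G))=C(G)$. This yields part~3 directly, so the Hall's-theorem step and its unexplained ``Tutte-set argument'' are unnecessary. Your tight-count derivation of part~2 then goes through as planned.
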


In the remainder of the paper, we use the notation $D(G), A(G), \text{ and } C(G)$ to refer to subsets of vertices defined by the decomposition, and $\tilde{D}(G)$ for the factor-critical components of $D(G)$.

\subsection{Maximum-Cardinality Matchings}
\label{sec:unweighted}
To gain intuition into the problem, we first consider unweighted maximum-cardinality matchings (only packing two-cycles). We review results which parallel \cite{Roth_Sonmez_Utku_Unver_2005,Garcia-Soriano_Bonchi_2020,Li_Liu_Huang_Tang_2014} on the Lorenz dominance of the leximin solution in this regime and explain why it implies the following result.
\matchingpoly*

The idea is to construct a new graph based on the Edmonds-Gallai decomposition of the original graph, and repeatedly solve a circulation problem to determine and fix the maximin probabilities. First, recall that the Edmonds-Gallai decomposition characterizes the structure of all maximum-cardinality matchings in a graph, separating vertices that are matched in every maximum matching from those that are not in polynomial time (see \cite{Lovasz_Plummer_1986}).

In the remainder of this section, we assume that $C(G)$ is empty, as these vertices are matched in every maximum matching. Any set of maximum matchings we find can be augmented by adding a perfect matching within $C(G)$.

We claim that it suffices to find a fair lottery over acceptable packings restricted to nodes in $A(G)$ and $D(G)$.
Furthermore, that we can restrict our attention to the bipartite graph induced by $A(G)$ and the odd components of $D(G)$.
From the subgraph induced by $A(G)$ and $D(G)$, we construct a bipartite graph by removing all edges between vertices of $A(G)$, contracting all blossoms in $D(G)$ into pseudonodes, and removing any parallel edges. Call this graph $G'=(V',E')$, with vertex set $V'= A(G) \cup \Dtilde$ where $\Dtilde$ is the contracted components from $D(G)$, with size $s_z$ for all pseudonodes $z \in \Dtilde$. From \autoref{thm:gallai-edmonds}, we know vertices in $A(G)$ are matched in every maximum matching. Thus, in order to solve \eqref{lp:fairness}, we need only focus on covering the pseudonodes in $D(G)$. Given a fractional matching in $G'$, the following lemma shows we can find a set of maximum matchings in $G$ that achieve the same coverage probabilities.

\begin{lemma}
Given a graph $G$ and its Edmonds-Gallai decomposition $C(G) \cup A(G) \cup D(G)$, we can find a fair lottery over acceptable matchings in $G$--that is, a set of $\mathscr{C}, \vec{p}$--by finding a stochastic matrix $P = [p_{uz}]_{u \in A(G), z \in \Dtilde},$ where $p_{uz} = \sum_{C_i \in \mathscr{C}} \mathbbm{1}((u,z) \in C_i) \cdot p_i$ describe the probability that edge $(u,z)$ is included in the matching. The matrix $P$ is feasible if a set of $\mathscr{C}, \vec{p}$ exist such that $\vec{p}$ is a valid probability distribution and $\mathscr{C}$ consists of $k$ acceptable matchings. Furthermore, we claim we can find $\mathscr{C}, \vec{p}$ from $P$ in polynomial time.
\label{lem:stochastic-matrix}
\end{lemma}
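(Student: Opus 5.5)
The plan is to identify the feasible matrices $P$ with the points of the bipartite matching polytope of $G'$ that saturate every vertex of $A(G)$. Concretely, $P=[p_{uz}]$ must satisfy $p_{uz}\ge 0$ for $(u,z)\in E'$ and $p_{uz}=0$ otherwise, $\sum_{z} p_{uz}=1$ for every $u\in A(G)$, and $\sum_{u} p_{uz}\le 1$ for every pseudonode $z\in\Dtilde$. Necessity follows from \autoref{thm:gallai-edmonds}: every maximum matching matches each vertex of $A(G)$ to a vertex of a distinct component of $D(G)$. Contracting that matching therefore gives a matching of $G'$ saturating $A(G)$, and averaging these matchings over the lottery yields a $P$ with the stated row and column constraints. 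The coverage probability of a pseudonode $z$ is recovered as $\sum_u p_{uz}$. For each original vertex inside $z$, its coverage is $1$ if $z$ is hit by an $A(G)$ edge, and otherwise all but one vertex of $z$ is covered.

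For sufficiency, I would take a $P$ in this polytope and decompose it. Because $G'$ is bipartite, the polytope $\{x\ge 0: x(\delta(u))=1\ \forall u\in A(G),\ x(\delta(z))\le 1\ \forall z\in\Dtilde\}$ has a totally unimodular constraint matrix and is therefore integral. So $P$ is a convex combination of matchings $M_1,\dots,M_k$ of $G'$ that saturate $A(G)$. To compute the combination in polynomial time, I would add slack so the matrix becomes doubly stochastic: pad with $|\Dtilde|-|A(G)|$ dummy rows that are fully adjacent. Then I would apply the Birkhoff--von Neumann decomposition, which repeatedly finds a perfect matching in the support via bipartite matching and subtracts the minimum entry along it. This gives $k\le |E'|+|V'|^2$ matchings, a polynomial number, with weights $\vec p$.

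Each $M_i$ must then be lifted to a maximum matching of $G$. For an edge $(u,z)\in M_i$, choose an original edge from $u$ to some vertex $x\in z$; one exists by the construction of $E'$. Since $z$ is factor-critical, $z-x$ has a perfect matching. For a pseudonode $z$ not covered by $M_i$, take a near-perfect matching of $z$ missing an arbitrary vertex; to equalize the individual vertices within $z$, one can instead rotate the missed vertex uniformly at random, which only refines the lottery. Adding a perfect matching of $C(G)$ completes the construction. The result has size $\frac12(|V|-c(D(G))+|A(G)|)$, because exactly $|A(G)|$ pseudonodes are hit and each of the remaining components leaves one vertex exposed. By part 3 of \autoref{thm:gallai-edmonds} this equals $\nu(G)$, so each lifted matching is a maximum matching and hence acceptable. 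Each near-perfect matching can be found by a single maximum-matching computation, so the whole procedure is polynomial.

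I expect the main obstacle to be this lifting step, together with keeping the correspondence between $P$ and the lottery exact. In particular, I must confirm two things: that every $A(G)$-saturating matching of $G'$ lifts to a maximum matching of $G$, which uses factor-criticality and the matching-number formula, and that edges inside $A(G)$ never occur in maximum matchings, which follows from part 2 of \autoref{thm:gallai-edmonds}. Once both hold, $\mathscr{C}=\{\text{lifts of } M_i\}$ with weights $p_i$ realizes exactly the edge probabilities $p_{uz}$.
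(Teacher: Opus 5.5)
Your proposal is correct and takes essentially the paper's route: both decompose $P$ into a convex combination of $A(G)$-saturating matchings of $G'$ by a Birkhoff--von Neumann-type argument. The differences are minor. You pad $P$ to a doubly stochastic matrix and apply the standard decomposition, whereas the paper applies the Lawler--Labetoulle decrementing-set construction directly to the rectangular matrix (row sums $1$, column sums at most $1$), which keeps every row tight. Your necessity argument and your lifting of each $M_i$ to a maximum matching of $G$, via factor-criticality and the formula for $\nu(G)$, fill in steps the paper's proof leaves implicit; the paper handles them only in the final steps of \autoref{alg:max-cardinality}.
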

\proof{}
    The proof follows from a generalization of the Birkhoff-von Neumann theorem, specifically the construction of Lawler \& Labetoulle as presented in Theorem 11.9 of \citeauthor{Lenstra_Shmoys} on minimizing makespan for open shop scheduling. They show that you can always find a decrementing set $S \subseteq P$ with exactly one element of $P$ in each tight row and column, and at most one element of $S$ in each slack row and column. Then they use $S$ to find a partial schedule of length $\delta > 0$ (chosen to keep the tight rows and columns tight), subtract $\delta S$ from $P$, and repeat until $P$ is all zeros, which occurs in a polynomial number of iterations. 
    
    Think of the vertices of $A(G)$ as machines and the pseudonodes of $\Dtilde$ as jobs, and the probability $p_{uz}$ as the processing time that job $u \in A(G)$ requires on machine $z \in \Dtilde$. 
    Each row satisfies $\sum_z p_{uz} = 1$, and each column satisfies $\sum_u p_{uz} < 1$, which means there is an optimal solution with makespan 1 that can be found in polynomial time, and the set of partial schedules is a set of matchings with probabilities determined by their lengths. Since tight rows stay tight, each machine is fully scheduled, so each vertex in $A(G)$ is included in each matching.
\endproof

It remains to argue that we can find such a stochastic matrix $P$ for any desired fairness metric. We focus first on the objective functions $F^{\text{LM}}$ and $F^{\text{UT}}$, and later show that our solution is also optimal for $F^{\text{NW}}$ and any other Schur-concave function if there is a Lorenz dominant solution.
We introduce a new graph based on \autoref{thm:gallai-edmonds} and define a circulation problem parametrized by $\lambda$ on it, which will turn out to be the maximin probability of the optimal solution.
Next, we determine the value of $\lambda$ along with the desired distribution on the components of $D(G)$ by studying the necessary conditions for feasibility in the circulation problem.

\paragraph{Circulation.} We construct a circulation parametrized by $\lambda$ by introducing a source vertex $s$ and sink vertex $t$ to the bipartite graph $G'=(A(G) \cup \Dtilde, E')$. The source $s$ has an arc pointing to every vertex in $A(G)$ with capacity 1. All arcs in $E'$ are directed from $A(G)$ to $\Dtilde$ with infinite capacity. There is an arc from every vertex in $z \in \Dtilde$ to the sink $t$ with infinite capacity and a lower bound of $\max\{0, s_z \lambda - s_z+1\}$ on the flow. Finally, we add an arc of infinite capacity from $t$ to $s$ to complete the construction. An example is shown in \autoref{fig:flow-ex}.

\begin{figure}[htbp!]
    \centering
    \vspace{-2em}
    \begin{subfigure}{0.45\textwidth}
        \centering
        \begin{tikzpicture}[scale=1.3, every node/.style={circle, draw, inner sep=2pt}]

    \node[draw=violet] (a1) at (1.5, 0) {$a_1$};
    \node[draw=teal] (a2) at (3, 0) {$a_2$};
    \node[draw=none] (A) at (0.25, 0) {$A(G)$};
    
    \node (d1) at (1, -1.5) {$d_1$};
    \node[draw=none] (dots) at (1.7, -1.5) {$\cdots$};
    \node (d5) at (2.4, -1.5) {$d_5$};
    \node[draw=none] (D) at (0.25, -1.5) {$D(G)$};
    
    \node (d6) at (3.4, -1.4) {$d_6$};
    \node (d7) at (3.65, -2) {$d_7$};
    \node (d8) at (4.35, -2) {$d_8$};
    \node (d9) at (4.6, -1.4) {$d_9$};
    \node (d10) at (4, -1) {$d_{10}$};

    \draw (d6) -- (d7);
    \draw (d7) -- (d8);
    \draw (d8) -- (d9);
    \draw (d9) -- (d10);
    \draw (d10) -- (d6);
    \draw (d7) -- (d10);
    \draw (d6) -- (d8);

    \draw[color=violet] (a1) -- (d1);
    \draw[color=violet] (a1) -- (1.5, -1.4);
    \draw[color=violet] (a1) -- (1.7, -1.4);
    \draw[color=violet] (a1) -- (1.9, -1.4);
    \draw[color=violet] (a1) -- (d5);

    \draw[color=teal] (a2) -- (d5);
    \draw[color=teal] (a2) -- (d6);
    \draw[color=teal] (a2) -- (d10);


    \draw (a1) -- (a2);
\end{tikzpicture}
    \end{subfigure}
    \hfill
    \begin{subfigure}{0.45\textwidth}
        \centering
        \begin{tikzpicture}[scale=0.9, every node/.style={circle, draw, inner sep=1pt}, every edge/.style={>=Latex, shorten >= 4pt}]
    \node[inner sep=3pt] (s) at (0, 1) {$s$};
    \node[draw=violet] (a1) at (2, 1.5) {$a_1$};
    \node[draw=teal] (a2) at (2, 0.5) {$a_2$};

    \node (d1) at (4, 2.2) {$d_1$};
    \node[draw=none] (d234) at (4, 1.7) {\bf $\vdots$};
    \node (d5) at (4, 1) {$d_5$};
    \node (d6) at (4, 0) {$d_6$};
    \node[inner sep=3pt] (t) at (6, 1) {$t$};
    
    \node[draw=none, fill=none] at (2, 3) {$A(G)$};
    \node[draw=none, fill=none] at (4, 3) {$\Dtilde$};
    
    \foreach \i in {1, 2} {
        \draw[->] (s) -- (a\i) node[midway,draw=none,fill=white,inner sep=0pt, sloped] {\scriptsize $\leq 1$};
    }
    \foreach \i in {1,5} {
        \draw[->] (d\i) -- (t);
    }
    \draw[->] (d234) -- (t) node[midway,rectangle, minimum height=1cm,draw=none,fill=white,inner sep=0pt, sloped] {\footnotesize $\geq \lambda$};
    \draw[->] (d6) -- (t) node[pos=0.85, draw=none, below, sloped,inner sep=0pt] {\hspace{-3.5em} \tiny $\geq (5\lambda-4)^+$};

    \foreach \i in {1,234,5} {
        \draw[color=violet,->] (a1) -- (d\i);
    }
    \foreach \i in {5,6} {
        \draw[color=teal,->] (a2) -- (d\i);
    }


    \draw[->,pink] (t) .. controls +(0,-2) and +(0,-2) .. (s);
\end{tikzpicture}
    \end{subfigure}
    \vspace{-2em}
    \caption{Edmonds-Gallai decomposition of a graph $G$ and the corresponding construction of the circulation problem.}
    \label{fig:flow-ex}
\end{figure}

The next question is to determine when the circulation problem is feasible, which is answered by the following theorem.
\begin{proposition}[\citeauthor{Ahuja_Magnanti_Orlin_1993}]
    A circulation problem with nonnegative lower bounds on arc flows $l_{ij}$ and upper bounds $u_{ij}$ is feasible if and only if for every set $S$ of nodes
    $$\sum_{(i,j) \in (\bar{S},S)} l_{ij} \leq \sum_{(i,j) \in (S,\bar{S})} u_{ij},$$
    where $\bar{S}$ denotes the complement of $S$ in the vertex set.
    \label{thm:circulation}
\end{proposition}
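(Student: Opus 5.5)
This is Hoffman's circulation theorem, stated with nonnegative lower bounds $l_{ij}$ and upper bounds $u_{ij}$. I would prove it by reducing to max-flow/min-cut. Necessity is immediate from flow conservation. If $x$ is a feasible circulation, then for any node set $S$ the net flow into $S$ is zero. Hence
$$\sum_{(i,j)\in(\bar S,S)} x_{ij} = \sum_{(i,j)\in(S,\bar S)} x_{ij}.$$
The left side is at least $\sum_{(\bar S,S)} l_{ij}$ and the right side is at most $\sum_{(S,\bar S)} u_{ij}$, which gives the stated inequality.

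For sufficiency I would use the standard lower-bound elimination. Substitute $x_{ij}=l_{ij}+y_{ij}$ with $0\le y_{ij}\le u_{ij}-l_{ij}$; this requires $l_{ij}\le u_{ij}$ on each arc, which follows from the condition. Each node $i$ then acquires an imbalance $b(i)=\sum_{(j,i)} l_{ji}-\sum_{(i,j)} l_{ij}$, and these imbalances sum to zero over all nodes. Add a super-source $s^*$ with arcs $(s^*,i)$ of capacity $b(i)$ for every $i$ with $b(i)>0$. Add a super-sink $t^*$ with arcs $(i,t^*)$ of capacity $-b(i)$ for every $i$ with $b(i)<0$. A feasible circulation exists if and only if the maximum $s^*$-$t^*$ flow saturates all source arcs, i.e.\ has value $B=\sum_{b(i)>0} b(i)$. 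By max-flow min-cut, it suffices to show that every $s^*$-$t^*$ cut has capacity at least $B$.

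The main step is the cut computation, which is where the hypothesis enters. Take a cut $\{s^*\}\cup T$ versus its complement, with $T\subseteq V$, and set $S=\bar T$. Its capacity is
$$\sum_{i\in S,\,b(i)>0} b(i) \;+\; \sum_{i\in T,\,b(i)<0} (-b(i)) \;+\; \sum_{(T,S)} (u_{ij}-l_{ij}).$$
Subtracting $B$ leaves $-\sum_{i\in T} b(i) + \sum_{(T,S)}(u_{ij}-l_{ij})$. Arcs internal to $T$ contribute to $b$ with opposite signs and cancel, so $\sum_{i\in T} b(i)=\sum_{(S,T)} l_{ij}-\sum_{(T,S)} l_{ij}$. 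The $l$ terms on $(T,S)$ arcs then also cancel, and the difference reduces to
$$\sum_{(T,S)} u_{ij}-\sum_{(S,T)} l_{ij}.$$
This is nonnegative by the hypothesis applied to the set $T$, since arcs in $(\bar T,T)=(S,T)$ carry the lower bounds and arcs in $(T,\bar T)=(T,S)$ carry the upper bounds. Hence every cut has capacity at least $B$.

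Finally, unwinding the reduction: a max flow of value $B$ saturates every arc at $s^*$ and $t^*$, so the resulting $y$ exactly cancels each imbalance $b(i)$. Therefore $x=l+y$ satisfies flow conservation together with $l\le x\le u$, and is a feasible circulation. The only care needed is to keep the sign conventions of $b$ and of the cut orientation consistent; the rest is bookkeeping.
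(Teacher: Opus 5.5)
Your reduction to max-flow/min-cut is the standard proof of Hoffman's circulation theorem. The paper gives no proof of its own and simply cites Ahuja--Magnanti--Orlin, who argue in essentially the same way. Your cut computation is correct: for the cut $\{s^*\}\cup T$, the capacity minus $B$ does reduce to $\sum_{(T,S)}u_{ij}-\sum_{(S,T)}l_{ij}$, which is the hypothesis applied to $T$.

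There is one false step. You claim that $l_{ij}\le u_{ij}$ on each arc ``follows from the condition,'' and it does not. Take two nodes $i,j$ with arcs $(i,j)$, $l_{ij}=2$, $u_{ij}=1$, and $(j,i)$, $l_{ji}=0$, $u_{ji}=2$. For $S=\{j\}$ the condition reads $2\le 2$, and for $S=\{i\}$ it reads $0\le 1$. The sets $\emptyset$ and $V$ are trivial. Yet no circulation exists, because $[l_{ij},u_{ij}]$ is empty. The cut condition only compares the lower bound of an arc entering $S$ with the total capacity leaving $S$, and other arcs can supply that capacity.

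Your reduction shows how this fails. The arc $(i,j)$ would get capacity $u_{ij}-l_{ij}=-1$. If you ignore that, the max flow $s^*\to j\to i\to t^*$ has value $B=2$, but it produces $x_{ij}=2>u_{ij}$. So $l\le u$ is an additional hypothesis, implicit in the standard formulation of a circulation problem, and the statement as literally written is false without it. You should assume it explicitly rather than claim to derive it; with that change your proof is complete.

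In the paper's application this is harmless. The only arcs with positive lower bounds are the arcs $(z,t)$, and these have infinite capacity.
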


Furthermore, this condition characterizes the maximal feasible $\lambda$. We use the notation $N(v)$ to denote the set of neighbors to $v \in V$, and in directed graphs $N^-(v) \text{ and }N^+(v)$ for the set of neighbors of incoming and outgoing edges of vertex $v$, respectively. 
\begin{lemma}
    \label{lem:circ-tight-inopt}
    $\lambda$ is maximal if and only if the following equation holds and is tight for at least one subset $B \subseteq \Dtilde$.
    \begin{equation}
    \label{eq:circulation-feasibility}
    \lambda \leq \frac{|N^-(B)| + \sum_{z \in B} (s_z - 1)}{\sum_{z \in B} s_z} \quad \forall B \subseteq \Dtilde.
\end{equation}
\end{lemma}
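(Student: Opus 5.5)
We apply \autoref{thm:circulation} to the circulation network built from $G'$ and read off which cuts can be violated. First we reduce to the regime where the lower bound on every arc $(z,t)$ is $s_z\lambda - s_z + 1$ rather than $0$. Any $\lambda \le 1$ satisfies $s_z\lambda - s_z + 1 \le \lambda \le 1$. Since we only care about the largest feasible $\lambda$, and $\lambda$ is a coverage probability, we may assume $\lambda \in [0,1]$. If some $s_z\lambda - s_z+1$ is negative, then the lower bound $0$ is looser, and the constraint for any $B$ containing such a $z$ is implied by the constraint for $B\setminus\{z\}$. To see this, note that dropping $z$ from $B$ lowers the numerator by at most $s_z-1$ plus its private in-neighbours, and lowers the denominator by $s_z$. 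So it suffices to work with the affine lower bounds $\ell_z(\lambda)=s_z\lambda-s_z+1$.

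Next we enumerate cuts $S$. Every arc out of $\Dtilde$, out of $t$, and out of $A(G)$ except the source arcs has infinite capacity. A cut $S$ gives a finite right-hand side only if no infinite arc leaves $S$. This forces the following: if $t\in S$ then $s\in S$; if $z\in S\cap\Dtilde$ then $t\in S$; and if $u\in S\cap A(G)$ then $N(u)\subseteq S$. The only lower bounds are on arcs $(z,t)$. So the left-hand side is nonzero only if $t\in S$ and $z\notin S$ for some $z$. Then $s\in S$ as well, and the finite capacity leaving $S$ is $|A(G)\setminus S|$ from source arcs. Let $B=\Dtilde\setminus S$. The arcs entering $S$ that carry lower bounds are exactly $(z,t)$ for $z\in B$. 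The constraint that no infinite arc leaves $S$ forces every $u\in A(G)$ with a neighbour in $B$ to lie outside $S$, that is, $N^-(B)\subseteq A(G)\setminus S$. The binding cuts therefore take $A(G)\setminus S = N^-(B)$, and the feasibility condition becomes
$$\sum_{z\in B}(s_z\lambda - s_z+1)\le |N^-(B)| \quad\forall B\subseteq\Dtilde,$$
which rearranges to \eqref{eq:circulation-feasibility}. We also check that the remaining cuts, for instance those with $t\notin S$, impose no constraint because their left-hand side is zero. This establishes that feasibility of the circulation at $\lambda$ is equivalent to \eqref{eq:circulation-feasibility}.

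For maximality, each constraint in \eqref{eq:circulation-feasibility} is an upper bound on $\lambda$ over finitely many $B$. The feasible set of $\lambda$ is therefore $(-\infty,\min_B r(B)]$, where $r(B)$ denotes the right-hand side. Hence $\lambda$ is maximal if and only if it equals this minimum, that is, if and only if all inequalities hold and at least one is tight. We must also check that the maximal value stays within $[0,1]$ so that the earlier reduction is consistent. This holds because $r(B)\le 1$ whenever $|N^-(B)|\le|B|$, and such a $B$ exists: taking $B=\Dtilde$ works, since \autoref{thm:gallai-edmonds} gives $|A(G)|<c(D(G))$ under the no-perfect-matching assumption.

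The step we expect to be the main obstacle is the careful cut enumeration combined with the handling of the $\max\{0,\cdot\}$ in the lower bound. The point to verify there is that the truncated constraints never produce a stricter bound than the affine ones at the optimum.
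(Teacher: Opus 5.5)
Your proof is correct and takes the same route as the paper: apply Hoffman's condition (\autoref{thm:circulation}), discard the cuts that have an infinite outgoing arc or no incoming lower-bounded arc, and read off \eqref{eq:circulation-feasibility} from the remaining cuts, which have $s,t\in S$, $\Dtilde\setminus S=B$ and $A(G)\setminus S=N^-(B)$; maximality then gives tightness. You are more careful than the paper's sketch about the $\max\{0,\cdot\}$ truncation and the converse direction. Two small remarks: the $[0,1]$ check is unnecessary, since the truncated and affine constraint families are equivalent for every real $\lambda$ (apply the affine constraint to $\{z\in B: s_z\lambda-s_z+1>0\}$ and use monotonicity of $N^-$). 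Also, in your ratio remark the bound you actually need is that dropping $z$ lowers the numerator by \emph{at least} $s_z-1$, not at most.
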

\proof{}
    Observe that the condition of \autoref{thm:circulation} is trivially satisfied for any subset $S$ where there are no lower bounds on the arc flows from $S$ to $\bar{S}$ or if there are any arcs $(a,d)$ where $a \in S$ and $d \in \bar{S}$. It directly follows from \autoref{thm:circulation} that \eqref{eq:circulation-feasibility} holds for $\lambda$ to be feasible, and furthermore \eqref{eq:circulation-feasibility} must be tight for at least one subset $B \subseteq \Dtilde$ by maximality of $\lambda$.
\endproof

Furthermore, we can efficiently determine the maximal $\lambda$ (without solving the circulation problem). We define a \emph{class} of vertices $S \subseteq \Dtilde$ such that each $v \in S$ is adjacent to the same subset of vertices in $A(G)$. That is, there exists $S_A \subseteq A(G)$ such that $N^-(v) = S_A$ for all $v \in S$. The number of classes is at most the number of odd components in $D(G)$, $|\tilde{D}(G)| \leq n$. From $S_A$, we find $S_{\tilde{D}} = \{z \in \Dtilde: N^-(z) \subseteq S_A\}$. We call this pair of vertices a \textit{block}, so that the set of all blocks is defined as
$$\mathscr{B}= \bigcup_{z \in \Dtilde} \big\{(S_A,S_{\tilde{D}}): S_A = N^-(z), S_{\tilde{D}} = N^+(S_A) \big\}$$
The subgraph induced by each block forms a bipartite graph. Notice that the classes form a laminar family, meaning that there are at most $n$ blocks, as there can be at most one block for each class. Thus it suffices to set 
\begin{equation}
    \label{eq:set-lambda}
    \lambda = \min_{\mathscr{B}} \frac{|S_A| + \sum_{z \in S_{\tilde{D}}} (s_z - 1)}{\sum_{z \in S_{\tilde{D}}} s_z}.
\end{equation}

\begin{algorithm}
    \caption{Sampling from a fair lottery over maximum-cardinality matchings}
    \label{alg:max-cardinality}
    \begin{algorithmic}[1]
        \Require{An undirected graph $G=(V,E)$}
        \Ensure{A maximum-cardinality matching drawn from the leximin optimal probability distribution $\vec{p}$ over maximum-cardinality matchings}
        \State Find the Edmonds-Gallai decomposition $C(G), A(G), D(G)$ of $G$.
        \State Construct the bipartite graph $G' = (A(G) \cup \Dtilde, E')$ as described above.
        \State $H \gets \emptyset$ \Comment{the partition of $G'$ into blocks}
        \While{$G'$ is not empty}
            \State Find the set of blocks $\mathscr{B}$ of $G'$. 
            \State Set $\lambda$ as in \eqref{eq:set-lambda} and let $B$ be the (union of) block(s) that attain(s) the minimum. Remove $B$ from $G'$, and add $B$ to $H$.
        \EndWhile
        \State Sample a matching. For each $u \in A(G)$, generate a uniform random number $r_u \in \{1, \ldots, \textsf{deg}_H(u)\}$, where $\textsf{deg}_H(u)$ is the degree of $u$ in $H$. Match $u$ to the $r_u$-th neighbor of $u$ in $H$. For each $z \in \Dtilde$ not matched by any $u \in A(G)$, generate another uniform random number $r_z \in \{1, \ldots, s_z\}$ and generate the perfect matching found when the $r_z$-th vertex of the factor critical component represented by $z$ is removed.
         \State Augment the matching in $\mathscr{C}$ with a perfect matching within $C(G)$.
    \end{algorithmic}
\end{algorithm}
We are now ready to prove the full theorem.
\proof[Proof of \autoref{thm:max-cardinality}.]
We describe the full algorithm in \autoref{alg:max-cardinality}. The algorithm runs in polynomial time, as the Edmonds-Gallai decomposition can be found in polynomial time by \autoref{thm:gallai-edmonds} and so can finding the matchings at the end, and the while loop for finding blocks takes $O(n^2)$ time. 
Instead of sampling from the distribution, we could also explicitly construct a stochastic matrix $P$, then use the algorithm from \autoref{lem:stochastic-matrix} to find a probability distribution $\vec{p}$ from $P$ over $\mathscr{C}$ with a polynomial size support. The algorithm describes how to sample from the distribution because it is more efficent, but $P$ could also be constructed explicitly as follows.
Initialize $P$ to be the all zeros matrix of size $|A(G)| \times |\Dtilde|$.
For each factor critical component $z \in B_{\Dtilde}$, there are $s_z$ matchings that cover all but one vertex in $z$. Enumerate the matchings and track the number of times each edge $(z_1,z_2)$ is used, then add that number divided by $s_z$ to $p_{z_1,z_2}$.
Run a flow on the bipartite graph with edges only among the blocks to determine the $p_{uz}$ values for $u \in A(G), z \in \Dtilde$. Add these values to $P$, where $z$ can be any vertex in the component represented by pseudonode $z$ which is adjacent to $u$.
Finally, augment each matching with a perfect matching within $C(G)$.
The correctness of \autoref{alg:max-cardinality} follows from the optimality of the maximin solution at each iteration of the while loop, according to \autoref{lem:circ-tight-inopt}.
Finally, we appeal to \autoref{lem:leximin-nash} to prove the leximin solution is optimal for all Schur-concave fairness functions.
\endproof

\subsection{Maximum-Weight Maximum-Cardinality 2-Cycles}
\label{subsec:weights} 
We first restrict our attention to only 2-cycles (equivalent to thinking about matching in an undirected graph) and consider two types of weights: node and edge weights, which may be used to represent factors such as urgency for patients, compatibility or transplant success probability between two patient-donor pairs, or to explicitly incentivize including hard-to-match patients. For node weights, there is a weight $w_v$ on each vertex $v \in V$, and for edge weights, there is a weight $w(u,z)$ for each edge $(u,z)$. The set of acceptable matchings are maximum-weight maximum-cardinality matchings. We will reduce the weighted setting back to the unweighted setting to prove a strongly polynomial algorithm exists for any fairness objective of interest.

For the edge-weighted setting, we define the bipartite graph $H$ as follows.
Given the original graph $G$, find the Edmonds-Gallai decomposition and remove edges between vertices of $A(G)$. Shrink all blossoms into pseudonodes $z \in \tilde{D}(G)$, with size $s_z$. Let the edge weights be $w(u,z)=s_z + \epsilon w(u,z)$ for all $(u,z) \in E$. WLOG only take the (any) edge inducing the highest weight of $w(u,z)$ if there are parallel edges between any pair of $u,z$. Make two copies of this graph, with the partition $L_1 \cup R_1$ in the first and $L_2 \cup R_2$ in the second graph. Then, add edges between corresponding $z_1 \in R_1$ and $z_2 \in R_2$ originating from the same pseudonode, with weight $2(s_z - 1 + \epsilon w_z)$. We claim there is a bijection between $G$ and $H$.
\begin{lemma}
    \label{lem:max-weight-reduction}
    The problem of finding a set of maximum-weight maximum-cardinality matchings in $G$ can be reduced to finding maximum-weight perfect matchings in the bipartite graph $H$.
\end{lemma}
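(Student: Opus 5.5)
The plan is to set up an explicit weight-preserving correspondence between maximum-cardinality matchings of $G$ and perfect matchings of $H$. The correspondence is first built for a single copy $L_1\cup R_1$, then doubled. Throughout, \autoref{thm:gallai-edmonds} is the structural input. Because every maximum-cardinality matching $M$ of $G$ has the Edmonds-Gallai structure, $M$ decomposes into four independent parts:
\begin{enumerate}
\item a perfect matching of $C(G)$;
\item an injective assignment of each $u\in A(G)$ to a distinct pseudonode $z\in\Dtilde$, via some edge $(u,v)$ with $v\in z$;
\item a perfect matching of $z-v$ for each pseudonode $z$ hit by $A(G)$;
\item a near-perfect matching of each pseudonode not hit by $A(G)$.
\end{enumerate}
Edges inside $A(G)$ never appear, which justifies deleting them. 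The $C(G)$ part decouples completely: its maximum-weight perfect matching can be computed separately and appended.

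The weights then have to carry the optimal internal weight of each blossom. I would make this precise by letting $w(u,z)$ be the maximum, over $v\in z$ adjacent to $u$, of $w(u,v)$ plus the maximum weight of a perfect matching of $z-v$. This choice also resolves the parallel edges, which the construction says to keep only at their highest weight. Similarly, $w_z$ is the maximum weight of a near-perfect matching of $z$. Both quantities are computable in strongly polynomial time by weighted matching within each factor-critical component.

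With these weights I would verify the key identity. Take a maximum-cardinality matching $M$ and map it into $H$ by placing its $A$--$\Dtilde$ edges in both copies and adding the edge $z_1z_2$ for every pseudonode $z$ not hit by $A(G)$. The result is a perfect matching of $H$: every vertex of $L_i$ is matched because $A(G)$ is always fully matched, and every $z\in R_i$ is covered either by its $A$-partner or by $z_1z_2$. Its total weight is
\[
2\Big(\#\{\text{vertices of } D(G) \text{ covered by } M\} + \epsilon\cdot(\text{best weight of } M \text{ given its } A\text{-assignment})\Big).
\]
The point is that a hit pseudonode is fully covered and contributes $s_z$, while an unhit one contributes $s_z-1$.

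For the converse direction, take any perfect matching $N$ of $H$. Its restriction to the edges of copy $i$, together with the $z_1z_2$ edges, defines an assignment $A(G)\to\Dtilde$ that misses exactly the pseudonodes matched across. Hence each copy yields a valid Edmonds-Gallai-structured matching $M_i$ of $G$, and $w(N)$ is the average of the two doubled weights of $M_1$ and $M_2$. Consequently, if $N$ has maximum weight then both $M_1$ and $M_2$ are optimal, so every maximum-weight perfect matching of $H$ maps to maximum-weight maximum-cardinality matchings of $G$. Conversely, the doubling map sends the optimal matchings of $G$ onto maximum-weight perfect matchings of $H$.

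I expect the main obstacle to be the role of $\epsilon$ and of the cardinality term. Since all maximum-cardinality matchings cover the same number of $D(G)$ vertices, the $s_z$ terms are constant on the relevant matchings. What must be argued is that a perfect matching of $H$ cannot trade cardinality for weight, for example by leaving more pseudonodes matched across, and that no spurious perfect matchings exist. I would handle this by noting that every perfect matching of $H$ already saturates $L_1\cup L_2$, so it automatically corresponds to a maximum-cardinality structure. Choosing $\epsilon<1/(n\max|w|)$ then makes the integer part dominate, so ties are broken exactly by the original weights. This guarantees the reduction also works for fixed-cardinality variants.
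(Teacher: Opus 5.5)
Your proposal is correct and follows essentially the same route as the paper. You use the same doubled bipartite graph $H$, with $w(u,z)$ and $w_z$ taken as the best blossom completions, the doubling map from $G$ to $H$, and projection of a perfect matching of $H$ back to $G$ copy by copy. Your observation that $w(N)$ is the average of the doubled weights of $M_1$ and $M_2$ also justifies what the paper simply asserts as ``WLOG the left copy.''

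The one thing to drop is the closing aside about fixed-cardinality variants. Matchings of size below $\nu(G)$ need not have Edmonds--Gallai structure: vertices of $A(G)$ may go unmatched, and edges inside $A(G)$ or $C(G)$ may be used. So this $H$ does not capture them, which is why the paper uses a separate construction with dummy vertices and nonbipartite matching there.
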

\proof{}
Let $w_z$ be the maximum weight of the matching within $z$, and $w(u,z)$ be the maximum weight of the perfect matching between $u \in A(G)$ and the pseudonode $z$. Note that it is possible $w(u,z) < w_z$.

Compared to a maximum-weight maximum-cardinality matching $M$, a maximum-weight perfect matching in this new graph has weight $2(2|M|-|A(G)|+\epsilon w(M))$, where $2|M|-|A(G)|$ is the number of matched vertices in $D(G)$. We use $\epsilon$ to faciliate a bicriteria optimization, and create the two copies to allow for vertices of $D(G)$ to be unmatched. Furthermore, notice that this graph is bipartite.

\paragraph{Bijection between matchings.} We show that there is a bijection between solutions on the original graph $G$ and on the (WLOG) left side of the graph in $H$. Given a maximum-weight perfect matching $M$ on $H$ with weight $2c_1 + 2c_2 \epsilon$, we know there are $c_1$ matched vertices in $R_1$. Furthermore, each vertex in $L_1$ is matched with a node in $R_1$, so the number of vertices matched within the left copy of $H$ is $|A(G)|+c_1=|A(G)| + \sum_{z \in R_1} (s_z-1)$, which is exactly the maximum cardinality of a matching of $G$. Any vertices of $R_1$ that are matched with vertices of $R_2$, we interpret as being unmatched in $G$. Observing that $c_2$ is the weight of a maximum matching within the left copy of the graph, we see that a similar argument holds for the weights. To extend the matching to $G$, our last step is just to consider the matchings involving pseudonodes $z \in \Dtilde$. We choose either the maximum-weight matching within components $w_z$ if $z$ is matched to a node in the right copy of the graph, or the maximum-weight perfect matching $w(u,z)$ if $z$ is matched to a node in the left copy of the graph. Then the weight of the matching induced by the left copy of the graph will be exactly $$c_2 = \sum_{z \in Z_1} w(u,z) + \sum_{z \in Z_2} w_z,$$
where $Z_1 = \{z \in R_1: M(z)=u, u \in L_1\} \text{ and } Z_2 = \{R_1: M(z)=v, v \in R_2\}.$
The other direction of the reduction is even simpler, as we can forget about the matchings within contracted components of $\Dtilde$. We can also copy the matching from the left copy of the graph to the right copy, and match unmatched nodes in $R_1$ to the corresponding nodes in $R_2$. By the same argument as the other direction, the maximum-weight maximum-cardinality matching in $G$ must be maximum weight in $H$.
\endproof

This reduction characterizes deficient vertices in maximum-weight maximum-cardinality matchings as a subset of blossoms from maximum-cardinality matchings that are induced by the tight edges. The lemma allows us to prove the desired theorem for edge weights, and the case of node weights reduces to finding the maximum-weight independent set in a matroid, so we can run a greedy algorithm. 

\proof[Proof of \autoref{thm:max-weight} for maximum-weight maximum-cardinality matching.]
First, consider the case of node weights.
From \autoref{thm:gallai-edmonds}, we can define the matroid $(V,\mathcal{I}')$, where $\mathcal{I}' = \{U \subseteq D(G): \exists M \in \mathscr{C}: M \text{ covers } U\}$, and apply a greedy algorithm to find a maximum-weight maximum-cardinality matching: sort nodes of $D(G)$ by decreasing weight, $v_1,v_2,\ldots,v_n$ and return the independent set $I=\{v_i: r(U_i) > r(U_{i-1})\}$, where $U_i=\{v_1,\ldots,v_i\}$ for $i=1,\ldots,n$.
Next, we consider edge weights. By \autoref{lem:max-weight-reduction}, it suffices to study the bipartite graph $H$.
Complementary slackness tells us that a perfect matching in $H$ is optimal if and only if its edges are all admissible.
Thus we can run the Hungarian algorithm to find the set of admissible edges to find the structure of all maximum-weight perfect matchings, and run \autoref{alg:max-cardinality} using only admissible edges. 
\endproof

\subsection{Maximum-Weight Fixed-Cardinality 2-Cycles}
\label{sec:max-weight-mu}

Suppose we relax the assumption that our acceptable matchings must be maximum cardinality. Recall that the matching number $\nu(G)$ is the size of a maximum-cardinality matching, and any maximal matching has cardinality at least $\frac{1}{2}\nu(G)$. Then given a target cardinality $\mu \geq \frac{1}{2} \nu(G)$, we wish to characterize the set of all maximum-weight matchings of cardinality $\mu$, then use this information to determine $\mathscr{C}$ and $\vec{p}$.

Similar to \autoref{lem:max-weight-reduction} for maximum-cardinality matchings, we reduce the problem to minimum-weight perfect matching. We define a new graph, which we call $H$: starting with the original graph, we add $2(\nu(G)-\mu) + \textsf{def}(G)$ new nodes, where $\nu(G)$ is the matching number and $\textsf{def}(G)=|\Dtilde| - |A(G)|$ is the deficiency of $G$. Add an edge of weight $W$ between each new node and each vertex in the original graph, where $W$ is at least as large as the maximum weight of any edge in the original graph. Finally, we flip the signs of all edge weights to interpret them as costs. The proof is similar to the maximum-cardinality one, with the same ideas carrying through.

\proof[Proof of \autoref{thm:max-weight} for fixed cardinality.]
We claim that by running Edmonds' algorithm for nonbipartite minimum-cost perfect matching on $H$, we can characterize all maximum-weight fixed-cardinality matchings. Even though $H$ is no longer bipartite, we still have a polyhedral description of the set of perfect matchings, described by the following primal and dual linear programs. 

\vspace{-5mm}
\begin{multicols}{2}
\begin{equation*}
    \begin{aligned}
        \min \quad & \sum_{e \in E} w_e x_e \\
        \text{s.t.} \quad & \sum_{e \in \delta(v)} x_{e} = 1 \quad \forall v \in V \\
        & \sum_{e \in \delta(U)} x_{e} \geq 1 \quad \forall U \subseteq V, |U| \geq 3 \text{ odd} \\
        & x_{e} \geq 0 \quad \forall e \in E 
    \end{aligned}
    \label{lp:min-weight-nonbipartite-matching}
    \end{equation*}
    \begin{equation*}
    \begin{aligned}
        \max \quad & \sum_{U \subseteq V: |U| \text{ odd}} y_U \\
        \text{s.t.} \quad & \sum_{U \subseteq V :e \in \delta(U)} y_U \leq w_e \quad \forall e \in E \\
        & y_U \geq 0 \quad \forall U \subseteq V, |U| \geq 3 \text{ odd}
    \end{aligned}
    \end{equation*}
\end{multicols}

There are decision variables $x_e$ corresponding to edge $e \in E$ being in the matching in the primal, and dual variables $y_U$ for $U \subseteq V$. We use the notation $\delta(U)$ to denote edges which are in the cut of vertex set $U$.

Complementary slackness requires that if $x_e > 0$, then $\sum_{U \subseteq V: e \in \delta(U)} y_U = w_e$. Thus any edges in the matching must be admissible edges. From there, similar arguments as in the bipartite setting for the maximum-cardinality case hold to (1) prove the bijection between $H$ and the original graph, (2) identify maximum-weight fixed-cardinality matchings after running Edmonds' algorithm for nonbipartite minimum cost matching on $H$, and finally (3) reduce the problem to the unweighted case and apply the results of \autoref{sec:unweighted}.
\endproof

\subsection{Paths}
\label{sec:paths}
We now consider allowing paths of length at most $\ell$ starting from a set of altruistic donors $A$, in addition to 2-cycles. 
We first introduce some notation.
Given a directed graph $G=(V \cup A, E)$, where $V$ is the set of patient-donor pairs, let $N(A)$ and $N^2(A)$ be the set of vertices in $V$ which are reachable by a path exactly 1 and 2 arcs away from a vertex in $A$ respectively; let $G'$ be the undirected graph with vertices $V$ and edges the 2-cycles of $G$.
Our main result is the following structural theorem, which then implies \autoref{thm:2-paths-polytime}.

\begin{lemma}
\label{thm:2-paths-structural}
    The following are equivalent:
    (a) $\mathcal{P}$ is maximum. (b) $\mathcal{P}$ admits no augmenting configurations. (c) There is a set $S \subseteq A$ such that $\textsf{def}(S) = \textsf{exp}(\mathcal{P})$, where $\textsf{def}(S) = |S|-|M(S)|$ and $\textsf{exp}(\mathcal{P})$ is the number of exposed NDDs with respect to $\mathcal{P}$.
\end{lemma}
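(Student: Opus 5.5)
We prove the cycle of implications (a)$\Rightarrow$(b)$\Rightarrow$(c)$\Rightarrow$(a), in the style of Berge's theorem combined with a Hall/K\H{o}nig deficiency argument. The plan fixes the objective as the number of covered patient-donor pairs. Since $G$ is finite and every packing is a disjoint union of 2-cycles (edges of $G'$) and paths $a\to v$ or $a\to v\to w$ with $a\in A$, each NDD $a$ is either exposed or heads exactly one path. An \emph{augmenting configuration} is a bounded alternating structure: an exposed NDD $a_0$, a sequence of re-routings (an alternating path in $G'$ together with reassignments of path heads and tails among NDDs), and a net gain of at least one covered vertex. We will read $M(S)$ as the set of patient-donor pairs in $N(A)\cup N^2(A)$ that the NDDs in $S$ can reach and that are not locked into 2-cycles. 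This is the interpretation most consistent with the ``Hall'' framing.

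\textbf{(a)$\Rightarrow$(b).} This is immediate. Applying an augmenting configuration (taking the symmetric difference along its alternating structure) yields a feasible packing that covers strictly more pairs. We must check only that the result is still a packing of 2-cycles and paths of length at most $2$ from NDDs. This is local, because each step of the configuration touches one edge of $G'$ or one arc incident to $A$ or $N(A)$.

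\textbf{(c)$\Rightarrow$(a).} This is a weak-duality counting argument. Fix $S\subseteq A$. In any packing $\mathcal{P}'$, each NDD in $S$ that heads a path consumes at least one distinct vertex of $M(S)$ as its first arc target. The vertices of $M(S)$ are shared between being path heads and being covered by 2-cycles, and using one for a path from $S$ never increases the coverage available to $S$. Hence at least $|S|-|M(S)|=\textsf{def}(S)$ NDDs of $S$ are exposed, and more strongly every packing exposes at least $\textsf{def}(S)$ NDDs' worth of coverage. We then relate coverage to exposed NDDs by
\[
|\text{covered}(\mathcal{P}')| \le |V\text{-coverable}| - \textsf{def}(S),
\]
with equality for $\mathcal{P}$ when $\textsf{exp}(\mathcal{P})=\textsf{def}(S)$. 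Making this inequality precise, so that 2-cycles in $G'$ do not enter the count, is where we would invoke the Edmonds--Gallai decomposition (Proposition~\ref{thm:gallai-edmonds}) on $G'$. Vertices in $A(G')\cup C(G')$ are covered anyway, so only the odd components of $D(G')$ can gain from a path. Consequently, each component in $M(S)$ effectively counts once.

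\textbf{(b)$\Rightarrow$(c).} This is the main step, and we expect it to be the main obstacle. We run a Hungarian-tree search simultaneously from all exposed NDDs. An NDD $a$ reaches a vertex $v\in N(a)$, or a vertex $w$ via $a\to v\to w$. From a reached vertex that is covered, we continue by following its current partner: either the NDD heading its path, or an alternating path in $G'$ leading to a blossom of $D(G')$. We let $S$ be the set of NDDs labeled \emph{reached} when the search terminates without finding an augmenting configuration. Absence of augmentation forces two facts: every vertex of $M(S)$ is covered by a path from some NDD in $S$, and every non-exposed NDD in $S$ heads a path into $M(S)$. These give $|M(S)|=|S|-\textsf{exp}(\mathcal{P})$. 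The difficulty is the 2-paths. A middle vertex $v$ can be shared by two routes, one with $v$ as an endpoint and one with $v$ as a pass-through, and a deficient blossom of $G'$ can absorb either an endpoint or a middle vertex. We would first try to handle this by contracting the blossoms of $D(G')$ as in Section~\ref{sec:unweighted}, which reduces the search to a bipartite graph between $A$ and $A(G')\cup\widetilde{D}(G')$. We would then verify case by case that each way a 2-path can terminate inside the search tree yields either an augmenting configuration or a vertex correctly placed in $M(S)$.
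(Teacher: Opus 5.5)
Your (a)$\Rightarrow$(b) is the paper's argument. For the other two directions, however, you change the statement into one that is false.

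First, you replace $M(S)$ by a \emph{set of vertices} in $N(S)\cup N^2(S)$. In the paper, $M(S)$ is a maximum matching among the second arcs $u\to v$ of 2-paths $a\to u\to v$ with $a\in S$. Take two NDDs whose only arcs are $a_1\to u\to v$ and $a_2\to u\to w$:
\begin{itemize}
\item Every maximum packing exposes one NDD. Yet your $M(\{a_1,a_2\})=\{u,v,w\}$ gives $\textsf{def}=-1$, and no $S$ has $\textsf{def}(S)=1$.
\item The identity $|M(S)|=|S|-\textsf{exp}(\mathcal{P})$ that you derive in (b)$\Rightarrow$(c) reads $3=1$.
\item Your ``forced fact'' also fails: $w\in M(S)$ is uncovered even though $\mathcal{P}=\{a_1\to u\to v\}$ is maximum.
\end{itemize}
With the paper's definition, $|M(\{a_1,a_2\})|=1$ and $\textsf{def}=1$, as required. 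Having one matching edge per 2-path is exactly what makes $|S|-|M(S)|$ a lower bound on the number of exposed NDDs in $S$.

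Second, you count covered pairs and let a 1-path make its NDD non-exposed. Then (c)$\Rightarrow$(a) fails outright. For one NDD $a$ with arcs $a\to u$, $a\to v$, $v\to w$, the packing $\{a\to u\}$ has $\textsf{exp}=0=\textsf{def}(\emptyset)$, but $\{a\to v\to w\}$ covers more. So the coverage inequality you hope to make precise via Edmonds--Gallai cannot hold, and that decomposition is not the missing ingredient.

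The paper's proof works in the bipartite hypergraph of triples $\{a,u,v\}$. There ``maximum'' means as many NDDs as possible head vertex-disjoint 2-paths, and ``exposed'' means heading no 2-path; 2-cycles are merged in only afterwards, by a remark.
\begin{itemize}
\item In that setting, (c)$\Rightarrow$(a) is one line of weak duality: the non-exposed NDDs of $S$ contribute disjoint second arcs, so there are at most $|M(S)|$ of them.
\item The paper takes (a)$\Leftrightarrow$(c) from the Aharoni--Haxell theorem (Proposition~\ref{prop:aharoni-haxell}).
\item For (b)$\Rightarrow$(c), which you leave at ``verify case by case,'' the paper lets $S$ be the NDDs reachable from exposed NDDs by even alternating trails, and $X$ the exposed NDDs in $S$. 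The 2-paths of the covered NDDs in $S$ give $|M(S)|\ge|S|-|X|$.
\item If that inequality were strict, a maximum matching of second arcs from $S$ would cover two $\mathcal{P}$-exposed vertices $u,v$. Combined with the alternating trail from an exposed NDD, this yields an augmenting configuration of type 1, 2, or 3, depending on which NDDs reaching $u$ and $v$ coincide. That contradicts (b).
\end{itemize}
No blossom contraction of $G'$ is involved.
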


\begin{figure}[htbp!]
\centering
\tikzset{
    vertex/.style={circle, draw, fill=black, minimum size=4pt, inner sep=0pt, outer sep=1pt},
    circ/.style={circle, draw, red, minimum size=8pt, inner sep=0pt, outer sep=1pt},
    normal/.style={black, line width=1pt, ->},
    pinkedge/.style={line width=5pt, pink, postaction={draw, normal}},
    blueedge/.style={line width=5pt, cyan!30, postaction={draw, normal}},
    >=stealth
}
%
\begin{subfigure}{0.18\textwidth}
\begin{tikzpicture}[scale=1]
\node[vertex,label=above:$n$] (n2) at (-1,-0.5) {};
\node[vertex,label=above:$u$] (u2) at (0,0) {};
\node[vertex,label=above:$v$] (v2) at (1,0) {};
\node[vertex,label=below:$x$] (x2) at (0,-1) {};
\node[vertex,label=below:$y$] (y2) at (1,-1) {};

\draw[blueedge] (n2) -- (u2);
\draw[blueedge] (u2) -- (v2);
\draw[pinkedge] (n2) -- (x2);
\draw[pinkedge]  (x2) -- (y2);

\node[circ, fit=(u2)] (u2c) {};
\node[circ, fit=(v2)] (v2c) {};
\node[circ, fit=(x2)] (x2c) {};
\node[circ, fit=(y2)] (y2c) {};

\end{tikzpicture}
\end{subfigure}
\hfill
%
\begin{subfigure}{0.18\textwidth}
\begin{tikzpicture}[scale=1]

\node[vertex, label=above:$n$] (n5) at (-1,-0.5) {};
\node[vertex,label=above:$u$] (u5) at (0,0) {};
\node[vertex,label=below:{$v,x$}] (vx5) at (0,-1) {};
\node[vertex,label=below:$y$] (y5) at (1,-0.5) {};

\draw[blueedge] (n5) -- (u5);
\draw[blueedge] (u5) -- (vx5);
\draw[pinkedge] (n5) -- (vx5);
\draw[pinkedge] (vx5) -- (y5);

\node[circ, fit=(u5)] (u5c) {};
\node[circ, fit=(y5)] (y5c) {};

\end{tikzpicture}
\end{subfigure}
\hfill
%
%
\begin{subfigure}{0.18\textwidth}
\begin{tikzpicture}[scale=1]

\node[vertex, label=above:$n$] (n1) at (-1,-0.5) {};
\node[vertex,label=above:{$u,x$}] (ux1) at (0,-0.5) {};
\node[vertex,label=above:$v$]   (v1)  at (1,0) {};
\node[vertex,label=below:$y$]   (y1)  at (1,-1) {};

\draw[cyan!30, line width=3pt, transform canvas={yshift=1.5pt}] (n1) -- (ux1);
\draw[pink, line width=3pt, transform canvas={yshift=-1.5pt}] (n1) -- (ux1);
\draw[normal] (n1) -- (ux1);
\draw[blueedge] (ux1) -- (v1);
\draw[pinkedge] (ux1) -- (y1);

\node[circ, fit=(v1)] (v1c) {};
\node[circ, fit=(y1)] (y1c) {};

\end{tikzpicture}
\end{subfigure}
\hfill
%
\begin{subfigure}{0.18\textwidth}
\begin{tikzpicture}[scale=1]

\node[vertex,label=above:$n$]  (n3) at (-1,-0.5) {};
\node[vertex,label=above:$u$]  (u3) at (0,0) {};
\node[vertex,label=above:{$v,y$}] (vy3) at (1,-0.5) {};
\node[vertex,label=below:$x$]   (x3)  at (0,-1) {};

\draw[blueedge] (n3) -- (u3);
\draw[blueedge] (u3) -- (vy3);
\draw[pinkedge] (n3) -- (x3);
\draw[pinkedge] (x3) -- (vy3);

\node[circ, fit=(u3)] (u3c) {};
\node[circ, fit=(x3)] (x3c) {};

\end{tikzpicture}
\end{subfigure}
\hfill
%
%
\begin{subfigure}{0.18\textwidth}
\begin{tikzpicture}[scale=1]

\node[vertex,label=above:$n$]  (n4) at (-1,-0.5) {};
\node[vertex,label=above:{$u,y$}] (uy4) at (0,0) {};
\node[vertex,label=below:$x$]   (x4) at (0,-1) {};
\node[vertex,label=above:$v$]   (v4) at (1,-0.5) {};

\draw[blueedge] (n4) -- (uy4);
\draw[blueedge] (uy4) -- (v4);
\draw[pinkedge] (n4) -- (x4);
\draw[pinkedge] (x4) -- (uy4);

\node[circ, fit=(x4)] (x4c) {};
\node[circ, fit=(v4)] (v4c) {};

\end{tikzpicture}
\end{subfigure}
\caption{Five local configurations of 2-paths starting from an NDD $n$. Nodes circled in red are exposed in one of the packings, and thus potential linking points in alternating trails.}
\label{fig:paths-cases}
\end{figure}

We define an alternating trail, and augmenting configurations for this problem. An alternating trail with respect to a packing $\mathcal{P}$ is a sequence of arcs which alternates between two arcs in $\mathcal{P}$ and two arcs not in $\mathcal{P}$. It starts at a covered vertex and ends at an exposed vertex. The two arcs are analogous to edges in a typical alternating path, and there are five cases which are depicted in \autoref{fig:paths-cases}. We define three augmenting configurations with respect to a packing $\mathcal{P}$. Each configuration starts with an NDD $a$ and arcs $(a,u)$ and $(a,v)$ to distinct vertices $u,v \in V$.
First, $u$ and $v$ are both exposed vertices.
Second, one of $u$ and $v$ is covered and the other is exposed, and the covered vertex starts an alternating trail.
Third, both $u$ and $v$ are covered vertices, and both vertices start an alternating trail.

To prove the result of this section, we apply an extension of Hall's theorem to hypergraphs. Say that a set $F$ of edges is pinned by another set $K$ of edges if every edge in $F$ has a nonempty intersection with some edge in $K$. 
We use the notation $M(S)$ to denote a maximu-cardinality matching of edges in $S$. For kidney exchange graphs, we overload notation so that for $S \subseteq A$, $M(S)$ is a maximum-cardinality matching formed by edges which are the second arc of a 2-path starting from an NDD within $S$.

\begin{proposition}[\cite{Aharoni_Haxell_2000}]
    Let $H=(V+A,E)$ be a bipartite hypergraph. Then $H$ admits an $A$-perfect matching iff there is an assignment of a matching $M(A_0)$ in $N_H(A_0)$ for every $A_0 \subseteq A$ such that pinning $M(A_0)$ requires at least $|A_0|$ disjoint edges from $\bigcup \{M(A_\ell): A_\ell \subseteq A_0\}$.
    \label{prop:aharoni-haxell}
\end{proposition}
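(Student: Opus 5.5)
The forward direction should be easy. The reverse direction is the real content, and I would follow the topological route of \cite{Aharoni_Haxell_2000} rather than look for a direct augmenting-path argument.

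\emph{Necessity.} Suppose $H$ has an $A$-perfect matching $\{e_a : a \in A\}$, where $e_a$ contains $a$. For each $A_0 \subseteq A$, set $M(A_0) = \{e_a \setminus \{a\} : a \in A_0\}$. This is a matching in $N_H(A_0)$ with $|A_0|$ edges. Every edge of $\bigcup\{M(A_\ell) : A_\ell \subseteq A_0\}$ is one of the pairwise disjoint sets $e_a \setminus \{a\}$, so such an edge meets only itself among the edges of $M(A_0)$. Hence any pinning of $M(A_0)$ by edges from this union must use all $|A_0|$ of them, which is the required condition.

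\emph{Sufficiency.} I would argue by contradiction via a Sperner-type labelling, which is what I expect to be the main obstacle.
\begin{enumerate}
\item Assume there is no $A$-perfect matching. Let $\Delta$ be the simplex with vertex set $A$.
\item Build a triangulation $T$ of $\Delta$ inductively over faces of increasing dimension. Each vertex of $T$ lying in the face spanned by $A_0$ is labelled by an edge of $\bigcup\{M(A_\ell) : A_\ell \subseteq A_0\}$.
\item Arrange the labels so that, for every simplex of $T$, its labels form a matching in the underlying graph on $V$. This needs the labelled subcomplex over each face to be connected enough to extend the labelling to the next dimension.
\item The pinning hypothesis is exactly what supplies this connectivity. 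If the matching complex spanned by the union of the $M(A_\ell)$ with $A_\ell \subseteq A_0$ failed to be $(|A_0|-2)$-connected, one could extract fewer than $|A_0|$ disjoint edges pinning $M(A_0)$. This is the standard inequality that the connectivity of a matching complex is bounded below by roughly the pinning number of a matching inside it.
\item Once the labelled triangulation exists, give each vertex of $T$ the colour $a \in A$ for which its edge lies in $N_H(a)$. The boundary compatibility of step 2 makes this a Sperner colouring.
\item Sperner's lemma then yields a fully coloured simplex. Its labels are pairwise disjoint edges $f_a \in N_H(a)$, one for each $a \in A$, so the sets $f_a \cup \{a\}$ form an $A$-perfect matching, a contradiction.
\end{enumerate}

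The hardest point is step 4: proving the connectivity lemma that converts the ``needs at least $|A_0|$ disjoint pinning edges'' hypothesis into $(|A_0|-2)$-connectedness. I would first try induction on the number of edges, deleting or contracting an edge of $M(A_0)$ and using a Mayer--Vietoris/nerve argument to control homology. If the constants do not work out, I would fall back on the alternative route of checking Meshulam's topological Hall condition for the family $\{N_H(a)\}_{a \in A}$ in the complex of matchings on $V$.
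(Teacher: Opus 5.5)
The paper does not prove this proposition; it imports it from Aharoni--Haxell, whose argument is the Sperner-type construction you sketch. Your necessity direction is complete and correct.

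\textbf{Main gap: step 4.} For sufficiency, all of the content sits in step 4, which you assert (``roughly'') rather than prove. The induction you propose for it also does not close. If you delete an edge $m\in M(A_0)$, the surviving matching $M(A_0)\setminus\{m\}$ may be pinnable by $|A_0|-1$ disjoint edges. Your inductive hypothesis then gives only $(|A_0|-3)$-connectedness of the deletion, one short of what the gluing needs.

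The missing lemma is this: if a matching $M$ in a hypergraph $G$ cannot be pinned by fewer than $k$ pairwise disjoint edges of $G$, then the matching complex $\mathcal{M}(G)$ is $(k-2)$-connected. Prove it by induction on $|G\setminus M|$, deleting an edge $x\notin M$. Write $\mathcal{M}(G)=\mathcal{M}(G-x)\cup\mathrm{st}(x)$; the star is a cone, and the intersection is the link $\mathcal{M}(G-N[x])$, where $N[x]$ is the set of edges meeting $x$. So it suffices that $\mathcal{M}(G-x)$ is $(k-2)$-connected and $\mathcal{M}(G-N[x])$ is $(k-3)$-connected.
\begin{itemize}
\item The first holds because $M$ survives intact in $G-x$ and its pinning number cannot drop.
\item The second holds because $\{m\in M: m\cap x=\emptyset\}$ cannot be pinned by fewer than $k-1$ disjoint edges of $G-N[x]$: adding $x$ to such a pinning would pin $M$.
\item The base case $G=M$ is a simplex.
\end{itemize}
Applied with $G=\bigcup\{M(A_\ell):A_\ell\subseteq A_0\}$, $M=M(A_0)$ and $k=|A_0|$, this is exactly what lets you fill the face $\Delta_{A_0}$ from its boundary.

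\textbf{Second gap: steps 5--6.} A labelling obtained from connectivity is only a simplicial map into the matching complex, so it may repeat a label on a simplex. Also, a trace $f$ can lie in $N_H(a)\cap N_H(b)$, so ``the colour $a$ for which its edge lies in $N_H(a)$'' is not well defined. A fully coloured simplex may therefore carry the same $f$ on two vertices coloured $a\neq b$, and you get no $A$-perfect matching.

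The repair is to label by edges of $H$ rather than by their traces on $V$. Use pairs $(a,f)$ with $f\in M(A_\ell)$, $a\in A_\ell$ and $f\cup\{a\}$ an edge of $H$. Call a set of pairs a simplex when its $f$'s are pairwise disjoint, and colour each vertex by the first coordinate of its label. Then:
\begin{itemize}
\item distinct colours force distinct labels, hence disjoint traces;
\item parallel copies do not change pinning numbers, so the lemma goes through verbatim;
\item Sperner's boundary condition holds, since a label used on $\Delta_{A_0}$ has $a\in A_\ell\subseteq A_0$.
\end{itemize}

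\textbf{The Meshulam fallback.} The same point is why your fallback fails as phrased. For the overlapping family $\{N_H(a)\}$ in the matching complex on $V$, the topological conclusion is only a matching meeting every $N_H(a)$. For example, take $A=\{a,b\}$ with a single trace $f$ shared by both: the complex is a point, the condition holds, and there is no $A$-perfect matching. Moreover, Meshulam's condition concerns the complex on all of $\bigcup_{a\in A_0}N_H(a)$. Connectivity is not monotone under adding edges, so the hypothesis does not control that complex. The hypothesis is tailored to the face-by-face construction, which uses only $\bigcup\{M(A_\ell):A_\ell\subseteq A_0\}$.
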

From the proposition, the proof of the structural lemma and subsequently, of \autoref{thm:2-paths-polytime} follows.

\proof[Proof of \autoref{thm:2-paths-structural}.]
    To prove (a) $\Leftrightarrow$ (c), we can apply \autoref{prop:aharoni-haxell}. Observe that $\bigcup \{M(A_l): A_l \subseteq A_0\}$ for any subset of NDDs $A_0 \subseteq A$ is simply $M(A_0)$ again. To pin $M(A_0)$, we need all of $M(A_0)$ since matchings are vertex disjoint, and $|A_0| \leq |M(A_0)|$, which implies there is an $A$-perfect matching iff $|A_0| \leq |M(A_0)|$ for all $A_0 \subseteq A$. Reformulating this Hall-type condition in terms of the deficiency gives the desired equivalence.

    It is straightforward to see that (a) $\Rightarrow$ (b), as each augmenting configuration increases the cardinality of a packing by 2. We show that (b) $\Rightarrow$ (c). 
    Let $\mathcal{P}$ be a packing with no augmenting configurations.
    Then let $S \subseteq A$ be the set of NDDs reachable from exposed NDDs via alternating trails of even length with respect to the 2-paths (so the total arc length is divisible by 4). If $X \subseteq S$ is the set of exposed NDDs of $S$, we know that $M(S) \geq |S|-|X|$, since each covered NDD in $S$ can be mapped to a disjoint 2-path which corresponds to an edge of $M(S)$. 
    Assume to the contrary that $|M(S)| > |S| - |X|$. Then there exists vertices $u,v$ covered by a maximum matching of $N(S) \cup N^2(S)$ which are exposed in $\mathcal{P}$. First, suppose $(u,v) \in M(S)$ form an edge not in the current packing $\mathcal{P}$. Since $(u,v) \in M(S)$, it starts from some NDD $a$ which must be in an even length alternating trail starting from an exposed NDD $a_e$. But then we can find an augmenting configuration starting from $a_e$: there is an alternating trail from $a_e$ to $a$, and we can add $(a,u,v)$ to make a type 1 augmenting configuration. Otherwise, even if $u,v$ do not form an edge, the same argument holds. They are in the neighborhood of some NDDs $a_u,a_v$ (distinct, to have $M(S) > |S|-|E|$), so we can match $a_u$ to $u$ and $a_v$ to $v$. If $a_u,a_v,a$ are all distinct, we have a type 3 augmenting configuration from $a_e$ to $a_u$ and $a_v$. If $a_u$ or $a_v$ is the same as $a$, we have type 2 instead. Since we assumed $\mathcal{P}$ had no augmenting configurations, we have a contradiction, so $|M(S)| = |S| - |X|$, which implies $\textsf{def}(S) = |S| - |M(S)| = |X| = \textsf{exp}(\mathcal{P})$.
\endproof

\proof[Proof of \autoref{thm:2-paths-polytime}.]
The case for $\ell=1$ is simply matching, adding edges for each 2-cycle and between all NDDs and their neighbors. The algorithm for $\ell=2$ is similar to the one proposed in \cite{Brewster_Hell_Pantel_Rizzi_Yeo_2003} for packing paths of length 1 and 2. We start with any packing $\mathcal{P}$ and maintain a set $S$ to eventually certify optimality. Start with $S$ as the set of exposed vertices with respect to $\mathcal{P}$. We build a search tree starting from $S$, alternating between edges not in $\mathcal{P}$ and edges in $\mathcal{P}$, to either discover an augmenting configuration, or we increase the set $S$. If we can no longer increase $S$ and have not found an augmenting configuration, then by \autoref{thm:2-paths-structural}, $\mathcal{P}$ is maximum.

    \paragraph{Hardness of $\ell \geq 3$.} The reduction is from 3-dimensional matching, similar to \cite{Abraham_Blum_Sandholm_2007}.
    Suppose there is a 3DM instance which has a perfect matching. We can construct a covering of $H$ as follows: for every triple $(x_a,y_b,z_c)$ in the matching, add the paths $(s_x^i, 1, 2, \ldots, L-1, x_a)$, $(s_y^i, 1, 2, \ldots, L-1, y_b)$, $(s_z^i, 1, 2, \ldots, L-1, z_c)$, and $(s^i, x_a^i, y_b^i, z_c^i)$ to the covering. For every  triple $(x_a,y_b,z_c)$ not in the matching, add the paths $(s_x^i, 1, 2, \ldots, L-1, x_a^i)$, $(s_y^i, 1, 2, \ldots, L-1, y_b^i)$, and $(s_z^i, 1, 2, \ldots, L-1, z_c^i)$ to the covering, leaving $s^i, x_a, y_b, z_c$ exposed. It is easy to see that this is a valid covering including all non-NDD vertices.

    Conversely, suppose there is a covering of $H$ including all non-NDD vertices. Since the $x_a^i, y_b^i, z_c^i$ vertices in the gadget are unique to each triple, they must be covered by the path $(s^i, x_a^i, y_b^i, z_c^i)$, or by the paths $(s_x^i, 1, 2, \ldots, L-1, x_a^i)$, $(s_y^i, 1, 2, \ldots, L-1, y_b^i)$, and $(s_z^i, 1, 2, \ldots, L-1, z_c^i)$. 
    These cases determine if $x_a,y_b,z_c$ are covered or exposed, and thus if the triple is in the matching or not respectively. Since all $x_a,y_b,z_c$ must be covered by exactly one triple, we have a perfect matching.
\endproof

\begin{figure}[htbp!]
        \centering
        \begin{tikzpicture}[>=Stealth,scale=0.75]
    \tikzset{
  dot/.style = {circle, fill, inner sep=1.4pt},
  textnode/.style = {inner sep=2pt, font=\small},
  arrow/.style = {-Stealth},
}

\node[dot, label=below:{\color{purple}$s^i$},purple] (si) at (-1.5,0) {};

\node[dot,label=below:{\color{purple}$s_x^i$},purple] (sx) at (0,0) {};
\node[dot,label=below:{$1$}] (x1) at (0.6,0) {};
\node[dot,label=below:{$2$}] (x2) at (1.2,0) {};
\node[textnode] (xdots) at (1.8,0) {$\cdots$};
\node[dot] (xL) at (2.4,0) {};
\node at (2.65,-0.44){$L-1$};

\node[dot,label=above:{$x_a^i$}] (xai) at (0.9,1.3) {};
\node[dot,label=below:{$x_a$}] (xa) at (0.9,-1.3) {};

\draw[arrow] (sx) -- (x1);
\draw[arrow] (x1) -- (x2);
\draw[arrow] (x2) -- (xdots);
\draw[arrow] (xdots) -- (xL);
\draw[arrow] (xL) -- (xa);
\draw[arrow] (xL) -- (xai);

\node[dot,label=below:{\color{purple}$s_y^i$},purple] (sy) at (3.5,0) {};
\node[dot,label=below:{$1$}] (y1) at (4.1,0) {};
\node[dot,label=below:{$2$}] (y2) at (4.7,0) {};
\node[textnode] (ydots) at (5.3,0) {$\cdots$};
\node[dot] (yL) at (5.9,0) {};
\node at (6.15,-0.44){$L-1$};

\node[dot,label=above:{$y_b^i$}] (ybi) at (4.5,1.3) {};
\node[dot,label=below:{$y_b$}] (yb) at (4.5,-1.3) {};

\draw[arrow] (sy) -- (y1);
\draw[arrow] (y1) -- (y2);
\draw[arrow] (y2) -- (ydots);
\draw[arrow] (ydots) -- (yL);
\draw[arrow] (yL) -- (yb);
\draw[arrow] (yL) -- (ybi);

\node[dot,label=below:{\color{purple}$s_z^i$},purple] (sz) at (7,0) {};
\node[dot,label=below:{$1$}] (z1) at (7.6,0) {};
\node[dot,label=below:{$2$}] (z2) at (8.2,0) {};
\node[textnode] (zdots) at (8.8,0) {$\cdots$};
\node[dot] (zL) at (9.4,0) {};
\node at (9.65,-0.44){$L-1$};

\node[dot,label=above:{$z_c^i$}] (zci) at (7.9,1.3) {};
\node[dot,label=below:{$z_c$}] (zc) at (7.9,-1.3) {};

\draw[arrow] (sz) -- (z1);
\draw[arrow] (z1) -- (z2);
\draw[arrow] (z2) -- (zdots);
\draw[arrow] (zdots) -- (zL);
\draw[arrow] (zL) -- (zc);
\draw[arrow] (zL) -- (zci);

\draw[arrow] (xai) -- (ybi);
\draw[arrow] (ybi) -- (zci);
\draw[arrow] (xa) -- (yb);
\draw[arrow] (yb) -- (zc);
\draw[arrow] (si) -- (xai);

\end{tikzpicture}
        \caption{Gadget for reduction from 3DM to maximum covering with 2-cycles and paths of length 3. 
        }
        \label{fig:3dm-reduction}
    \end{figure}

Finally, we remark that we can combine the algorithm for packing 2-paths with the typical one for packing 2-cycles (i.e., matching). The high-level idea is to allow elements of alternating trails to be either two second-arcs as in \autoref{fig:paths-cases} or edges as in typical alternating paths.
We can assume WLOG that none of the second-arcs are also 2-cycles, or else we can just treat them as 2-cycle edges, since the NDDs do not contribute to the cardinality of a packing. We begin by extending the augmenting configurations to include either the elements in \autoref{fig:paths-cases} or the standard alternating paths from matching theory. To ensure these paths alternate in a valid way, we keep track of whether edges are from 2-cycles or from second-arcs of 2-paths. Then, when searching for augmenting configurations, the search tree must alternate between two 2-cycle edges or two pairs of second-arc edges. The main adjustment we need is on labeling vertices to find blossoms for the 2-paths. In the typical matching setting, we alternate between labeling vertices with ``even'' and ``odd,'' and finding an even-even edge means we have found a blossom. Instead, as we run BFS, we label both the second and third vertices (the patient-donor pairs) the opposite of the NDD in a 2-path, and only count an even-even edge as a blossom if it is not a second-arc in a 2-path. If we wish to include weights on arcs, we can directly apply Edmonds' algorithm for weighted nonbipartite matching with this modification for identifying blossoms. Then, the same reduction holds where we use the output of Edmonds' algorithm to find maximum-cardinality packings using only admissible edges.

\section{Computational Experiments}
\label{sec:experiments-short}

We also investigate the computational implications of our fairness models, using kidney exchange data from the APKD, but replacing the ellipsoid algorithm with column generation in the usual way. Our results demonstrate both the practicality of constructing optimally fair lotteries by column generation and the lifesaving impact of explicitly considering individual fairness when designing mechanisms. In \autoref{app:sec:static-pools}, we limit discussion to packings that include the maximum number of nodes with cycles of length at most 3 and unbounded paths, but later show that comparable results hold for several other packing structures and demonstrates the tradeoffs among the various packing structures in \autoref{app:subsec:packing-sets}. In \autoref{sec:dynamic-pools}, we examine the impact of randomization in dynamic pools.

A set of 25 KEP instances (which we will refer to as KEP1) was generated taking induced subgraphs from a cumulative historical compatibility graph obtained from the APKD \citep{Ashlagi_2025}. NDDs were added to these instances by randomly selecting pairs from which only donor properties and compatibilities were kept. Each generated instance consists of 200 patient-donor pairs and 10 NDDs. To demonstrate that analogous results hold on publicly available data, a second set of 25 synthetic KEP instances (KEP2) was generated using a procedure outlined in the appendix of \cite{Ashlagi_Roth_2021}, on which we conducted a limited set of experiments.
On a Lenovo laptop with a 3.20GHz AMD Ryzen 7 processor and 16GB of RAM, finding a lottery of maximum-cardinality packings of cycles of length at most 3 and unbounded paths optimizing the leximin objective on one of the KEP1 instances takes about ten seconds.

In our experiments, we consider three randomized algorithms to find packings of cycles and paths. The first, which we call
{\sf Implicit}, 
is randomized only due to the nondeterminism inherent in integer linear programming (ILP) solvers. {\sf Heuristic} has an additional step of randomizing the order of variables and constraints in the ILP formulation. The {\sf Leximin} algorithm draws a packing from the lottery that maximizes the leximin objective.
The ILP formulation used for {\sf Implicit} and {\sf Heuristic} as well as for column generation in {\sf Leximin} is due to \cite{Anderson_Ashlagi_Gamarnik_Roth_2015}. This formulation has been used by multiple kidney exchange programs, and can be modified to obtain packings that fall into one of many different classes (differentiated by allowed substructures and other properties).

\begin{figure}[htbp!]
\centering
\begin{minipage}{0.45\textwidth}
        \centering
        \includegraphics[width=\textwidth]{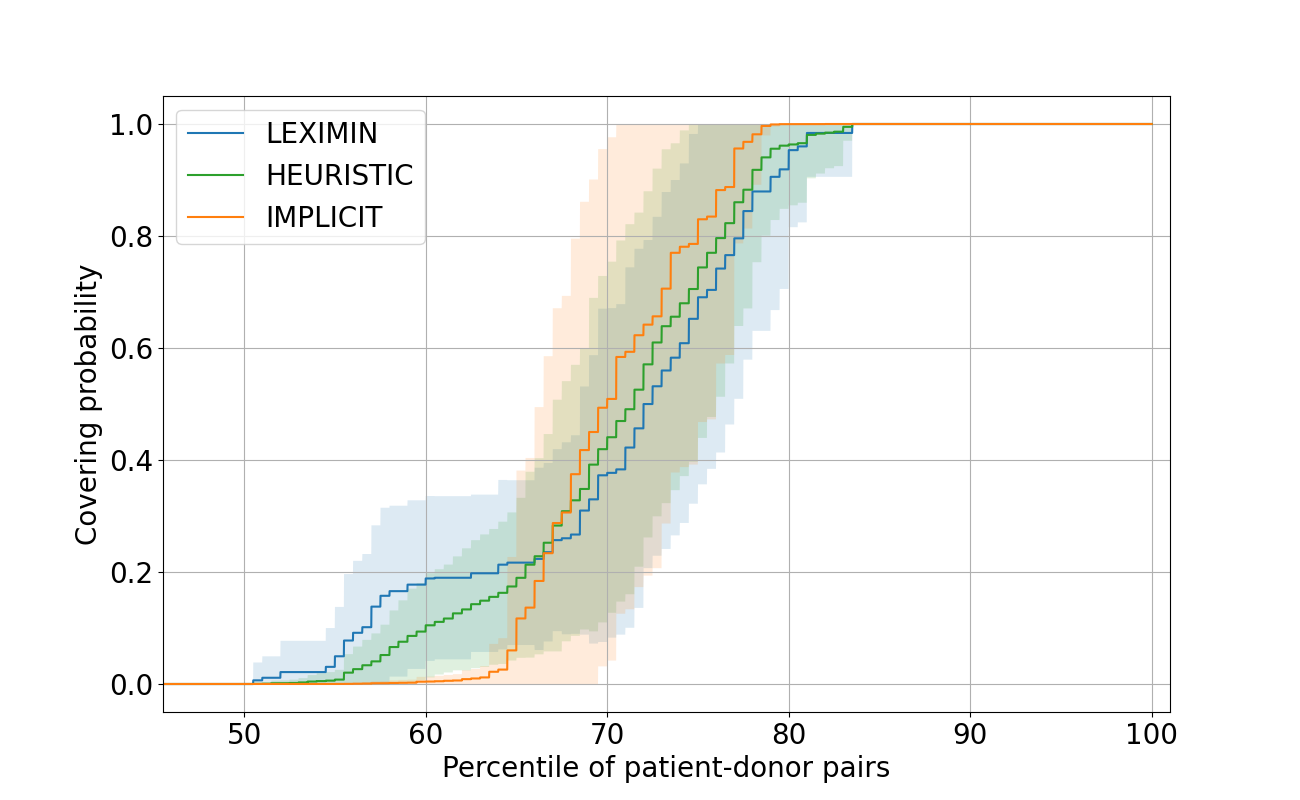}
    \end{minipage}
    \begin{minipage}{0.45\textwidth}
        \centering
        \includegraphics[width=\textwidth]{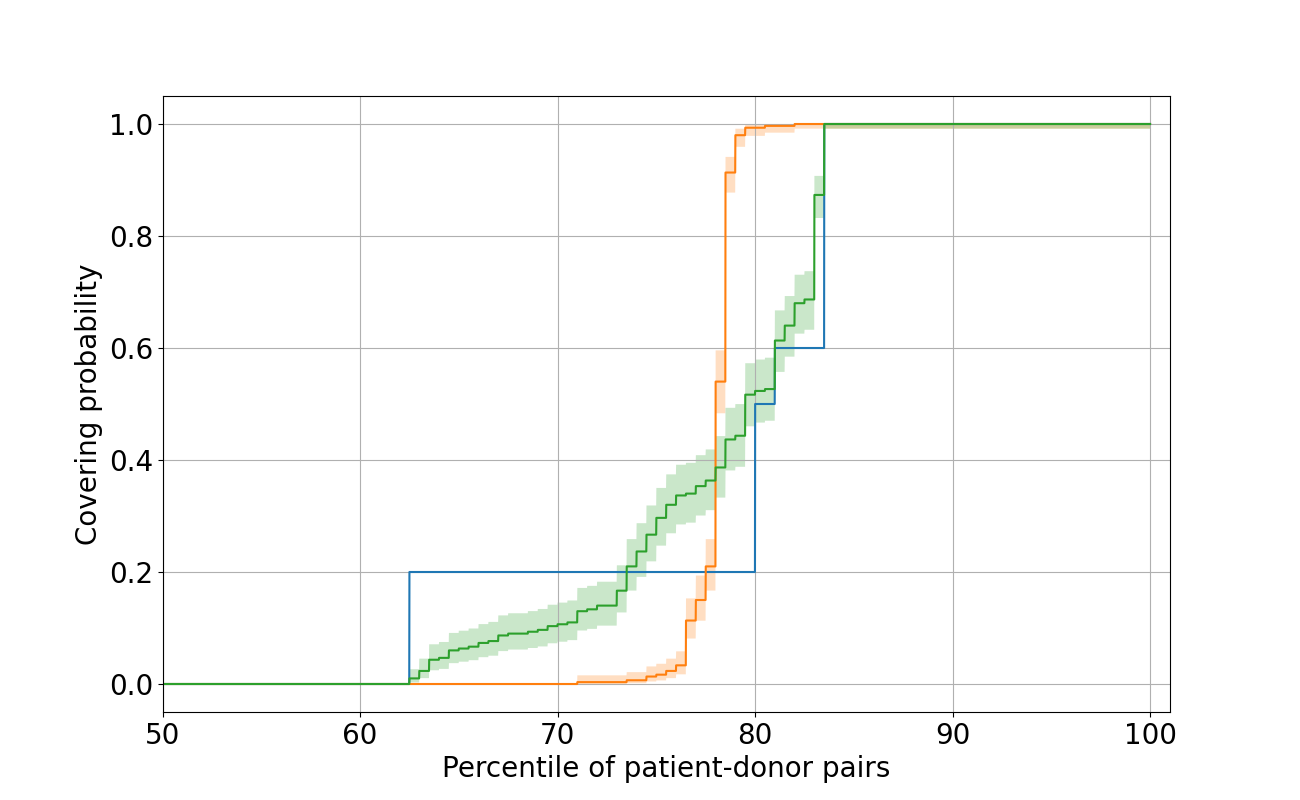}
    \end{minipage}
    \caption{Patient-donor pair inclusion probabilities for KEP1, with cycles of length at most 3 and unbounded paths. 
    The right plot shows results for one input, and shaded regions are 95\% confidence intervals. The left plot shows the averaged distribution over all 25 inputs, and shaded regions indicate one standard deviation.}
\label{fig:unbounded-paths}
\end{figure}

\subsection{Synthetic Instance Generation}

The 25 KEP2 instances were generated using APKD summary statistics from 2010 to 2019 given in \cite{Ashlagi_Roth_2021}. These include a distribution of patient-donor blood type pairings and conditional distributions of patient panel reactive antibodies (PRAs) by blood type pairing. For each patient-donor pair in the generated graph, a blood type pairing is drawn and then a patient PRA value is drawn from the conditional distribution. For NDDs, blood type is drawn from the global blood antigen distribution. Directed edges are then added based on blood type compatibility and probabilistic PRA tests, in which a random number from 0 to 100 is drawn and the test is negative if the number is greater than the prospective recipient's PRA value. These instances have 200 patient-donor pairs and 10 NDDs, the same as KEP1. However, the KEP2 instances are more densely connected, having 9213 directed edges on average compared to 2426 for KEP1. In future work, we will investigate the impact of calibrating (through a randomized selection) these inputs to the density of the inputs observed in KEP1.

\subsection{Static Pools}
\label{app:sec:static-pools}
 For each of the KEP1 instances, we compute the distribution of inclusion probabilities of patient-donor pairs for each of the three algorithms. For {\sf Leximin}, these probabilities are explicitly defined by the algorithm and are thus exact. For {\sf Implicit} and {\sf Heuristic}, they are defined implicitly and can be estimated by counting node inclusions over multiple runs.

Results can be seen in \autoref{fig:unbounded-paths}. {\sf Leximin}, due to its prioritization of the least likely pairs, provides a 1-in-5 chance to pairs that are practically guaranteed to be excluded by {\sf Implicit}. It is particularly noteworthy that the performance of {\sf Heuristic} is comparable to that of {\sf Leximin}, even though it is simpler to implement in practice. {\sf Leximin} still yields a significant redistribution of probability to those least likely to be matched, though: we see that 2.7\% of the coverage is shifted from above the crossover point to below it. Nonetheless, in \autoref{app:sec:heuristic-implementation} we further examine {\sf Heuristic} and consider the performance of the similar algorithm which shuffles node names in the KEP input (instead of shuffling ILP variable and constraints), as we'd expect this approximation to {\sf Heuristic} to be a more convenient alternative for practitioners to implement in practice. We find that there is not a significant difference between these two alternatives.

\subsection{Varying the Set of Acceptable Packings}
\label{app:subsec:packing-sets}
We document the results of running {\sf Leximin}, {\sf Heuristic}, and {\sf Implicit} on several classes of packings. Estimated coverage probabilities are computed over 300 runs. Reported confidence intervals are Jeffreys intervals.

\begin{figure}[htbp!]
\centering
\begin{minipage}{0.45\textwidth}
        \centering
        \includegraphics[width=0.9\textwidth]{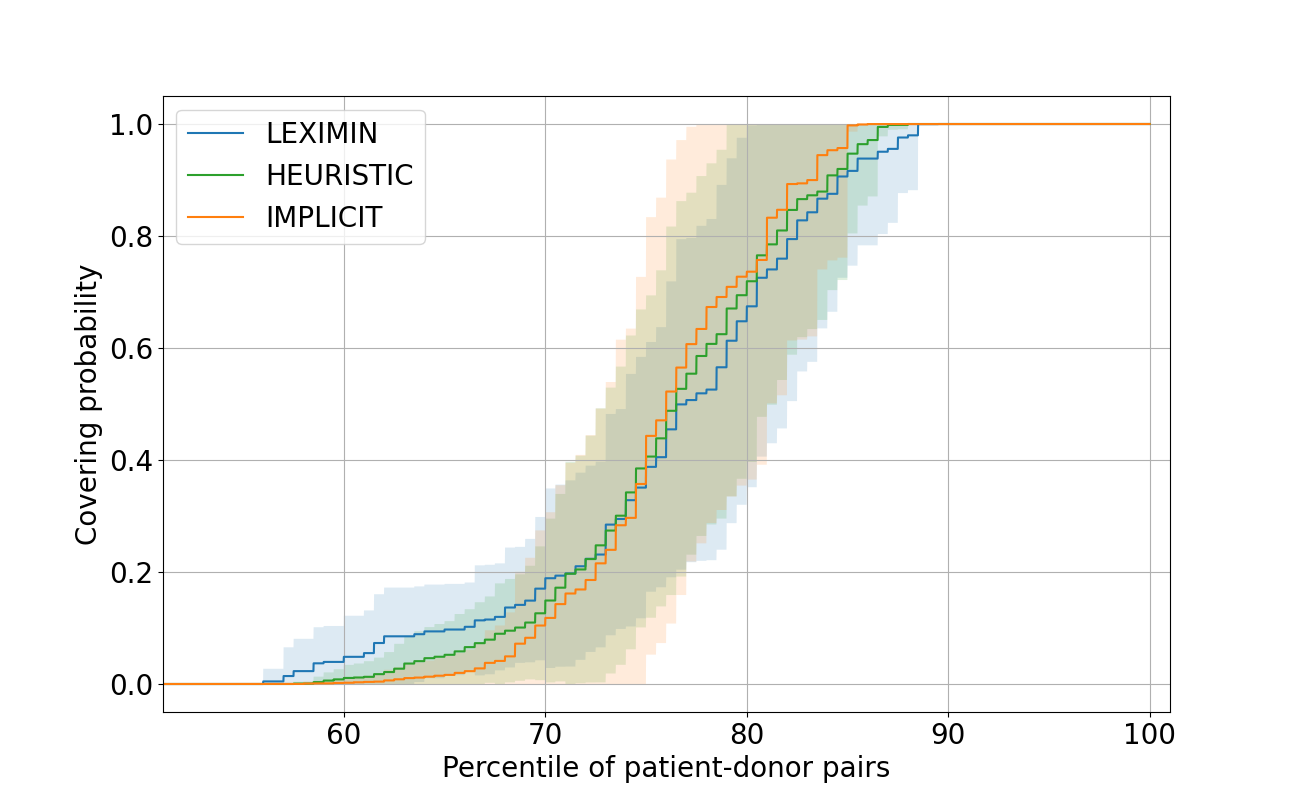}
    \end{minipage}
    \begin{minipage}{0.45\textwidth}
        \centering
        \includegraphics[width=0.9\textwidth]{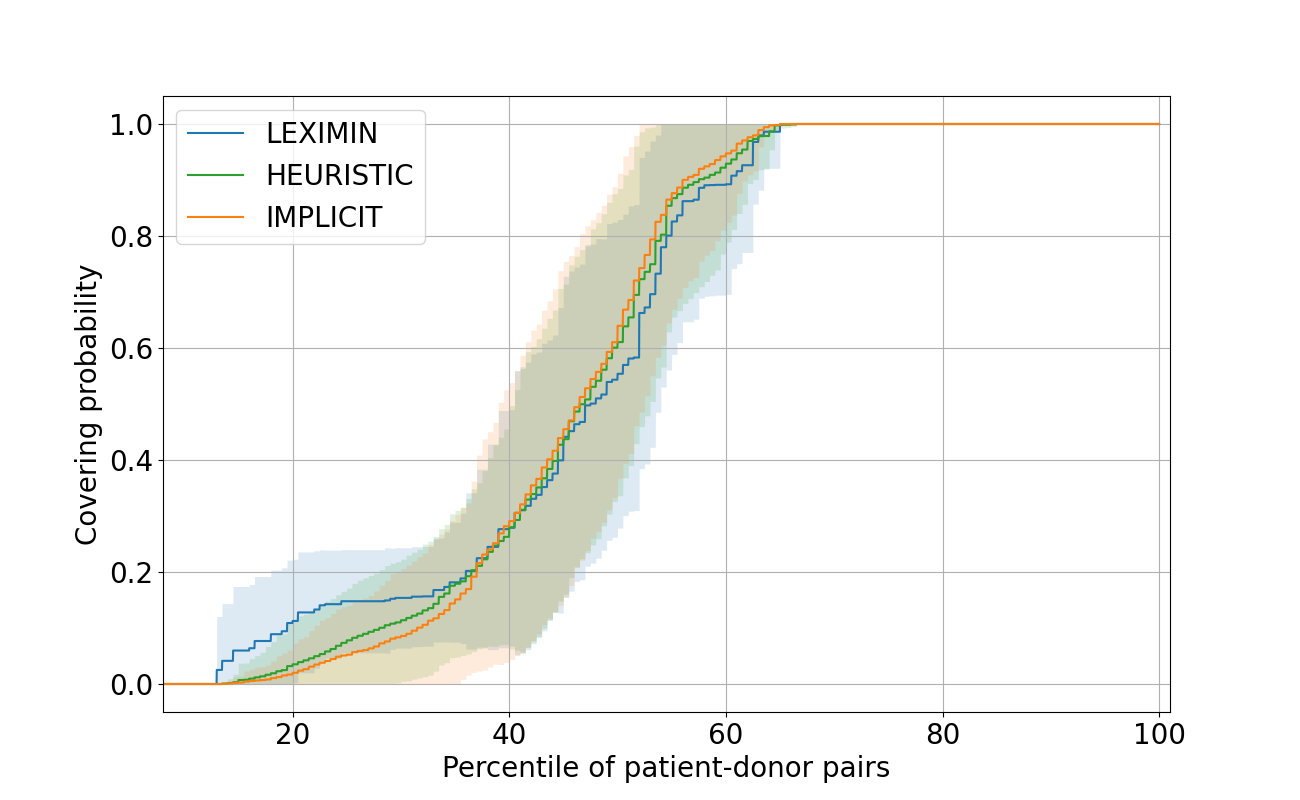}
    \end{minipage}
    \caption{Inclusion probabilities for maximum-cardinality 2- and 3-cycle packings. The plot on the left shows results for the KEP1 instances, with KEP2 results on the right.}
\label{fig:two-three}
\end{figure}

We first investigate the packing class of maximum-cardinality 2- and 3-cycle packings.
(Note that throughout, we will use the term ``maximum cardinality'' to refer to those packings for which the set of nodes included are of maximum cardinality, rather than the number of objects in the packing are of maximum cardinality.) The greater coverage seen in the KEP2 instances (\autoref{fig:two-three}) is a reflection of the stark difference in density between the two sets of instances, although there are still many nodes that are not covered at all in both cases. We can compute the number of nodes that are included in every packing by iteratively finding packings that minimize the number of nodes that have been included in every packing seen so far, using the same subroutine as used for column generation. These values for KEP1 are shown in \autoref{fig:two-three-hists}, which documents for each metric value, the number of the 25 inputs for which that value is obtained. 
Observe that with 3-cycles introduced, {\sf Leximin} may assign a probability of 1 to a node even if there exists a packing that does not include that node.

\begin{figure}[htbp!]
\centering
\begin{minipage}{0.3\textwidth}
        \centering
        \includegraphics[width=0.9\textwidth]{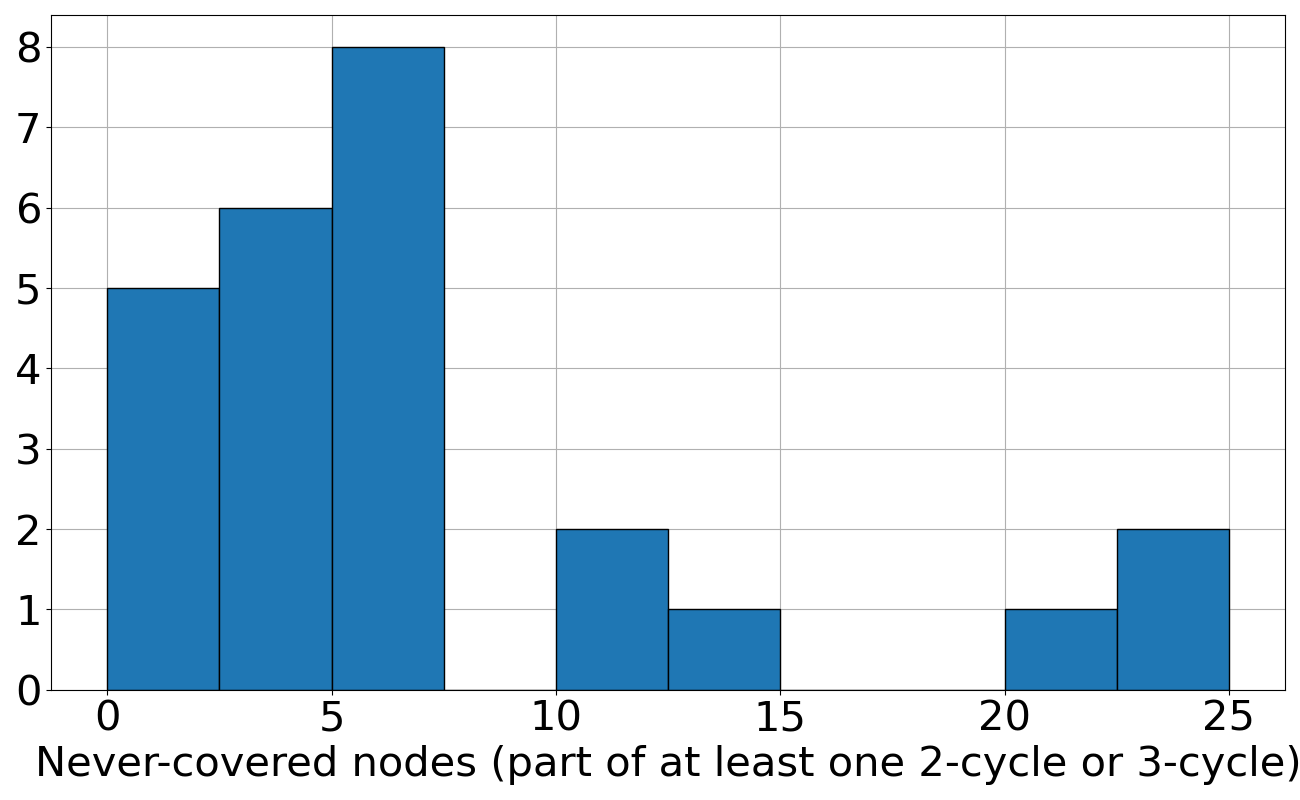}
    \end{minipage}
    \begin{minipage}{0.3\textwidth}
        \centering
        \includegraphics[width=0.9\textwidth]{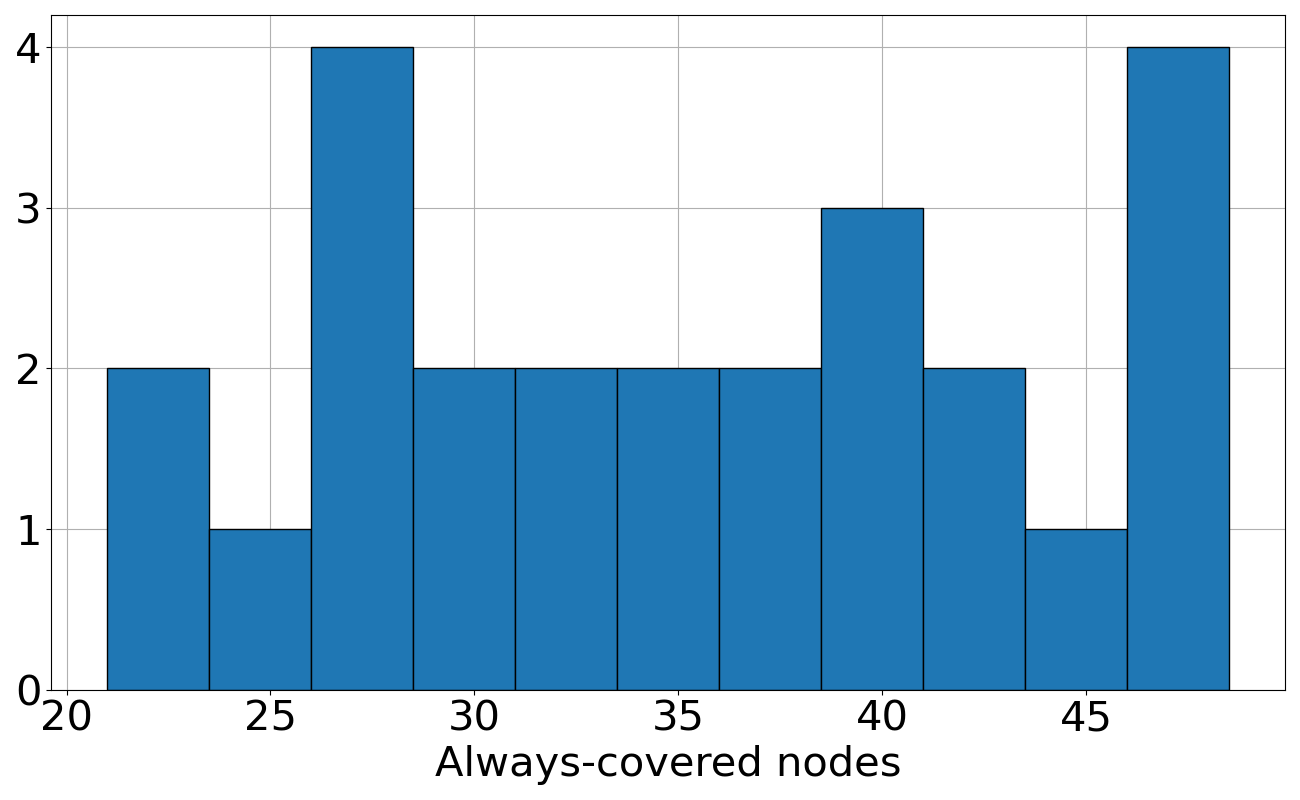}
    \end{minipage}
    \begin{minipage}{0.3\textwidth}
        \centering
        \includegraphics[width=0.9\textwidth]{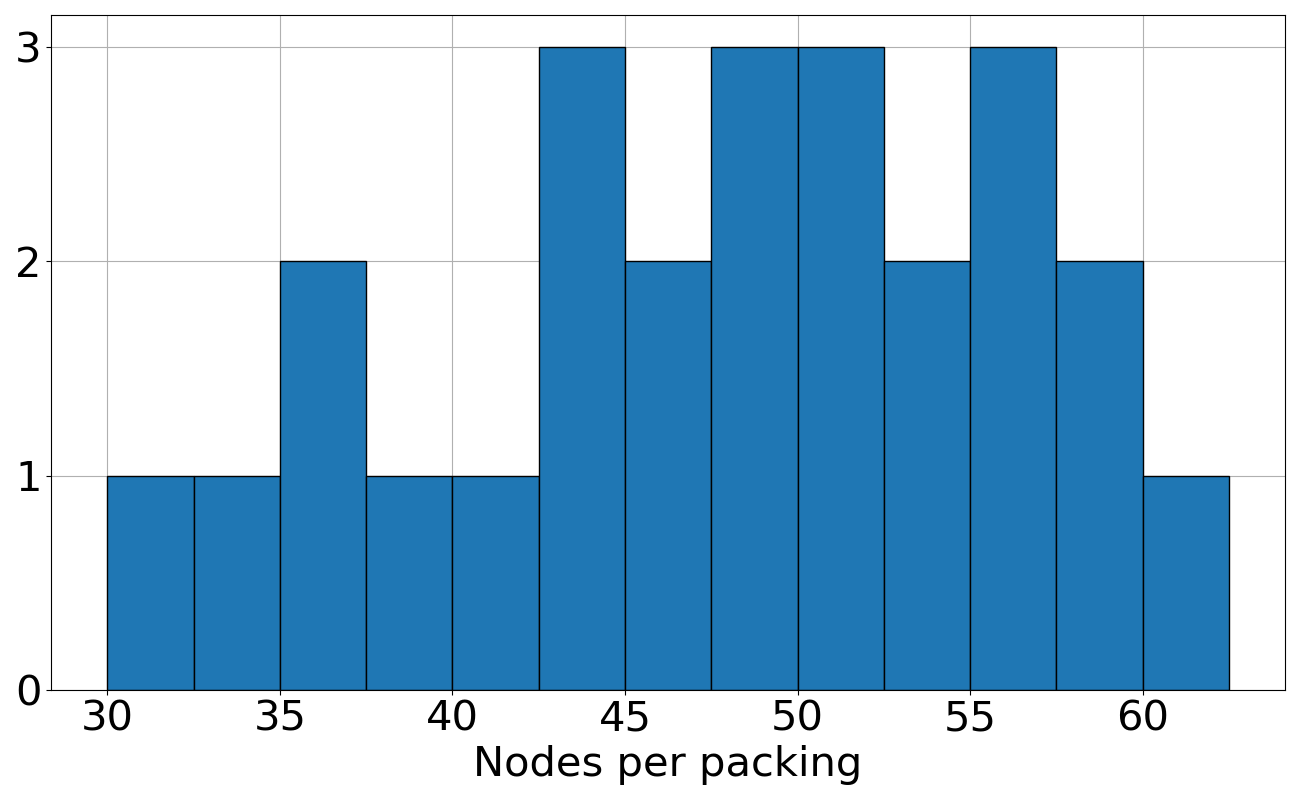}
    \end{minipage}
    \caption{Histograms of three coverage-related metrics over the KEP1 instances.}
\label{fig:two-three-hists}
\end{figure}
In addition to the aggregated results presented in KEP1 results in \autoref{fig:two-three}, it is interesting to see the extent of variation among the 25 inputs. These are given in \autoref{fig:each-instance}, where again for {\sf Implicit} and {\sf Heuristic}, we report on the confidence intervals obtained from estimating the probability of inclusion by repeated trials of that randomized algorithm (in contrast to {\sf Leximin}, where they are exact values determined by the optimal lottery computation). 
\begin{figure}[htbp!]
\centering
\includegraphics[width=6in]{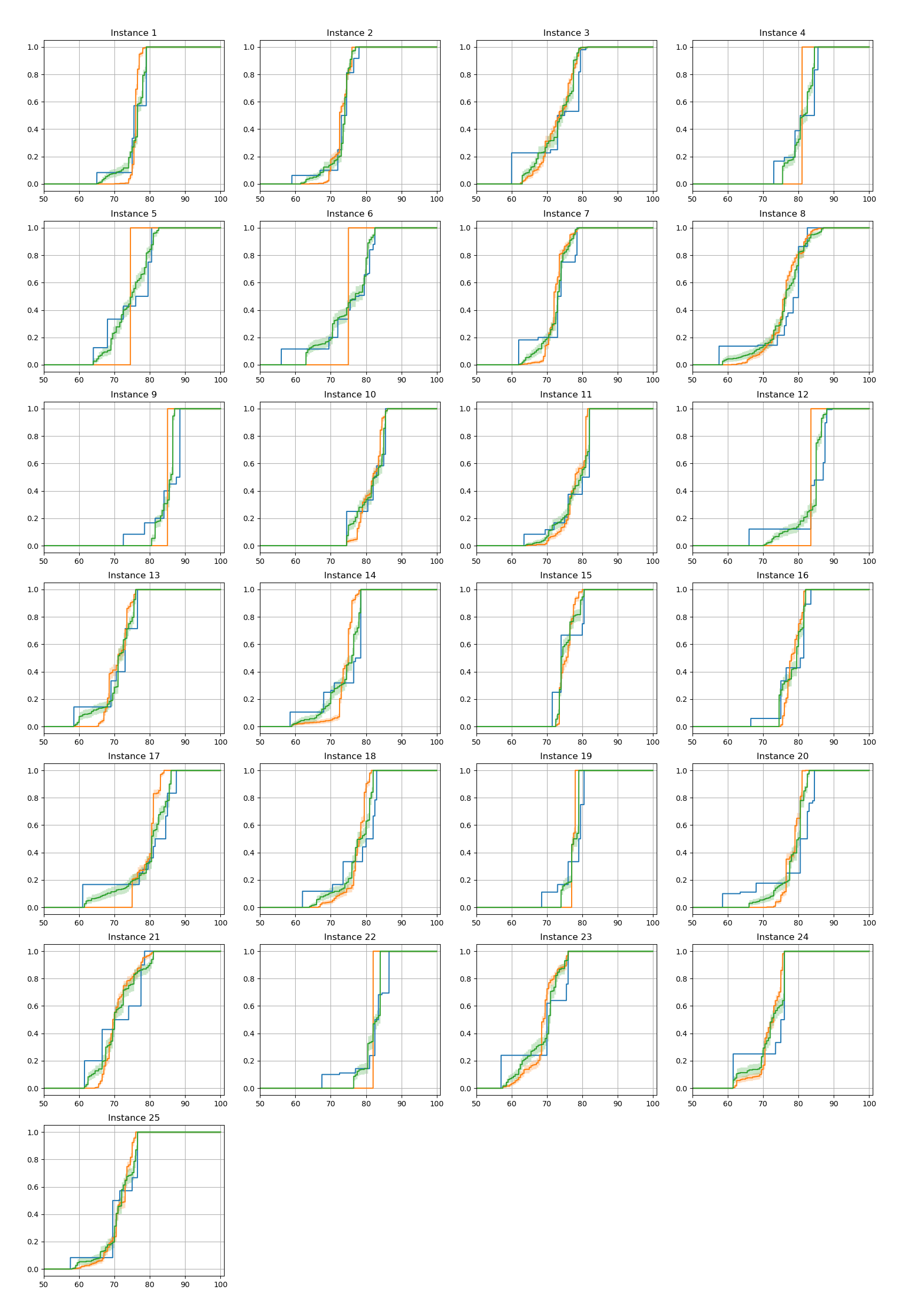}
\caption{Inclusion probabilities of {\sf Leximin}, {\sf Heuristic}, and {\sf Implicit} (using the same color scheme as before) for maximum-cardinality 2- and 3-cycle packings on all 25 KEP1 instances.}
\label{fig:each-instance}
\end{figure}

The second packing class we consider has the same allowed structures and nodes-per-packing constraint as the first, but minimizes the number of 3-cycles (motivated by the greater logistical difficulty of 3 simultaneous transplants rather than 2). Results are shown in \autoref{fig:rel-card-and-min-3-cycs}.

From \autoref{prop:coverage-loss} we have that all nodes in at least one 2-cycle or 3-cycle can be covered if we relax the nodes-per-packing constraint so that packings of size at most 3 less than optimal are allowed. However, for a particular instance, it may be possible to cover all nodes with a smaller reduction in packing size. Therefore, we define our third packing class to be 2-cycle and 3-cycle packings of cardinality at most $\delta^*$ less than optimal, where $\delta^*$ is the minimum such value required to cover all nodes in at least one 2-cycle or 3-cycle with non-zero probability. The value $\delta^*$ can be obtained by iteratively incrementing the cardinality constraint and optimizing the maximin objective until it has a nonzero value. In comparing the results in the left plots of \autoref{fig:two-three} and \autoref{fig:rel-card-and-min-3-cycs}, it is striking to observe the extent to which minimizing the number of 3-cycles in the allowed packings has only a minimal effect on {\sf Leximin}, with perhaps a somewhat greater effect on the {\sf Heuristic} algorithm. Furthermore, the center plot in \autoref{fig:rel-card-and-min-3-cycs} shows that significantly fairer lotteries result by relaxing the maximum-cardinality constraint; one key reason for this is evident from the right plot in that figure, since in essentially half of the instances, it suffices to use packings that omit at most one additional node.

\begin{figure}[htbp!]
\centering
\begin{minipage}{0.3\textwidth}
        \centering
        \includegraphics[width=0.9\textwidth]{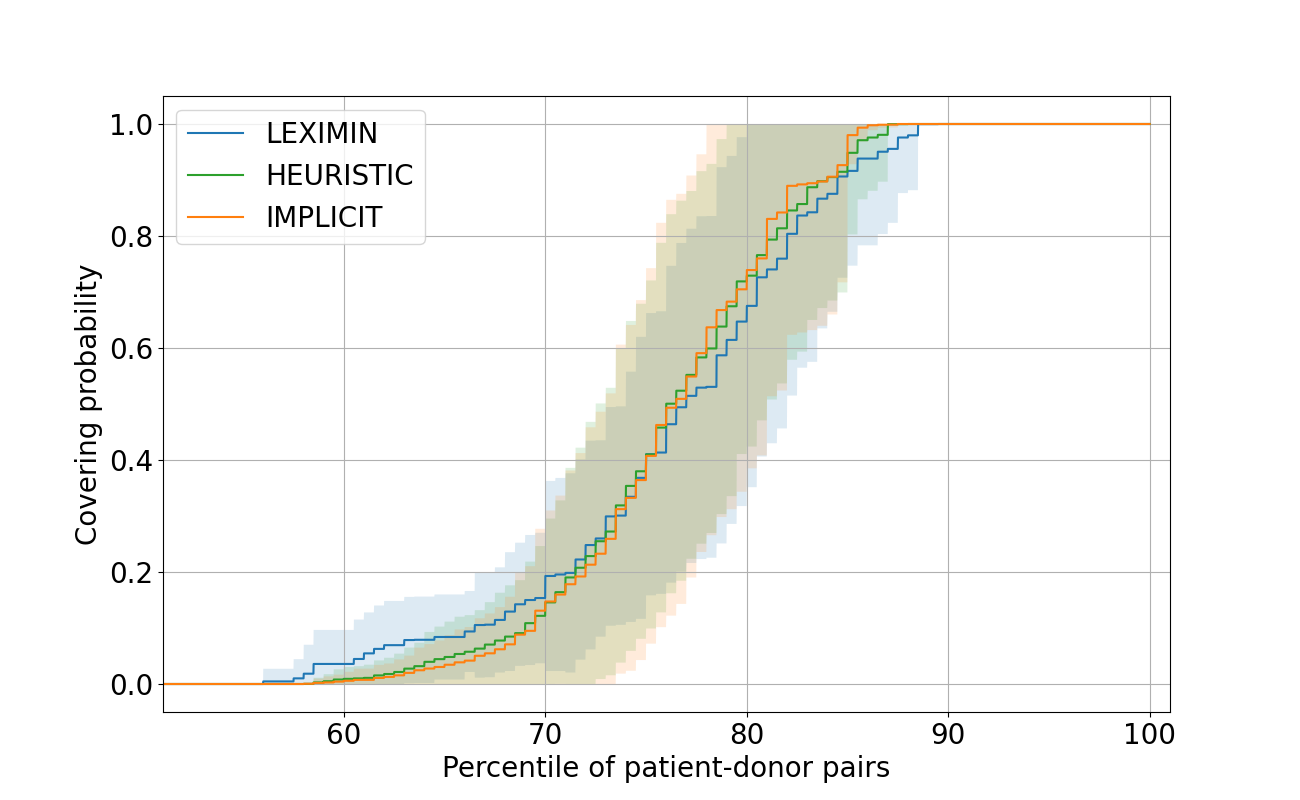}
    \end{minipage}
    \begin{minipage}{0.3\textwidth}
        \centering
        \includegraphics[width=0.9\textwidth]{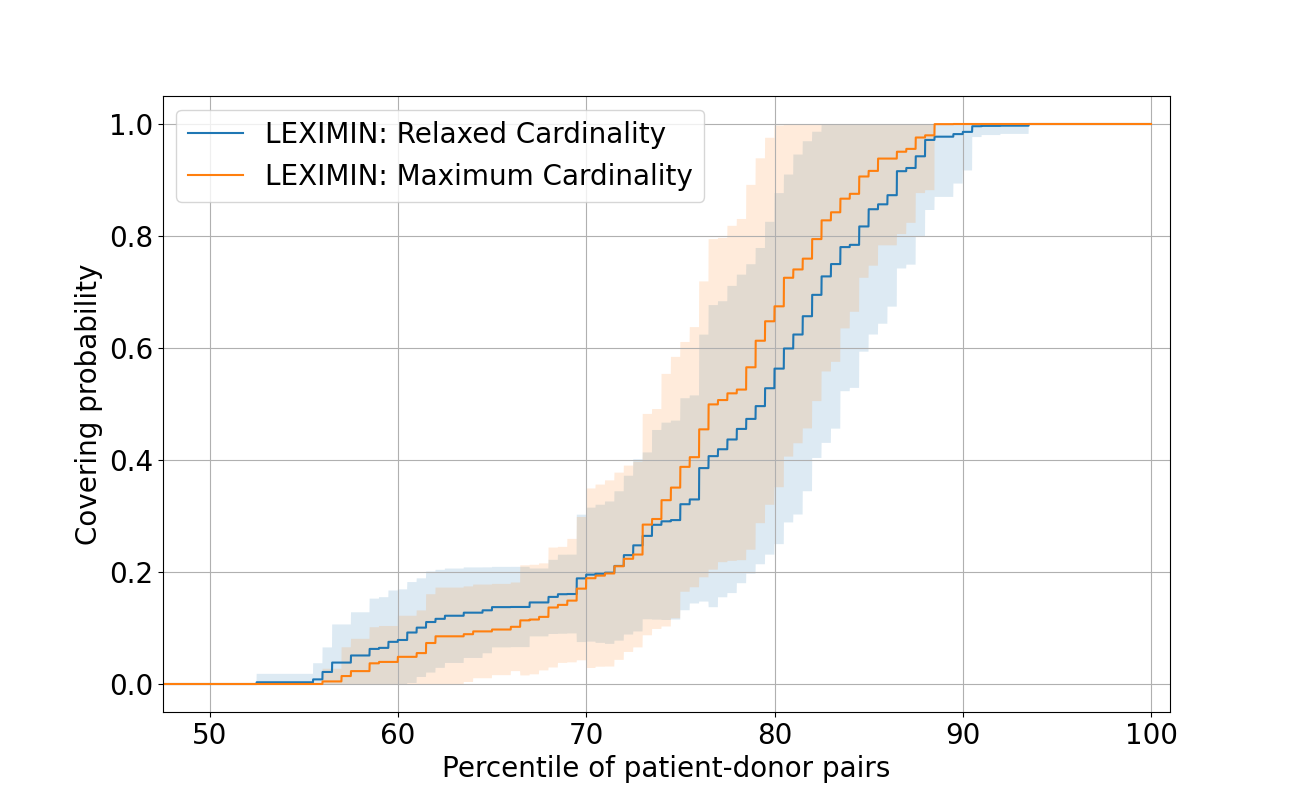}
    \end{minipage}
    \begin{minipage}{0.3\textwidth}
        \centering
        \includegraphics[width=0.8\textwidth]{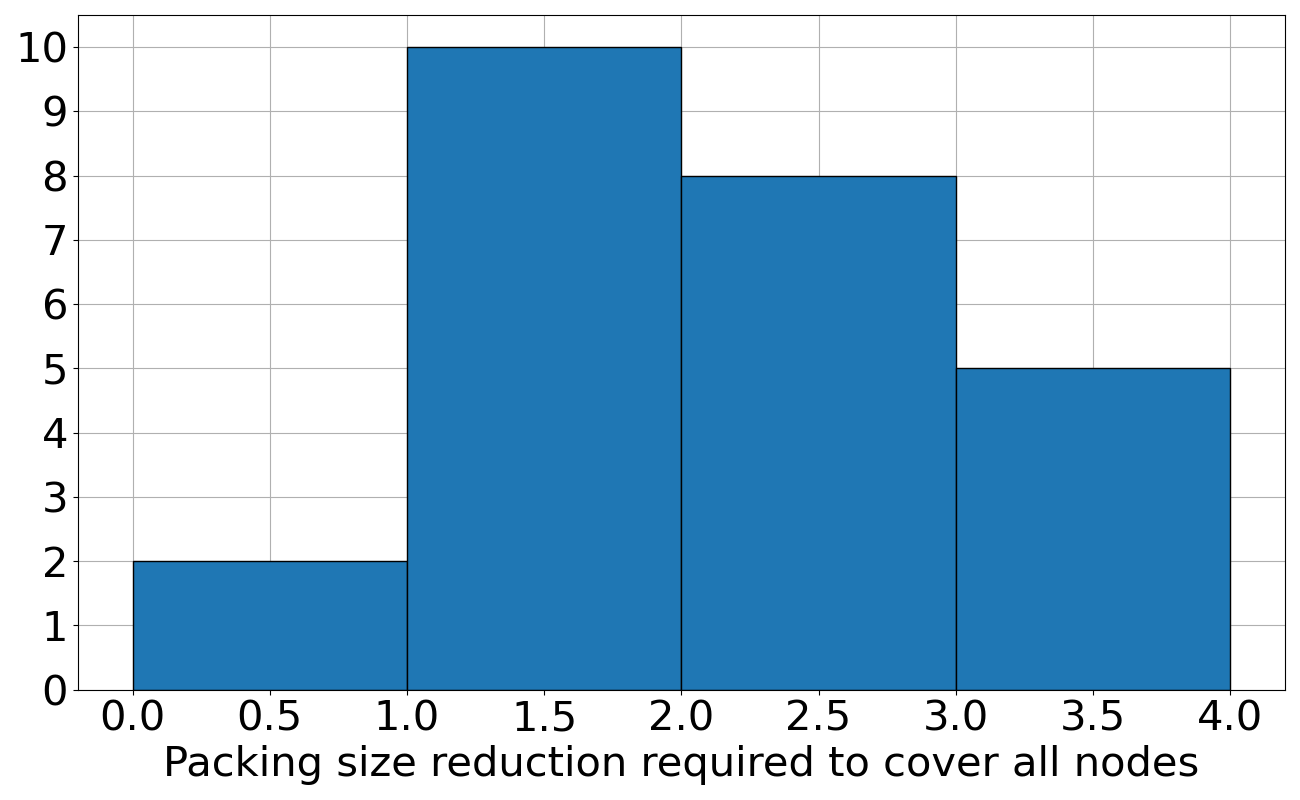}
    \end{minipage}
    \caption{(Left) Randomized algorithm results for maximum-cardinality 2-cycle and 3-cycle packings using the minimum number of 3-cycles. (Center) {\sf Leximin} inclusion probabilities for 2-cycle and 3-cycle packings of size at most $\delta^*$ less than optimal, with maximum-cardinality results included for comparison. (Right) Histogram of $\delta^*$ over all instances. All results are for KEP1.}
\label{fig:rel-card-and-min-3-cycs}
\end{figure}

Next, we turn our attention to maximum-cardinality matchings and maximum-weight matching. In this setting, the comparative performance of {\sf Leximin}, {\sf Heuristic}, and {\sf Implicit} as shown in \autoref{fig:matchings_three_ways}, is consistent with the results detailed above where 3-cycles are allowed. One point of note is that we observe little difference when comparing {\sf Leximin} on maximum-weight matchings, and on maximum-weight matchings of maximum cardinality. Although for our experiments, we have implemented these with the same ILP-based infrastructure, we could have instead used purely combinatorial algorithms. 

The comparison between the three principal packing structures is also striking: for maximum-cardinality matching, the mean number of covered nodes is 35.12, whereas for maximum-cardinality 2-cycle/3-cycle packings, the mean number of covered nodes is 47.48.

\begin{figure}[htbp!]
\centering
\begin{minipage}{0.3\textwidth}
        \centering
        \includegraphics[width=0.9\textwidth]{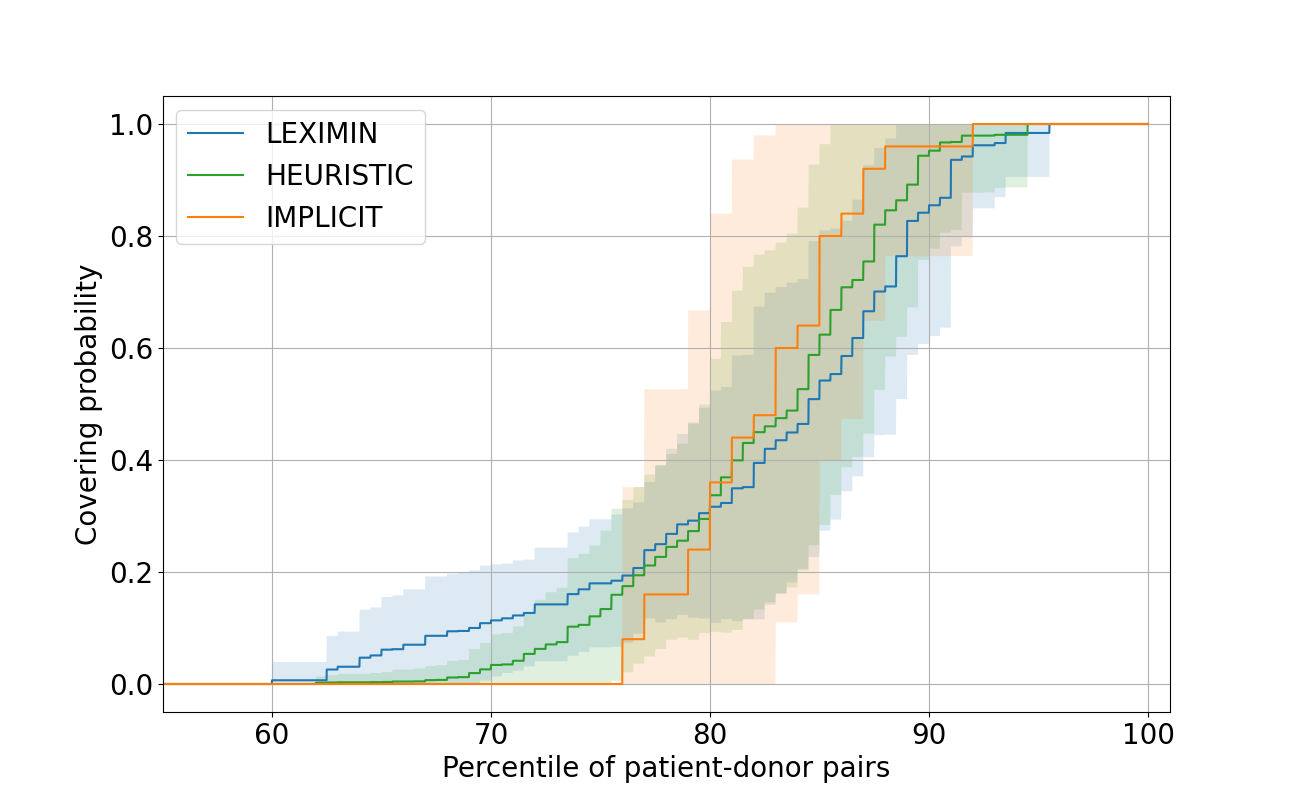}
    \end{minipage}
    \begin{minipage}{0.3\textwidth}
        \centering
        \includegraphics[width=0.9\textwidth]{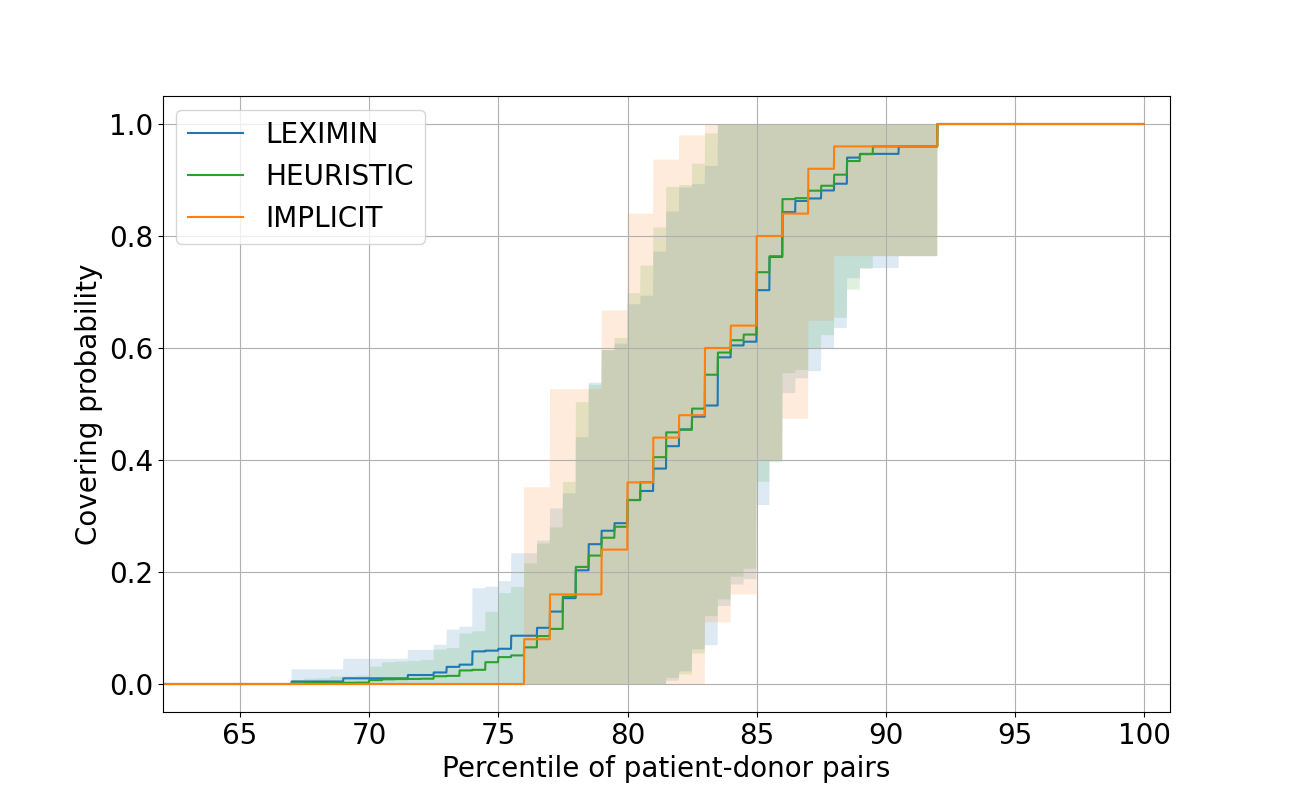}
    \end{minipage}
    \begin{minipage}{0.3\textwidth}
        \centering
        \includegraphics[width=0.9\textwidth]{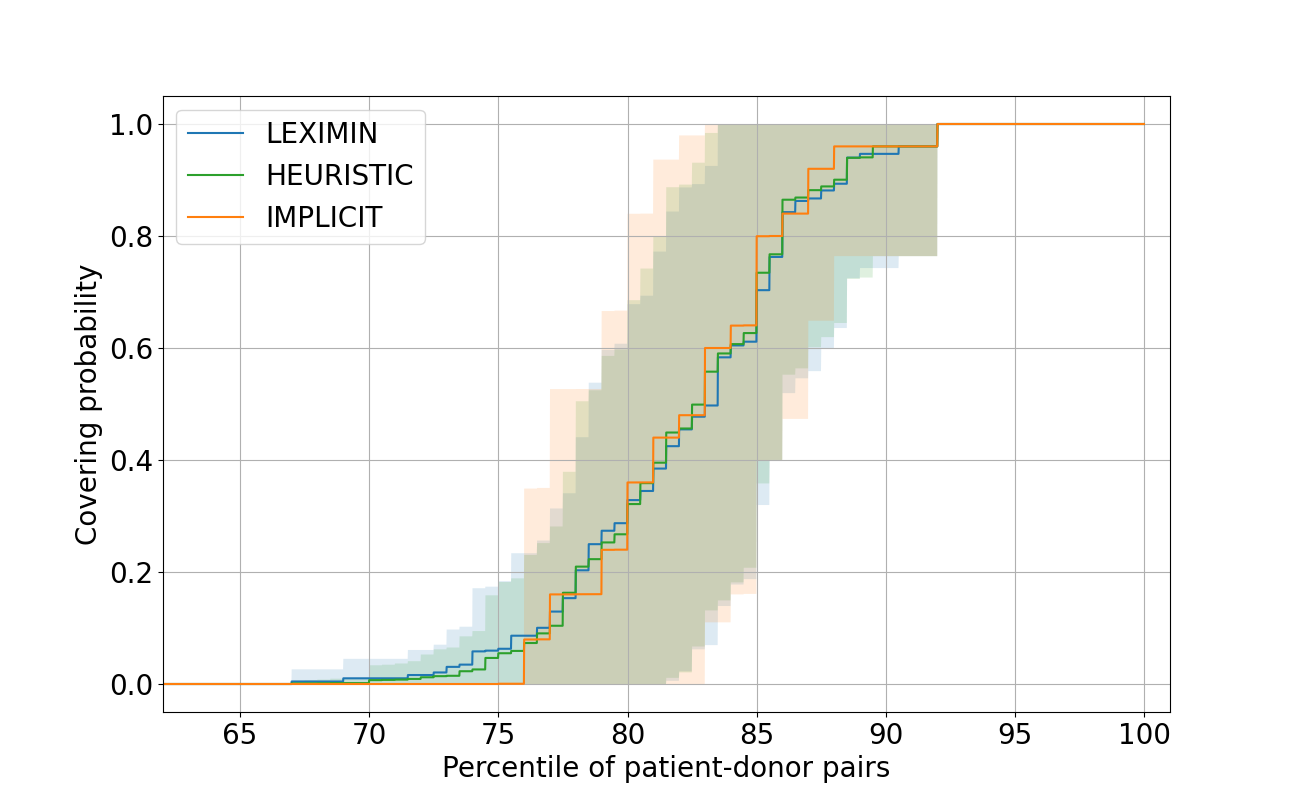}
    \end{minipage}
    \caption{Inclusion probabilities for maximum-cardinality matchings (left), maximum-weight matchings (center), and maximum-weight matchings of maximum-cardinality (right).}
\label{fig:matchings_three_ways}
\end{figure}

Kidney exchange programs may wish to bound the length of paths in their solutions because they are implemented with non-simultaneous transplants and thus provide the opportunity for patient-donor pairs to renege. Therefore, we also considered packings of 2- and 3-cycles, and paths of length at most 10 (measured by number of edges). Optimal {\sf Leximin} packing distributions in the setting with unbounded paths include packings with paths with more than 10 edges in all KEP1 instances, so this bound is nontrivial. We still find that the optimal node covering distributions with bounded chains are practically indistinguishable from those shown in \autoref{fig:unbounded-paths}, however we do find that the average path length decreases from 4.2 to 3.6 when the bound is imposed.

\subsection{Dynamic Pools}
\label{sec:dynamic-pools}
We employ a simple model of a dynamic kidney exchange pool in which pairs and NDDs arrive in batches every 30 days, and we take these batches to be the 25 KEP1 instances. Each time a batch arrives, a packing is computed and the matched pairs and NDDs leave the pool. 
30 random
orderings of batch arrivals simulate the pool evolution, and for each ordering, each randomized algorithm is run 30 times. 

Kidney exchange programs assign weights to edges in the compatibility graph based on biological properties of patients and donors as well as the amount of time the prospective recipient has already spent in the system in an effort to reduce waiting times. In our dynamic pool experiments, we adopt one such weighting function \citep{OPTN_policies} when the class of packings is matchings (i.e., 2-cycle covers) of maximum weight.

Rather than having 25 independent inputs, we view these inputs as a sequence of 25 (30-day) months of arrivals; we generate 30 random sample orderings of these 25 months. It is important to note that in each of these executions, {\sf Heuristic} in effect ``re-randomizes'' the node identities of the unmatched pairs among newly arriving pairs each month, which contributes to the performance of this approach.

In comparing the three algorithms when simulating maximum-weight matchings (with respect to the OPTN weighting function as is commonly used in practice), we obtain the  results detailed in \autoref{tab:kep1-dynamic-matching} when using the KEP1 instances.

\begin{table}[htbp!]
\centering
\begin{tabular}{|c|c|c|c|c|c|}
\hline
Algorithm & num matched &  median  &  90th percentile  &  max  & mean \\
\hline
 {\sf Implicit}         &    1243.5    &  1.0  & 11.10 & 23.53  & 3.71 \\
{\sf Heuristic}        &     1243.5    &  1.0  & 10.43 & 23.08  & 3.53 \\
{\sf Leximin}          &    1243.4    &  1.0  & 10.43 & 23.15  & 3.52 \\
\hline
\end{tabular}
\caption{Summary statistics for waiting times in KEP1 instances for maximum-weight matchings. All values are averaged over 30 orderings. The ``num matched'' column contains the total number of matched nodes, and the other columns describe the distribution of waiting times for matched nodes (in units of 30-day matching cycles)}
\label{tab:kep1-dynamic-matching}
\end{table}
 
Furthermore, in \autoref{fig:waiting-distribution-matching}, we see the full distributional variation for {\sf Leximin}.
\begin{figure}[htbp!]
\centering
\includegraphics[width=0.9\textwidth]{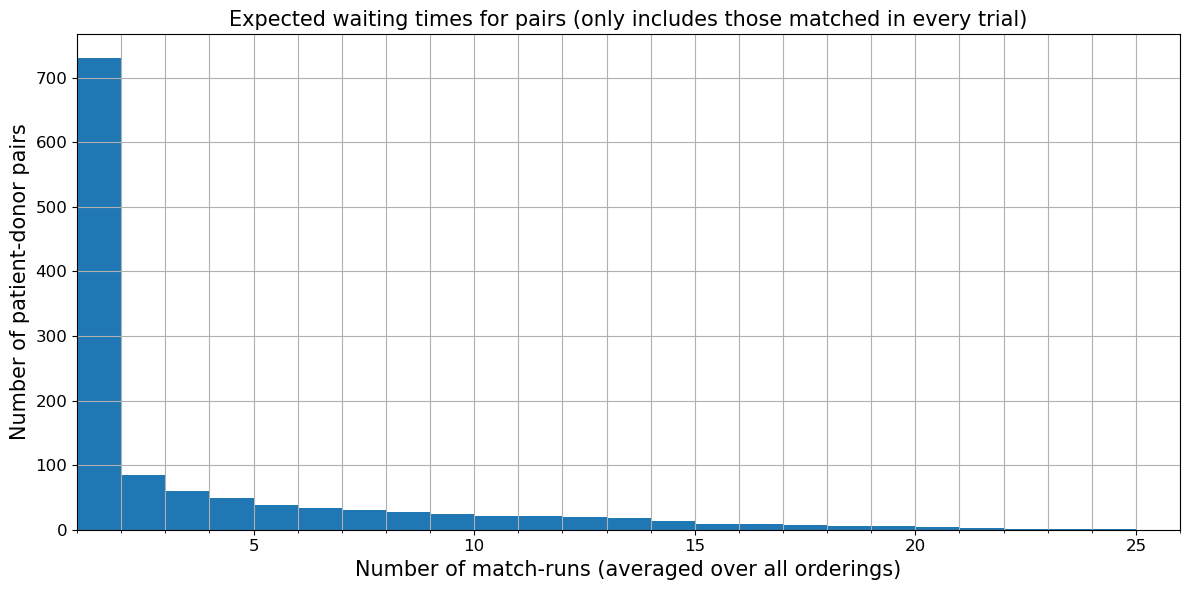}
\caption{Histogram of waiting times (in 30-day months) for weighted matchings for {\sf Leximin}. All results are for KEP1.}
\label{fig:waiting-distribution-matching}
\end{figure}

Preliminary experiments on synthetic data KEP2 (using 5 orderings of node arrivals as opposed to 30) show similar results, though due to the (unrealistic) density of those inputs, there is less significance to the differences; for example, 
waiting times increased for the average pair (92.7 days for {\sf Implicit} compared to 92.1 days for {\sf Leximin}, and for the 90th percentile wait (265.8 days for {\sf Implicit} compared to 264.3 days for {\sf Leximin}).

Finally, we also provide the analogue of these results for when the packings also allow 3-cycles in \autoref{tab:kep1-dynamic-23cycle}. Not surprisingly, this leads to substantially shorter waiting times.
\begin{table}[htbp!]
\centering
\begin{tabular}{|c|c|c|c|c|c|}
\hline
Algorithm & num matched  &  median  &  90th percentile  &  max  & mean \\
\hline
{\sf Implicit}     &    1531.3    &  1.0  & 8.03  &  23.63  & 2.94 \\ 
{\sf Heuristic} &    1531.4    & 1.0  & 7.45  &  22.99  & 2.80   \\  
{\sf Leximin}   &    1531.5    &  1.0  & 7.46  &  22.88  & 2.80  \\ 
\hline
\end{tabular}
\caption{Summary statistics for waiting times in KEP1 instances for maximum-weight 2- and 3-cycle packings.}
\label{tab:kep1-dynamic-23cycle}
\end{table}

\subsection{Implementation of {\sf Heuristic}}
\label{app:sec:heuristic-implementation}
In \autoref{sec:dynamic-pools}, we interpret {\sf {Heuristic}} as randomizing node identities as an explanation for its performance in the dynamic setting. In this section, we compare our implementation {\sf Heuristic} to one that directly enacts this interpretation by randomizing node labels before the ILP is even formulated, rather than randomizing the ILP directly. Here we provide some evidence that our results on the effectiveness of {\sf Heuristic} as a randomization mechanism still hold when implemented using this method.

\begin{figure}[htbp!]
\centering
\begin{minipage}{0.45\textwidth}
    \centering
    \includegraphics[width=0.9\textwidth]{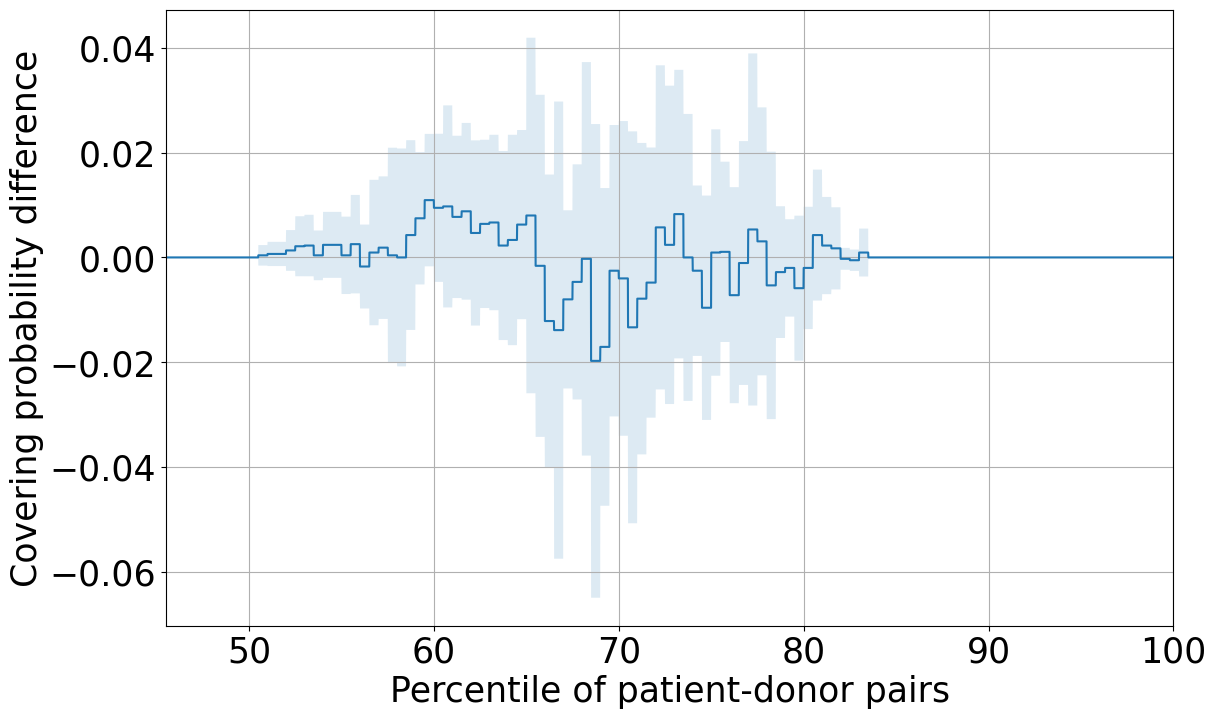}
\end{minipage}
\begin{minipage}{0.45\textwidth}
    \centering
    \includegraphics[width=0.9\textwidth]{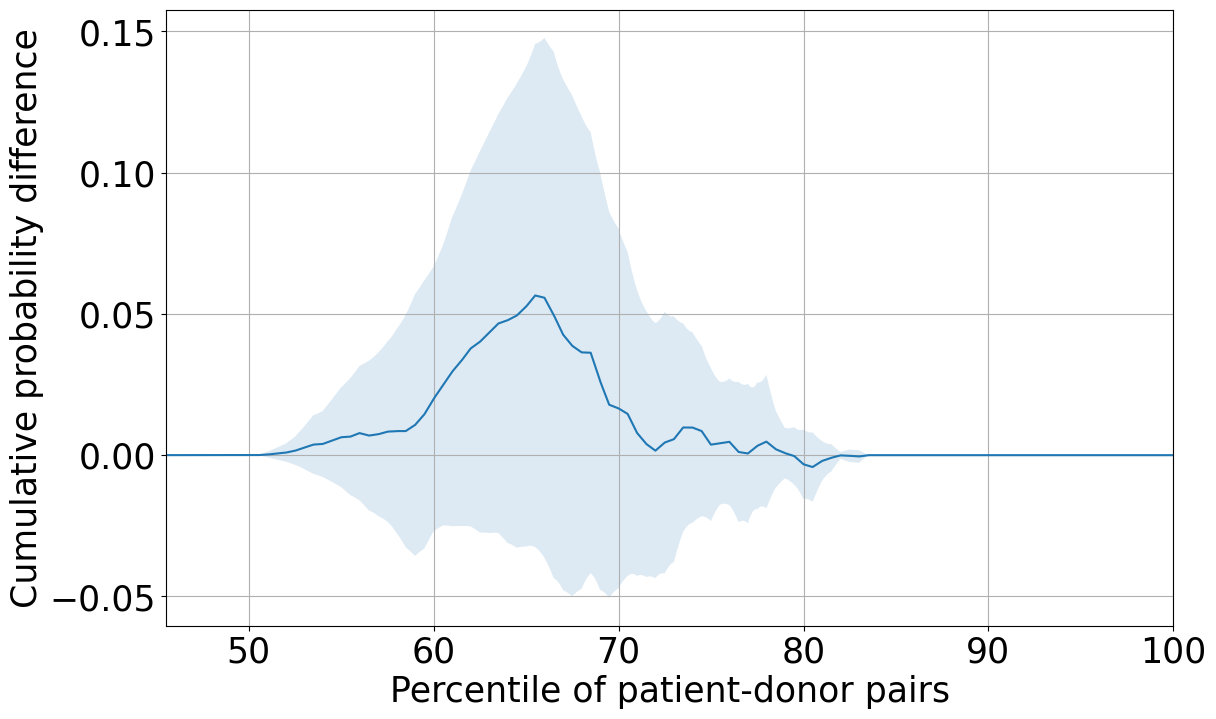}
\end{minipage}
\caption{The left plot shows average difference of node covering probability distributions between {\sf Heuristic} implemented by directly randomizing the ILP and by randomizing node identifiers. A positive number indicates a higher probability when using the former method. Results are for KEP1 using maximum-cardinality packings of 2- and 3-cycles and unbounded paths. The right plot shows the integral. Shaded region is one standard deviation.}
\label{fig:heuristics-diff}
\end{figure}

As seen in the left plot of \autoref{fig:heuristics-diff}, on average, the difference in covering probability between two nodes of the same likelihood rank between the two methods never exceeds 2\%, showing that the lotteries produced by the two implementations of  {\sf Heuristic} are very similar in terms of fairness.

Nevertheless, the right plot in \autoref{fig:heuristics-diff} shows that direct ILP randomization is, on average, the implementation which results in fairer distributions: the fact that the curve is positive for the majority of percentiles indicates that direct ILP randomization allocates more probability to low-probability nodes. This underscores the result that {\sf Leximin} significantly outperforms {\sf Heuristic} in experiments throughout this paper (Figures \ref{fig:unbounded-paths}, \ref{fig:two-three}, \ref{fig:each-instance}, \ref{fig:rel-card-and-min-3-cycs}, and \ref{fig:matchings_three_ways}), and emphasizes the benefits of explicitly optimizing for fairness.

In both the static and dynamic settings, our experiments demonstrate that without explicit randomization, the distribution of packings from which one draws solutions to a KEP can be highly unfair. By optimizing for some fairness objective, we can significantly improve the probability with which hard-to-match pairs are included in exchanges. Furthermore, even introducing randomness by shuffling variables and constraints when finding exchanges may be sufficient to achieve marked increases in waiting times in a dynamic setting. This result also demonstrates the relevance of using the fairness-optimized algorithm as a benchmark when evaluating the heuristic’s performance.

\section*{Acknowledgements}
We would like to express our sincere gratitude to Itai Ashlagi for facilitating the computational experiments done in the paper.

\bibliographystyle{plainnat}
\bibliography{matching}

\end{document}